\documentclass[article]{IEEEtran}
%\documentclass[letter, 10pt, conference]{ieeeconf}      % Use this line for a4 paper
%\IEEEoverridecommandlockouts
%\overrideIEEEmargins
\usepackage{cite}
\usepackage{color}
\usepackage[pdftex]{graphicx}
\usepackage{graphicx}
\graphicspath{{fig/}{jpeg/}}
\usepackage[cmex10]{amsmath}
\usepackage{amssymb}
\usepackage{algorithm}
\usepackage{algpseudocode}

\usepackage{amsmath,amssymb,lipsum}
\usepackage{hyperref}
\usepackage{comment}
\usepackage{graphicx}
\usepackage[font=small]{caption}
\usepackage{subcaption}
\usepackage{array}
\input{mysymbol.sty}
\usepackage{needspace}

% \nbsubsubsection{} provides a numbered subsection in bold without a line break. The section will contain at least three lines of text before a pagebreak

% \myparagraph provides a paragraph title in italics. 

% \myparagraphtc provides a paragraph title in italics. It adds an enter to the table of contents

\usepackage{amsthm,bm}
\usepackage{tikz}
\usetikzlibrary{shapes,arrows}
\usepackage{pgfplots}
\usepgfplotslibrary{groupplots}
\usetikzlibrary{calc}
\usepgfplotslibrary{statistics}
\usepackage{pgfplotstable}
\pgfplotsset{compat=1.8}

\newtheorem{proposition}{Proposition}

\newtheorem{theorem}{Theorem}
\newtheorem{definition}{Definition}
\newtheorem{lemma}{Lemma}
\newtheorem{corollary}{Corollary}

\theoremstyle{definition}
\newtheorem{assumption}{Assumption}

%% 
%\addtolength{\textwidth}     {10mm}
%\addtolength{\evensidemargin}{-5mm}
%\addtolength{\oddsidemargin} {-5mm}t
%\addtolength{\textheight}    {10mm}
%\addtolength{\topmargin}     {-5mm}

%\renewcommand \green[1]  {#1}
%\renewcommand \red[1]  {}
%\renewcommand \blue[1]  {}

\title{A Prediction-Correction Algorithm for Real-Time Model Predictive Control}
\author{Santiago Paternain, Manfred Morari and Alejandro Ribeiro
  \thanks{Work supported by ARL DCIST CRA W911NF-17-2-0181. The authors are with the Department of Electrical and Systems Engineering, University of Pennsylvania. Email: \{spater, morari, aribeiro\}@seas.upenn.edu.}
}

%%%%%%%%%%%%%%%%%%%%%%%% MAIN DOCUMENT CONTENT %%%%%%%%%%%%%%%%%%%%%%%%%

\begin{document}

\maketitle
\thispagestyle{empty}
\pagestyle{empty}

%!TEX root = root.tex

%%%%%%%%%%%%%%%%%%%%%%%%%%%%%%%%%%%%%%%%%%%%%%%%%%%%%%%%%%%%%%%%%%%%%%%%%%%
%%%   A   B   S   T   R   A   C   T   %%%%%%%%%%%%%%%%%%%%%%%%%%%%%%%%%%%%%
%%%%%%%%%%%%%%%%%%%%%%%%%%%%%%%%%%%%%%%%%%%%%%%%%%%%%%%%%%%%%%%%%%%%%%%%%%%
%
\begin{abstract}
  In this work we adapt a prediction-correction algorithm for continuous time-varying convex optimization problems to solve dynamic programs arising from Model Predictive Control. In particular, the prediction step tracks the evolution of the optimal solution of the problem which depends on the current state of the system. The cost of said step is that of inverting one Hessian and it guarantees, under some conditions, that the iterate remains in the quadratic convergence region of the optimization problem at the next time step. These conditions imply (i) that the variation of the state in a control interval cannot be too large and that (ii) the solution computed in the previous time step needs to be sufficiently accurate. The latter can be guaranteed by running classic Newton iterations, which we term correction steps. Since this method exhibits quadratic convergence the number of iterations to achieve a desired accuracy $\eta$ is of order $\log_2\log_2 1/\eta$, where the cost of each iteration is approximately that of inverting a Hessian. This grants the prediction-correction control law low computational complexity. In addition, the solution achieved by the algorithm is such that the closed loop system remains stable, which allows extending the applicability of  Model Predictive Control to systems with faster dynamics and less computational power. Numerical examples where we consider nonlinear systems support the theoretical conclusions.  
\end{abstract}
%
%%%%%%%%%%%%%%%%%%%%%%%%%%%%%%%%%%%%%%%%%%%%%%%%%%%%%%%%%%%%%%%%%%%%%%%%%%%
%%%   S   E   C   T   I   O   N   %%%%%%%%%%%%%%%%%%%%%%%%%%%%%%%%%%%%%%%%%
%%%%%%%%%%%%%%%%%%%%%%%%%%%%%%%%%%%%%%%%%%%%%%%%%%%%%%%%%%%%%%%%%%%%%%%%%%%
%
\section{Introduction}
Model Predictive Control (MPC) is an optimal control technique that utilizes a process model to predict the future response of a plant and attempts to optimize its future behavior at each control interval. This technique has found acceptance in industrial applications due to its several advantages \cite{qin2003survey,garcia1989model}. Among these advantages, perhaps the most remarkable is the capability of imposing constraints to the state variables and the control inputs of the plant \cite{mayne2000constrained} by solving a dynamic program to decide the control action at each time step.
%Even in cases where the system is not required to satisfy any design constraints its own dynamics impose a set of equality constraints that needs to be satisfied. These constraints relate the state at the next stage with the current state and the control input.
{Dynamic programs with infinite time horizon are intractable unless we are able to compute a closed form solution.} This is the case, for instance, when the objective function is quadratic in the state and the control inputs, the dynamics of the system are linear and there are no design constraints imposed on the system. These conditions however limit the range of applications in which these techniques can be used. To overcome this limitation, one can alternatively solve the receding horizon problem, where instead of minimizing an infinite sum, the focus is on the cumulative cost from the current time until a finite time horizon. This problem can be solved -- at least locally-- with classic optimization methods, e.g. Newton's method. The main drawback of this alternative is that at each control interval the controller needs to solve a different constrained optimization problem in order to select the optimal action. The latter prevents us from applying such techniques to systems with small time constants -- thus requiring small control intervals and short computation time -- or systems with small computation power. It is not surprising then, that MPC found traction first in the control of chemical processes where the previous conditions are typically met \cite{allgower1999nonlinear,biegler1991optimization,helbig1998model}. Improvements in computation algorithms for Model Predictive Control can allow expanding its range of applications to systems with faster dynamics and limited computing power such as small unmanned aerial vehicles \cite{alexis2011model}. 

Several efforts have been made to solve on-line and efficiently the MPC constrained optimization problem, see \cite{binder2001introduction} for a detailed comparison. These approaches are based on Newton-type methods that aim to solve exactly the receding horizon problem \cite{li1989multistep,de1995extension}. However, if the time for feedback is short, approximations are required. Some algorithms consider a one step horizon. Thus, reducing the number of optimization variables \cite{choi1993feedback,choi1999instantaneous}. Another common alternative is to linearize the system. This allows to solve the problem efficiently, because quadratic optimization problems with linear constraints can be solved with only one Newton step. Among these methods there are different approaches. Some linearize the system along a fixed optimal trajectory over the whole time horizon \cite{kramer1987numerical,kugelmann1990new}. Other approximation techniques perform successive linearizations along approximately optimal trajectories \cite{diehl2005real,diehl2005nominal,bock2000direct} and linearization of the dynamics of the system \cite{garcia1984quadratic,zheng1997computationally}. These solutions perform well as long as the system is not largely disturbed.

Instead of using approximations to reduce the complexity in the solution we propose to reduce the computation by tracking the solution to the receding horizon problem as it evolves with the state of the system. %optimization problem. Since the receding horizon problem is a function of the current state of the system and therefore its solution varies with the state -- and the time. %Some approaches to efficiently solve the receding horizon problem take into account this fact and use predicted states as seeds for Newton iterations at subsequent times. \red{Generally the predictions are computed by propagating the current solution through the dynamics of the system \cite{milam2000new,gholami2005uncertain}.} These predictions, however, may not be optimal in the sense that they do not necessarily track the evolution of the sequence of solutions of the optimization problem and thus are not necessarily good seeds for the Newton step.
Specifically, we draw inspiration from a recent series of works,  where unconstrained \cite{simonetto2015prediction, simonetto2016class} and constrained \cite{fazlyab2016interior, fazlyab2017prediction} time-varying convex optimization problems have been considered. The general idea is to combine a prediction step that takes into account the temporal evolution of the optimization problem so that the predicted iterate does not drift away from the quadratic convergence neighborhood of the optimum. The latter makes possible to take advantage of the quadratic convergence of Newton's method -- or correction steps -- to obtain accurate solutions in a number of iterations that is of order $\log_2\log_2(1/\eta)$, where $\eta$ is the desired accuracy. %In particular, it is possible to show that the error in tracking the optimal solution is bounded by a function that depends on the square of the sampling time \cite{simonetto2015prediction, simonetto2016class}.

The abovementioned prediction steps can be derived from the implicit function theorem, which has been used also in Model Predictive Control in \cite{zavala2009advanced}. In the latter the future state is predicted based on the model of the plant and a solution of the problem is computed. Once the state is observed the solution is corrected using a sensibility analysis based on the implicit function theorem. The main difficulty of this approach is that the computation of the predicted problem might be expensive since the seed of the optimization problem is not guaranteed to be in the quadratic convergence region. {In \cite{liao2018semismooth,liao2019semismooth} a semi-smooth predictor-corrector method is proposed to solve the Optimal control problem by tracking the roots of a parameterized non smooth root finding problem and sufficient conditions for asymptotic stability and constraint satisfaction are provided.}  

{
  Drawing inspiration from these algorithms, we consider the prediction-correction method proposed in \cite{paternain2019real} to efficiently solve on-line the time-varying problem that arises in MPC (Section \eqref{sec_algorithm}). The main differences with respect to the setting described in \cite{simonetto2015prediction, simonetto2016class,fazlyab2016interior, fazlyab2017prediction} are twofold. The first one is than in \cite{simonetto2015prediction, simonetto2016class,fazlyab2016interior, fazlyab2017prediction} the temporal evolution of the system\textemdash and thus of the optimal solution\textemdash is independent of the iterates that the prediction-correction method outputs. In the MPC setting, however, the solutions of the optimization problem are applied to the system as control inputs, and thus, future states -- hence the upcoming optimization problems -- depend on the output of the prediction-correction algorithm. The second difference is that the receding horizon optimization problem is not convex since the equality constraints describing the dynamics of the system that we are interested in controlling are generally non-linear. Despite these differences in the problem setting, the same theoretical guarantees in terms of the error of the solution can be established locally (Section \ref{sec_convergence}). That is,  (i) if the solution at a given control instant is sufficiently accurate then the error of the predicted iterate is at most of the order of the square of the sampling time (Propositions \ref{prop_optimum_prediction} and \ref{prop_error}) and therefore (ii) if this variation is not too large it is possible to ensure that the predicted iterate lies inside the quadratic convergence region (Proposition \ref{prop_qcr}). Building on these results we show that (iii) the prediction correction algorithm converges quadratically to a local solution of the receding horizon problem (Theorem \ref{theo_newton}). In addition, we establish a bound on the maximum state variation in a control interval that allows controller to use only two Hessian inversions (Corollary \ref{coro}).The conditions uner which (i) and (ii) hold are iimproved as compared to \cite{paternain2019real}. In addition, Section \ref{sec_stability} establishes that the error that results from solving the problem approximately is such that the system remains stable. Besides some conclusive remarks, the paper closes with numerical examples that support the theoretical results (Section \ref{sec_examples}).

%The prediction step and the Newton update for the correction step require one Hessian inversion each, hence the total complexity of the algorithm is that of inverting two matrices.

%
%\medskip\noindent{\bf Notation and definitions.} Let $\mathcal{M}^{n\times n}$ be the subspace of the $n \times n$ matrices. Let $\lambda_{\min}$ and $\lambda_{\max}$ be the eigenvalues of a matrix $A \in \mathcal{M}^{n \times n}$ with smallest and largest absolute values respectively. Then, the condition number of the matrix $A$ is the quotient $\lambda_{\max}/\lambda_{\min}$.

%!TEX root = root.tex

\section{Model Predictive Control}\label{sec_problem_formulation}
In this work we are interested in reducing the computational effort of Model Predictive Control so we can apply it more easily in real time. Formally, let $\bbx \in \mathbb{R}^n$ denote the state of the system and $\bbu \in \mathbb{R}^p$ be its input. Then, the system of interest is described by a function $f:\mathbb{R}^n\times \mathbb{R}^p\to \mathbb{R}^n$ that relates the state and inputs at time $k$ to the state at time $k+1$
\begin{equation}\label{eqn_dynamics}
\bbx(k+1) = f(\bbx(k),\bbu(k)).
\end{equation}
Model predictive control selects the input of the system by solving an optimization problem that depends on the current state. The optimization problem predicts the model behavior of the system based on the dynamics \eqref{eqn_dynamics} over a given horizon $H$ and it attempts to minimize the cumulative cost 
\begin{equation}\label{eqn_finite_horizon}
J(k) = \sum_{l=k}^{k+H-1} \ell(\bbx(l),\bbu(l))+\ell_H(\bbx(k+H)), 
\end{equation}
where $\ell: \mathbb{R}^n\times \mathbb{R}^p \to \mathbb{R}$ and $\ell_H:\mathbb{R}^n\to\mathbb{R}$ are functions representing the performance metric of interest. The optimization variables representing the states and the input in future states are coupled through the dynamics of the system. Hence, in every control interval one is required to solve a constrained optimization problem. Formally, let us consider the optimization variables $\bar{\bbx}_{l}\in\mathbb{R}^n$ for $l=1,\ldots, H+1$ and $\bar{\bbu}_{l}\in\mathbb{R}^p$ for $l=1,\ldots, H$ and their concatenation $\bar{\bbx}\in \mathbb{R}^{n(H+1)}$ and $\bar{\bbu}\in \mathbb{R}^{pH}$ as  
\begin{equation}
    \bar{\bbx} = [\bar{\bbx}_{1}^\top, \bar{\bbx}_{2}^\top, \ldots, \bar{\bbx}_{H+1}^\top]^\top, \bar{\bbu} = [\bar{\bbu}_{1}^\top, \bar{\bbu}_{2}^\top, \ldots, \bar{\bbu}_{H}^\top]^\top.
\end{equation}
Then, the MPC control law selects as input of the system the vector $(\bar{\bbu}_1^\star)_k$ that arises from solving 
%
%\begin{equation}
\begin{align}\label{eqn_optimization_problem}
({\bar{\bbx}}^\star,{\bar{\bbu}}^\star)_{(k)} : =&\argmin_{{\bar{\bbx}}\in\mathbb{R}^{n(H+1)},{\bar{\bbu}\in\mathbb{R}^{pH}}} J(\bbx(k),\bar{\bbx},\bar{\bbu}) \nonumber \\
  &\,\mbox{s.t.} \quad \, \bar{\bbx}_{i+1} -f(\bar{\bbx}_{i},\bar{\bbu}_i) = 0 \quad \forall i=1\ldots H\nonumber \\
  &\,\phantom{\mbox{s.t.}} \quad \, \bar{\bbx}_1 -\bbx(k) = 0.
\end{align}
Thus, the MPC law is $\kappa_{MPC}(\bbx(k)) : = \left(\bar{\bbu}_1^\star\right)_k$ and it yields the following closed loop dynamical system 
\begin{equation}\label{eqn_closed_loop_nominal}
\bbx(k+1) = f_{{MPC}}(\bbx(k)):= f(\bbx(k),\kappa_{MPC}(\bbx(k))).
\end{equation}
The origin of the previous system is guaranteed to be globally asymptotically stable under the typical assumptions.
%%%%%%%%%%%%%%%%%%%%%%%%%%%%%%%%%%%%%%%%%%%%%%%%%%%%%%%%%%%%%%%%%%%%%%%%%%%%%%%%%%%%%%%%%%%%%%%%%%%%%%%%%%%%%%%%%%%%%%% A S S U M P T I O N S %%%%%%%%%%%%%%%%%%%%%%%%%%%%%%%%%%%%%%%%%%%%%%%%%%%%%%%%%%%%%%%%%%%%%%%%%%%%%%%%%%%%%%%%%%%%%%%%%%%%%%%%%%%%%%%%%%%%%%%%%%%%%%%%%%%%%%%%%%%%%%%%%%%%%%%%%%%%%
\begin{assumption}[Continuity of system and cost]\label{assumption_continuity}
  The functions $f:\mathbb{R}^{n}\times \mathbb{R}^p \to \mathbb{R}^n$, $\ell:\mathbb{R}^n\times\mathbb{R}^p\to \mathbb{R}_+$ and $\ell_H:\mathbb{R}^n \to \mathbb{R}_{+} $ are continuous; $f(\bm{0},\bm{0})=\bm{0}$, $\ell(\bm{0},\bm{0})=0$ and $\ell_H(\bm{0})=0$.
\end{assumption}
\begin{assumption}\label{assumption_bounded_solution}
There exist compact sets $\ccalX\subset \mathbb{R}^{n(H+1)}$, $\ccalU \subset \mathbb{R}^{pH}$ such that for all $k\geq 0$ the solution $(\bar{\bbx}^\star,\bar{\bbu}^\star)_k$ to the problem \eqref{eqn_optimization_problem} satisfies $(\bar{\bbx}^\star,\bar{\bbu}^\star)_k\in\ccalX\times \ccalU$.
\end{assumption}
\begin{assumption}\label{assumption_nominal_stability}
    The stage $\ell(\cdot)$ and terminal cost $\ell_H(\cdot)$ satisfy
    \begin{equation}
    \ell(\bbx,\bbu)\geq \alpha_1(\left\|\bbx\right\|), \quad \ell_H(\bbx) \leq \alpha_2(\left\|\bbx\right\|),
  \end{equation}
in which $\alpha_1(\cdot)$ and $\alpha_2(\cdot)$ are $\ccalK_\infty$ functions and
\begin{equation}
\min_\bbu \ell_H(f(\bbx,\bbu))+\ell(\bbx,\bbu) \leq \ell_H(\bbx), \forall \bbx\in\mathbb{R}^n,
  \end{equation}
    \end{assumption}
%
%%%%%%%%%%%%%%%%%%%%%%%%%%%%%%%%%%%%%%%%%%%%%%%%%%%%%%%%%%%%%%%%%%%%%%%%%%%%%%%%%%%%%%%%%%%%%%%%%%%%%%%%%%%%%%%%%%%%%%% T H E O R E M %%%%%%%%%%%%%%%%%%%%%%%%%%%%%%%%%%%%%%%%%%%%%%%%%%%%%%%%%%%%%%%%%%%%%%%%%%%%%%%%%%%%%%%%%%%%%%%%%%%%%%%%%%%%%%%%%%%%%%%%%%%%%%%%%%%%%%%%%%%%%%%%%%%%%%%%%%%%%%%%%%%%%%%
\begin{theorem}\label{theo_mpc_stability}
Suppose that Assumptions \ref{assumption_continuity}, \ref{assumption_bounded_solution} and \ref{assumption_nominal_stability} are satisfied. Then the origin is globally asymptotically stable for the system \eqref{eqn_closed_loop_nominal}.
\end{theorem}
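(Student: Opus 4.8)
The plan is to use the optimal value of \eqref{eqn_optimization_problem} as a Lyapunov function for the closed loop \eqref{eqn_closed_loop_nominal}, which is the textbook route to MPC stability (cf.\ \cite{mayne2000constrained}). Define $V(\bbx)$ to be the optimal cost $J(\bbx,\bar\bbx^\star,\bar\bbu^\star)$ of \eqref{eqn_optimization_problem} when the current state is $\bbx$; Assumptions \ref{assumption_continuity} and \ref{assumption_bounded_solution} guarantee that the minimum is attained, so $V$ is well defined on all of $\mathbb{R}^n$. The proof then reduces to verifying three properties of $V$: a $\ccalK_\infty$ lower bound, a $\ccalK_\infty$ upper bound, and a strict decrease along closed-loop trajectories; the standard discrete-time Lyapunov theorem then yields global asymptotic stability of the origin.

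First I would establish the bounds on $V$. Since the initial-state constraint forces $\bar\bbx_1^\star=\bbx$ and $\ell\ge 0$, keeping only the first stage cost gives $V(\bbx)\ge \ell(\bbx,\kappa_{MPC}(\bbx))\ge \alpha_1(\|\bbx\|)$ by Assumption \ref{assumption_nominal_stability}, so $V$ is positive definite and radially unbounded. For the upper bound, fix $\bbx$ and, using the terminal inequality in Assumption \ref{assumption_nominal_stability}, choose for each $\bby$ an input $\kappa_f(\bby)$ with $\ell_H(f(\bby,\kappa_f(\bby)))+\ell(\bby,\kappa_f(\bby))\le \ell_H(\bby)$. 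Rolling this local policy out for $H$ steps from $\bbx$ produces a point feasible for \eqref{eqn_optimization_problem}, and telescoping the inequality along the generated trajectory bounds its cost by $\ell_H(\bbx)\le\alpha_2(\|\bbx\|)$; optimality then gives $V(\bbx)\le\alpha_2(\|\bbx\|)$, i.e.\ $V$ is decrescent.

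The decisive step is the descent inequality. Let $(\bar\bbx^\star,\bar\bbu^\star)_k$ solve \eqref{eqn_optimization_problem} at $\bbx(k)$, so that $\bbx(k+1)=f(\bbx(k),\kappa_{MPC}(\bbx(k)))=(\bar\bbx_2^\star)_k$ by the dynamic constraint. I would construct a candidate for the problem at $\bbx(k+1)$ by the usual shift-and-append construction: take the tail inputs $(\bar\bbu_2^\star)_k,\dots,(\bar\bbu_H^\star)_k$, append $\kappa_f((\bar\bbx_{H+1}^\star)_k)$ as the last input, and let the states be the corresponding shifted trajectory. One checks directly that this candidate satisfies every equality constraint of \eqref{eqn_optimization_problem} at $\bbx(k+1)$, and, using the terminal inequality to absorb the appended stage-plus-terminal cost into $\ell_H((\bar\bbx_{H+1}^\star)_k)$, that its cost is at most $V(\bbx(k))-\ell(\bbx(k),\kappa_{MPC}(\bbx(k)))$. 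Since $V(\bbx(k+1))$ is the optimum at $\bbx(k+1)$,
\[
V(\bbx(k+1))-V(\bbx(k))\le -\ell\big(\bbx(k),\kappa_{MPC}(\bbx(k))\big)\le -\alpha_1(\|\bbx(k)\|).
\]
Combining the three properties, $V$ is a global Lyapunov function for \eqref{eqn_closed_loop_nominal} and global asymptotic stability of the origin follows. I expect the only real obstacle to be bookkeeping in this last step: checking that the shifted sequence meets $\bar\bbx_1=\bbx(k+1)$ and the chained dynamics exactly, and observing that a merely pointwise selection $\kappa_f$ suffices, since it is used only to certify feasibility and to bound costs, never as an actual feedback law, so no regularity of $\kappa_f$ is needed; continuity (Assumption \ref{assumption_continuity}) enters only to guarantee the relevant optima exist.
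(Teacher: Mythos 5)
Your proposal is correct and is precisely the standard value-function-as-Lyapunov-function argument (lower bound from the first stage cost, upper bound from rolling out the terminal controller, descent from the shift-and-append candidate) that the paper delegates to the cited reference \cite[Theorem 2.24]{rawlings2009model}. Since the paper's ``proof'' is exactly that citation, your reconstruction takes essentially the same approach, and the roles you assign to Assumptions \ref{assumption_continuity}--\ref{assumption_nominal_stability} (existence of minimizers, the $\ccalK_\infty$ bounds, and the terminal decrease condition) match the textbook treatment.
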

\begin{proof}
  See e.g. \cite[Theorem 2.24]{rawlings2009model}.
  \end{proof}
%
%%%%%%%%%%%%%%%%%%%%%%%%%%%%%%%%%%%%%%%%%%%%%%%%%%%%%%%%%%%%%%%%%%%%%%%%%%%%%%%%%%%%%%%%%%%%%%%%%%%%%%%%%%%%%%%%%%%%%%% M A I N      M A T T E R %%%%%%%%%%%%%%%%%%%%%%%%%%%%%%%%%%%%%%%%%%%%%%%%%%%%%%%%%%%%%%%%%%%%%%%%%%%%%%%%%%%%%%%%%%%%%%%%%%%%%%%%%%%%%%%%%%%%%%%%%%%%%%%%%%%%%%%%%%%%%%%%%%%%%%%%%%%%
The main difficulty when applying Model Predictive Control is the computation of an accurate solution in the control interval. Note that at each time-step $k$, one needs to solve a new optimization problem of the form \eqref{eqn_optimization_problem} that depends on the current state of the system. In general, the algorithms used to do so are based on Newton's method whose cost per iteration is approximately that of inverting a Hessian. Unless the dynamics of the system are linear the computation of each control action requires several steps of the algorithm and thus, it might be the case that at the time of actuation a sufficiently accurate solution to \eqref{eqn_optimization_problem} has not been computed yet. The latter is the case especially when the systems have low computational power or small time constants.  

To reduce the computational cost of solving the MPC problem \eqref{eqn_optimization_problem} we propose to exploit the quadratic convergence rate that Newton's method exhibit in the neighborhood of the solutions (see e.g. \cite[Section 9.5.3]{boyd2004convex}). In particular we propose to use a prediction-step, whose computational cost is also that of inverting a Hessian, so to guarantee that the seed used to solve the problem at every time step lies in the quadratic convergence region. Hence reducing the computational cost of the overall algorithm as compared to classic Newton's method which might operate in the damped phase before reaching the quadratic convergence region. Despite the low complexity of the prediction-correction algorithm proposed, the solutions are such that the closed loop system remains stable.

%Under some assumptions the system remains input to state stable nonetheless, but it might affect its performance. 
%To overcome this limitation, we consider a prediction-correction algorithm that ensures that the solution of the algorithm is accurate with a computational cost comparable to two Newton iterations if the system is sufficiently slow.

%!TEX root=root.tex

\section{Prediction-Correction PC-MPC }\label{sec_algorithm}
Optimization algorithms based on Newton's method are guaranteed to find local solutions to receding  horizon problem \eqref{eqn_optimization_problem}. To characterize these solutions, we start by writing the Lagrangian associated with it
\begin{equation}
  \begin{split}
    \ccalL(\bbx(k),\bar{\bbx},\bar{\bbu},\bm{\lambda}) = J(\bbx(k),\bar{\bbx},\bar{\bbu})+\bm{\lambda}_1^\top\left(\bar{\bbx}_1- \bbx(k)\right) \\+ \sum_{i=2}^{H+1} \bblambda_i^\top\left(\bar{\bbx}_i-f(\bar{\bbx}_{i-1},\bar{\bbu}_{i-1})\right).
    \end{split}
  \end{equation}
where $\bm{\lambda}\in\mathbb{R}^{n(H+1)}$ is a vector containing the multipliers corresponding to the $H+1$ constraints. To simplify the notation, we  define the following vector containing all the optimization variables $\bbz=\left[\bar{\bbx}^\top, \bar{\bbu}^\top, \bm{\lambda}^\top\right]^\top \in \mathbb{R}^{H(2n+p)+2n}$ and we write compactly $\ccalL(\bbx(k),\bbz)$. The first order condition for optimality of problem \eqref{eqn_optimization_problem} is given by the gradient of the Lagrangian with respect to $\bbz$ being equal to zero and these are the solutions that Newton based methods are guaranteed to find. Hence, we also settle for local solutions of the abovementioned form, i.e., 
\begin{equation}\label{eqn_first_order_opt}
\nabla_{\bbz} \ccalL(\bbx(k),\bbz^\star_k) = \mathbf{0}.
\end{equation}
Let us denote by $\bbz_{k+1}^0$ the seed used to solve problem \eqref{eqn_optimization_problem} at time $k+1$, and let us write a first order Taylor expansion of $\nabla_{\bbz}\ccalL(\bbx(k+1),\bbz_{k+1}^0)$ around the pair $(\bbx(k),\bbz_k^\star)$.  Since this pair satisfies \eqref{eqn_first_order_opt}, it follows that  
\begin{align}\label{eqn_taylor}
    \nabla_{\bbz} \ccalL(\bbx(k+1),&\bbz^0_{k+1}) \approx \nabla_{\bbz \bbx} \ccalL(\bbx(k),\bbz^\star_k)\left(\bbx(k+1)-\bbx(k)\right) \nonumber \\
    &+ \nabla_{\bbz \bbz} \ccalL(\bbx(k),\bbz^\star_k)\left(\bbz_{k+1}^0-\bbz_k^\star\right).
  \end{align}
Recall that Newton's method exhibits quadratic convergence in the neighborhood of the critical points (see e.g. \cite[Section 9.5.3]{boyd2004convex}) defined as
\begin{equation}\label{eqn_qcr}
  QCR(\bbz^\star_k) = \left\{\bbz\in\mathbb{R}^{H(2n+p)+2n}\mid \left\|\bbz-\bbz_k^\star\right\|\leq \frac{m}{L}\right\}.
\end{equation}
The importance of this fact, is that to achieve a desired accuracy $\varepsilon$ on the solution of \eqref{eqn_first_order_opt}, we require only $N = O(\log \log(1/\varepsilon))$ Newton iterations. Thus, to achieve good accuracy of the solution in few iterations it is of interest that the selected seed lies in said region. This means that the seed needs to be such that the norm of the gradient in \eqref{eqn_taylor} is close to zero.  The latter is achieved \textemdash to a first order approximation \textemdash using a predicted seed of the form
\begin{equation}\label{eqn_implicit_function_theorem}
\bbz_{k+1|k}^\star = \bbz_k^\star-\nabla_{\bbz \bbz} \ccalL_{\star}(k)^{-1} \nabla_{\bbz \bbx} \ccalL_{\star}(k)\left(\bbx(k+1)-\bbx(k)\right),
\end{equation}
where the notation $\ccalL_{\star}(k)$, introduced for simplicity, denotes the evaluation of $\ccalL(\cdot,\cdot)$ at $(\bbx(k),\bbz^\star_k)$. The latter equation suggests that a similar update could be applied to $\bbz_k$ \textemdash the approximation of the solution to \eqref{eqn_optimization_problem} available at time $k$ \textemdash to track the local solution.  We term this update the prediction step and it selects the seed $\bbz_{k+1}^0$ as 
\begin{equation}\label{eqn_prediction}
\bbz_{k+1}^0 = \bbz_k-\nabla_{\bbz \bbz} \ccalL({k})^{-1} \nabla_{\bbz \bbx} \ccalL({k})\left({\bbx}(k+1)-\bbx(k)\right),
\end{equation}
where we have defined $\nabla_{\bbz \bbz} \ccalL({k}):= \nabla_{\bbz \bbz} \ccalL(\bbx(k),\bbz_k)$ and $\nabla_{\bbz \bbx} \ccalL({k}):=\nabla_{\bbz \bbx} \ccalL(\bbx(k),\bbz_k) $. Since the predicted iterated is based on the first order Taylor expansion (cf., \eqref{eqn_implicit_function_theorem}) it is expected that the smaller the variation of the system's state and the closer $\bbz_k$ is to $\bbz_k^\star$ \textemdash the solution to problem \eqref{eqn_optimization_problem}\textemdash the closer the prediction $\bbz_{k+1}^0$ is to  $\bbz_{k+1}^\star$. We formalize this intuition in Proposition \ref{prop_error}. Under some assumptions on the variation of the state in a control interval we can use said result to guarantee that the predicted iterate $\bbz_{k+1}^0$ lies in the quadratic convergence region of the problem \eqref{eqn_optimization_problem} at time $k+1$ (Proposition \ref{prop_qcr}). That being the case we can run Newton's method with step size one. We term this update, the correction step 
\begin{equation}\label{eqn_correction}
  \begin{split}
    &\bbz_{k+1}^{j+1}= \bbz_{k+1}^j-\nabla^2_{\bbz\bbz} \ccalL_{j}(k+1)^{-1} \nabla_\bbz \ccalL_j(k+1),
    \end{split}
  \end{equation}
where we have defined $\nabla_{\bbz \bbz} \ccalL_j({k}):= \nabla_{\bbz \bbz} \ccalL(\bbx(k),\bbz_k^j)$ and $\nabla_{\bbz \bbx} \ccalL_j({k}):=\nabla_{\bbz \bbx} \ccalL(\bbx(k),\bbz_k^j) $. As previously mentioned the fact that the seed lies in the quadratic convergence region depends on a bound on the state variation between consecutive time steps and the accuracy of the solution computed in the previous time step. Hence, by allowing multiple correction steps \textemdash which allows for a more accurate solution \textemdash  we are trading-off computational cost for the ability of controlling systems that vary more. We formalize this trade-off in Theorem \ref{theo_newton} where we bound the number of correction steps required to ensure that the seed lies in the quadratic convergence region for all $k$. The prediction-correction scheme \eqref{eqn_prediction}--\eqref{eqn_correction} defines the Prediction-Correction MPC (PC-MPC) law
\begin{equation}\label{eqn_control_law}
  \bbu(k) = \kappa(\bbx(k),\bbx(k-1),\bbz_{k-1}):= \left(\bar{\bbu}_{1}^N\right)_k, \quad \forall k>0,
\end{equation}
where $\left(\bar{\bbu}_{1}^N\right)$ represents the iterate after $N$ corrections of the form \eqref{eqn_correction} to solve the problem \eqref{eqn_optimization_problem} at time $k$, when the seed has been selecting using the prediction step \eqref{eqn_prediction}. This control law defines the following closed loop dynamical system when applied as the input of system \eqref{eqn_dynamics}
\begin{equation}\label{eqn_closed_loop}
  \begin{split}
    \bbx(k+1) = f_{\kappa}(\bbx(k),\bbx(k-1),\bbz_{k-1})\\
    := f(\bbx(k),\kappa(\bbx(k),\bbx(k-1),\bbz_{k-1})). 
    \end{split}
  \end{equation}
We summarize the feedback loop with the prediction correction algorithm under Algorithm \ref{alg_pcmpc}. Note that the initial control input $\bbu(0)$ can be computed by solving \eqref{eqn_optimization_problem} with classic Newton steps. Since this computation can be done off line before the system starts evolving, the complexity required to ensure an accurate solution of \eqref{eqn_optimization_problem} is affordable.
%
%%%%%%%%%%%%%%%%%%%%%%%%%%%%%%%%%%%%%%%%%%%%%%%%%%%%%%%%%%%%%%%%%%%%%%%%%%%%%%%%%%%%%%%%%%%%%%%%%%%%%%%%%%%%%%%%%%%%%%%%%%% A L G O R I T H M %%%%%%%%%%%%%%%%%%%%%%%%%%%%%%%%%%%%%%%%%%%%%%%%%%%%%%%%%%%%%%%%%%%%%%%%%%%%%%%%%%%%%%%%%%%%%%%%%%%%%%%%%%%%%%%%%%%%%%%%%%%%%%%%%%%%%%%%%%%%%%%%%%%%%%%%%%%%%%%
\begin{algorithm}
  \caption{predictionCorrectionMPC}
  \label{alg_pcmpc} 
\begin{algorithmic}[1]
 \renewcommand{\algorithmicrequire}{\textbf{Input:}}
 \renewcommand{\algorithmicensure}{\textbf{Output:}}
 \Require $\bbx(0),N,\varepsilon$
 \State Compute $\bbz_0=\bbz^\star(\bbx(0))$ 
 \For{$k=0,1,\ldots$}
 \State Apply input $\bbu(k) = \left(\bbu_1\right)_k$ to the system and observe
 $$\bbx(k+1)=f(\bbx(k),\bbu(k))$$
 \State Compute prediction step $\bbz_{k+1}^0$ according to \eqref{eqn_prediction}
 $$
\bbz_{k+1}^0 = \bbz_k-\nabla_{\bbz \bbz} \ccalL(k)^{-1} \nabla_{\bbz \bbx} \ccalL(k)\left(\bbx(k+1)-\bbx(k)\right)$$
 \State Set $j=0$
 \While {$j<N$ or $\left\|\nabla_\bbz\ccalL(\bbx(k+1),\bbz_{k+1}^j)\right\|>\varepsilon$}
 \State $j=j+1$
 \State Compute Correction (or Newton) step as in \eqref{eqn_correction}
 $$
 \bbz_{k+1}^{j}= \bbz_{k+1}^{j-1}-\nabla^2_{\bbz\bbz} \ccalL_{j-1}(k+1)^{-1} \nabla_\bbz \ccalL_{j-1}(k+1),
 $$
 \EndWhile
 \State Update variable $ \bbz_{k+1} = \bbz_{k+1}^j$
 \EndFor
%
 %\Return $\hat{Q}$, $s_{T_Q}$ 
 \end{algorithmic}
 \end{algorithm}
%
%%%%%%%%%%%%%%%%%%%%%%%%%%%%%%%%%%%%%%%%%%%%%%%%%%%%%%%%%%%%%%%%%%%%%%%%%%%%%%%%%%%%%%%%%%%%%%%%%%%%%%%%%%%%%%%%%%%%%%%%%%%%%%%%%%%%%%%%%%% M A I N     M A T T ER %%%%%%%%%%%%%%%%%%%%%%%%%%%%%%%%%%%%%%%%%%%%%%%%%%%%%%%%%%%%%%%%%%%%%%%%%%%%%%%%%%%%%%%%%%%%%%%%%%%%%%%%%%%%%%%%%%%%%%%%%%%%%%%%%%%%%%%%%

Recall that the prediction step relies on a first order approximation and thus for it to efficiently track the optimal solution we require that the system does not evolve arbitrarily fast. In the next assumption we impose a bound on the norm of the difference between states on two consecutive times.
%%%%%%%%%%%%%%%%%%%%%%%%%%%%%%%%%%%%%%%%%%%%%%
%%%%%%%%% A S S U M P T I O N %%%%%%%%%%%%%%%%
%%%%%%%%%%%%%%%%%%%%%%%%%%%%%%%%%%%%%%%%%%%%%%
%
\begin{assumption}\label{assumption_bound_diff_states}
Let $\bbz_0$ be an approximate solution of \eqref{eqn_optimization_problem} at time $k=0$. 
%  Let $\bbz_0^*$ be the solution of \eqref{eqn_optimization_problem} at time $k=0$. Denote by $\bbz_0$ an approximate solution of \eqref{eqn_optimization_problem} computed offline with a desired accuracy $\varepsilon>0$, i.e., $\left\|\bbz_0-\bbz_0^* \right\|<\varepsilon$.
Denote by $k$ the time index, $\bbx(0)$ the initial condition of the system and $\bm{\phi}_{\kappa}(k,\bbx(0),\bbz_0)$ the solution of the dynamical system \eqref{eqn_closed_loop}.  Then, for all $k\in\mathbb{N}$ and for all $\bbx(0)\in \mathbb{R}^n$ there exists a constant $B>0$ such that 
  \begin{equation}\label{eqn_condition_sampling_time}
\begin{split}
  \left\|\bm{\phi}_{\kappa}(k+1,\bbx(0),\bbz_0) - \bm{\phi}_{\kappa}(k,\bbx(0),\bbz_0)\right\| 
\leq {B}T_s,
\end{split}
  \end{equation}
  where $T_s$ is the sampling time of the system.  
  \end{assumption}
The previous assumption imposes a bound on the variation of the state in consecutive times. This can be interpreted as if the continuous time counterpart of the dynamical system had derivatives bounded by ${B}$, then the variation of the solution in a control interval would have to be smaller than ${B}T_s$.

In addition to the bound in the variation of the states we are required to guarantee that $\nabla^2_{\bbz \bbz} \ccalL(\bbx(k), \cdot)$ is invertible for all times $k\geq 0$ since both the prediction \eqref{eqn_prediction} and the correction step \eqref{eqn_correction} rely on its inverse. The later can be ensured by the Sufficient Second Order Conditions and the Linear Independence Constraint Qualifications, i.e., the Hessian of the Lagrangian evaluated at $(\bbx,\bbz^\star(\bbx))$ is positive definite in any feasible direction and that the gradients of the constraints evaluated at any feasible point are linearly independent \cite{nocedal2006numerical}. In this work, however, we make a stronger requirement, a uniform bound on the absolute value of the eigenvalues. These requirements are standard in Newton type analysis \cite[Section 9.5.3]{boyd2004convex} and they allow to establish convergence rates. In convex optimization, this bound is enforced by the strong convexity assumption necessary to have quadratic convergence of Newton's method. We require as well other smoothness conditions on the functions standard in the prediction-correction literature \cite{simonetto2016class,fazlyab2017prediction}. We formally state these assumptions next.
%
%%%%%%%%%%%%%%%%%%%%%%%%%%%%%%%%%%%%%%%%%%%%%%%%%%%%%%%%%%%%%%%%%%%%%%
%%%%%%%%%%%%%%%%%%%%%%%%%%%% A S S U M P T I O N %%%%%%%%%%%%%%%%%%%%%
%%%%%%%%%%%%%%%%%%%%%%%%%%%%%%%%%%%%%%%%%%%%%%%%%%%%%%%%%%%%%%%%%%%%%%
\begin{assumption}\label{assumption_non_zero_eigenvalues}
  Let $\bbz^\star:\mathbb{R}^n \to \mathbb{R}^{(n+p+2)H}$ be such that
  \begin{equation}
    \nabla_\bbz \ccalL(\bbx,\bbz^\star(\bbx)) =0.
  \end{equation}
  For all $\bbx \in \mathbb{R}^n$, denote by $\lambda_{i}\left(\nabla^2_{\bbz\bbz}\ccalL(\bbx,\bbz^\star(\bbx))\right)$, with $i=1,\ldots, (n+p+2)H$, the eigenvalues of the second derivative of the Lagrangian with respect to $\bbz$ at the point $(\bbx,\bbz^\star(\bbx))$. We assume that there exists a uniform bound $m>0$ such that for all $\bbx$ we have
  \begin{equation}
    \min_{i=1\ldots (n+p+2)H} \left|\lambda_{i}\left(\nabla^2_{\bbz\bbz}\ccalL(\bbx,\bbz^\star(\bbx))\right) \right|> 2m.
   \end{equation}
\end{assumption}
%%%%%%%%%%%%%%%%%%%%%%%%%%%%%%%%%%%%%%%%%%%%%%%%%%%%%%%%%%%%%%%%%%%%%%
%%%%%%%%%%%%%%%%%%%%%%%%%%%% A S S U M P T I O N %%%%%%%%%%%%%%%%%%%%%
%%%%%%%%%%%%%%%%%%%%%%%%%%%%%%%%%%%%%%%%%%%%%%%%%%%%%%%%%%%%%%%%%%%%%%
\begin{assumption}\label{assumption_bounded_derivatives}
  The function $\ccalL(\bbx,\bbz)$ is sufficiently smooth both in $\bbx \in \mathbb{R}^n$ and $\bbz \in \mathbb{R}^{H(n+p+2)}$. In particular the second partial derivative with respect to $\bbz$ and $\bbx$ is bounded
  \begin{equation}
  \left\| \nabla^2_{\bbz\bbx}\ccalL(\bbx,\bbz)\right\| \leq C.
  \end{equation}
  In addition we assume that the derivative of $\nabla_\bbz\ccalL(\bbx,\bbz)$ is L-Lipschitz. Let $D(\bbx,\bbz) = \left[\nabla^2_{\bbz\bbz}\ccalL(\bbx,\bbz), \nabla^2_{\bbz\bbx}\ccalL(\bbx,\bbz) \right] $, then, there exists a positive constant $L$ such that for all $\bby_1 = [\bbx_1^\top,\bbz_1^\top]^\top$ and $\bby_2 =[\bbx_2^\top,\bbz_2^\top]^\top$ it holds that 
  \begin{equation}
    \left\| D(\bbx_1,\bbz_1)-D(\bbx_2,\bbz_2)\right\| \leq L\left\|\bby_1-\bby_2 \right\|.
    \end{equation}
  In particular, the Lipschitz assumption implies also that 
  \begin{equation}
    \left\| \nabla^2_{\bbz\bbz}\ccalL(\bbx,\bbz_1)-\nabla^2_{\bbz\bbz}\ccalL(\bbx,\bbz_2)\right\| \leq L\left\|\bbz_1-\bbz_2 \right\|.
    \end{equation}
  %
%    \begin{equation}
 %   \left\| \nabla^2_{\bbz\bbx}\ccalL(\bbx,\bbz_1)-\nabla^2_{\bbz\bbx}\ccalL(\bbx,\bbz_2)\right\| \leq L_2\left\|\bbz_1-\bbz_2 \right\|.
 %   \end{equation}
  %
\end{assumption}

%%%%%%%%%%%%%%%%%%%%%%%%%%%%%%%%%%%%%%%%%%%%%%%%%%%%%%%%%%%%%%%%%%%%%%%%%%%%%%%%%%%%%%%%%%%%%%%%%%%%%%%%%%%%%%%%%%%%%%%%%%%%%%%%%%%%%%%% M A I N     M A T T E R %%%%%%%%%%%%%%%%%%%%%%%%%%%%%%%%%%%%%%%%%%%%%%%%%%%%%%%%%%%%%%%%%%%%%%%%%%%%%%%%%%%%%%%%%%%%%%%%%%%%%%%%%%%%%%%%%%%%%%%%%%%%%%%%%%%%%%%%%%%%
%
Note that, since we assume that the Lagrangian is differentiable, the dynamics of the system and the objective functions need to be so as well. Hence, Assumption \ref{assumption_bounded_derivatives} implies the continuity in Assumption \ref{assumption_continuity}. In the next section we establish the accuracy with which the optimization problem needs to be solved at time $k$ for the predicted iterate $\bbz_{k+1}^0$ to be in the quadratic convergence region of problem \eqref{eqn_optimization_problem} at time $k+1$ (Proposition \ref{prop_qcr}). In addition, we establish a bound on the maximum number of correction steps that are required to achieve the aforementioned accuracy (Theorem \ref{theo_newton}).

%!TEX root = root.tex

\section{Computational Complexity of PC-MPC}\label{sec_convergence}
We start the analysis of the computational complexity of the proposed algorithm by formalizing the bound on the norm of the difference of the predicted seed $\bbz_{k+1}^0$ and $\bbz_{k+1}^\star$, the solution of the optimization problem \eqref{eqn_optimization_problem} at time $k+1$. To do so, we require an intermediate result that bounds the error of the predicted seed starting at the solution to the optimization problem at time $k$. To be precise, let $\left\{\bbz_k^\star, k \geq 0\right\}$ be the sequence of solutions to problem \eqref{eqn_optimization_problem} when the control law applied to the system \eqref{eqn_dynamics} is \eqref{eqn_control_law}. Based on the expression \eqref{eqn_implicit_function_theorem} define the predicted optimal sequence $\left\{\bbz^\star_{k+1|k}, k\geq 0 \right\}$
\begin{equation}\label{eqn_optimum_prediction}
  \begin{split}
  \bbz^\star_{k+1|k} = \bbz^\star_{k}-\nabla^2_{\bbz\bbz}\ccalL_{\star}(k)^{-1}\nabla^2_{\bbx\bbz}\ccalL_{\star}(k)\Delta \bm{\phi}_{\kappa}(k).
\end{split}
\end{equation}
To simplify the notation in the previous expression, we have defined $\nabla^2_{\bbz\bbz}\ccalL_\star(k):=\nabla^2_{\bbz\bbz}\ccalL(\bm{\phi}_k(k,\bbx(0),\bbz_0),\bbz_{k}^\star)$, $\nabla^2_{\bbx\bbz}\ccalL_\star(k):=\nabla^2_{\bbx\bbz}\ccalL(\bm{\phi}_k(k,\bbx(0),\bbz_0),\bbz_k^\star)$ and $ \Delta \bm{\phi}_{\kappa}(k) := \bm{\phi}_k(k+1,\bbx(0),\bbz_0)-\bm{\phi}_k(k,\bbx(0),\bbz_0)$. Because $\bbz_{k+1|k}^\star$ is a first order approximation of the dependence of the optimal solution with the state, it is expected that the difference between $\bbz^\star_{k+1|k}$ and the optimal solution $\bbz_{k+1}^\star$ of problem \eqref{eqn_optimization_problem} at time $k+1$ is bounded by $O(B^2T_s^2)$ as we formally state in the next proposition. 
%%%%%%%%%%%%%%%%%%%%%%%%%%%%%%%%%%%%%%%%%%%%%%%%%%%%%%%%%%%%%%%%%%%%%%%%%
%%%%%%%%%%%%%%% P R O P O S I T I O N %%%%%%%%%%%%%%%%%%%%%%%%%%%%%%%%%%%
%%%%%%%%%%%%%%%%%%%%%%%%%%%%%%%%%%%%%%%%%%%%%%%%%%%%%%%%%%%%%%%%%%%%%%%%
\begin{proposition}\label{prop_optimum_prediction}
Let $\bm{\phi}_k(k,\bbx(0),\bbz_0)$ be the solution of the dynamical system \eqref{eqn_closed_loop} with initial conditions $\bbx(0)$, $\bbz_0$, let $\{\bbz_k^\star, k\geq 0\}$ be the sequence of solutions to the problem \eqref{eqn_optimization_problem} for the trajectory $\bm{\phi}_k(k,\bbx(0),\bbz_0)$, and let $\left\{\bbz^\star_{k+1|k}, k\geq 0\right\}$ be the sequence defined in \eqref{eqn_optimum_prediction}. Under Assumptions \ref{assumption_bound_diff_states}--\ref{assumption_bounded_derivatives}, for all $k \geq 0$ we have that 
  \begin{equation}
    %  \left\|  \bbz_{k+1}^\star -\bbz_{k+1|k}^\star\right\| \leq \left(\frac{C_1C_0^2}{2m^3}+\frac{C_2C_0}{m^2}+\frac{C_3}{2m}\right)\left\|\Delta \bm{\phi}_{\kappa}(k)\right\|^2.
    \left\|  \bbz_{k+1}^\star -\bbz_{k+1|k}^\star\right\| \leq \frac{L}{m}\delta_1B^2T_s^2,
  \end{equation}
  where $B$ is the constant defined in Assumption \ref{assumption_bound_diff_states}, $T_s$ is the sampling time of the system and
    \begin{equation}\label{eqn_delta1}
    \delta_1:=\frac{1}{4}\left(\frac{C}{2m}+1\right)\left(\frac{C^2}{4m^2}+1\right)^{1/2},
  \end{equation}  
    with $m, C$ and $L$ being the constants defined in assumptions \ref{assumption_non_zero_eigenvalues} and \ref{assumption_bounded_derivatives}.
      
\end{proposition}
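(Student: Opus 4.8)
The plan is to read the prediction \eqref{eqn_optimum_prediction} as the \emph{exact first-order Taylor solve} of the optimality condition \eqref{eqn_first_order_opt} at time $k+1$, expanded about the data $(\bbx(k),\bbz_k^\star)$ at time $k$; the residual $\bbz_{k+1}^\star-\bbz_{k+1|k}^\star$ is then nothing but the inverse Hessian applied to the quadratic Taylor remainder, which the Lipschitz hypothesis in Assumption~\ref{assumption_bounded_derivatives} bounds by the squared primal--dual displacement between consecutive instants.

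Concretely, write $\bby_k=[\bbx(k)^\top,(\bbz_k^\star)^\top]^\top$ and recall that $D(\bbx,\bbz)=[\nabla^2_{\bbz\bbz}\ccalL(\bbx,\bbz),\nabla^2_{\bbz\bbx}\ccalL(\bbx,\bbz)]$ is the total Jacobian of $\bbz\mapsto\nabla_\bbz\ccalL(\bbx,\bbz)$ and is $L$-Lipschitz. Taylor's theorem with integral remainder gives
\begin{equation}
\nabla_\bbz\ccalL(\bbx(k+1),\bbz_{k+1}^\star)=\nabla_\bbz\ccalL(\bbx(k),\bbz_k^\star)+\nabla^2_{\bbz\bbx}\ccalL_\star(k)\,\Delta\bm{\phi}_\kappa(k)+\nabla^2_{\bbz\bbz}\ccalL_\star(k)\,(\bbz_{k+1}^\star-\bbz_k^\star)+\bbr_k,
\end{equation}
with $\|\bbr_k\|\le\tfrac{L}{2}\|\bby_{k+1}-\bby_k\|^2$. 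Both gradients vanish by \eqref{eqn_first_order_opt}, and $\nabla^2_{\bbz\bbz}\ccalL_\star(k)$ is invertible with $\|\nabla^2_{\bbz\bbz}\ccalL_\star(k)^{-1}\|\le\tfrac{1}{2m}$ by Assumption~\ref{assumption_non_zero_eigenvalues} (the Lagrangian Hessian in $\bbz$ is symmetric, so its spectral norm inverse is $1/\min_i|\lambda_i|$). Solving the identity above for $\bbz_{k+1}^\star$ and comparing with \eqref{eqn_optimum_prediction} — whose two terms are exactly what the solve returns once $\bbr_k$ is dropped — yields $\bbz_{k+1}^\star-\bbz_{k+1|k}^\star=-\nabla^2_{\bbz\bbz}\ccalL_\star(k)^{-1}\bbr_k$, hence
\begin{equation}
\|\bbz_{k+1}^\star-\bbz_{k+1|k}^\star\|\le\frac{L}{4m}\,\|\bby_{k+1}-\bby_k\|^2 .
\end{equation}

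It remains to control $\|\bby_{k+1}-\bby_k\|^2=\|\Delta\bm{\phi}_\kappa(k)\|^2+\|\bbz_{k+1}^\star-\bbz_k^\star\|^2$. The first summand is at most $B^2T_s^2$ by Assumption~\ref{assumption_bound_diff_states}. For the second I would invoke the implicit function theorem: Assumption~\ref{assumption_non_zero_eigenvalues} makes $\nabla^2_{\bbz\bbz}\ccalL(\bbx,\bbz^\star(\bbx))$ invertible for \emph{every} $\bbx$, so the solution branch $\bbz^\star(\cdot)$ is $C^1$ with $\nabla_\bbx\bbz^\star=-\nabla^2_{\bbz\bbz}\ccalL^{-1}\nabla^2_{\bbz\bbx}\ccalL$, whence $\|\nabla_\bbx\bbz^\star(\bbx)\|\le\tfrac{C}{2m}$ uniformly by Assumption~\ref{assumption_bounded_derivatives}; integrating along the segment $[\bbx(k),\bbx(k+1)]$ gives $\|\bbz_{k+1}^\star-\bbz_k^\star\|\le\tfrac{C}{2m}\|\Delta\bm{\phi}_\kappa(k)\|\le\tfrac{C}{2m}BT_s$ (using $\bbz_k^\star=\bbz^\star(\bbx(k))$, the branch tracked by the algorithm). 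Writing $\|\bby_{k+1}-\bby_k\|^2=\|\bby_{k+1}-\bby_k\|\cdot\|\bby_{k+1}-\bby_k\|$ and estimating one factor by $(1+\tfrac{C}{2m})BT_s$ (triangle inequality) and the other by $(1+\tfrac{C^2}{4m^2})^{1/2}BT_s$ (Euclidean norm of the stacked vector) then gives $\|\bbz_{k+1}^\star-\bbz_{k+1|k}^\star\|\le\tfrac{L}{4m}(1+\tfrac{C}{2m})(1+\tfrac{C^2}{4m^2})^{1/2}B^2T_s^2=\tfrac{L}{m}\delta_1B^2T_s^2$ with $\delta_1$ as in \eqref{eqn_delta1}, the claimed bound.

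The Taylor-remainder estimate and the final assembly of constants are routine. The one step that deserves care is the bound on $\|\bbz_{k+1}^\star-\bbz_k^\star\|$: it presupposes that $\bbx\mapsto\bbz^\star(\bbx)$ is globally well defined and Lipschitz with constant $C/(2m)$ along the whole segment between $\bbx(k)$ and $\bbx(k+1)$, which is exactly why the \emph{uniform} eigenvalue bound of Assumption~\ref{assumption_non_zero_eigenvalues} (rather than mere pointwise nondegeneracy) is needed here; it also lets us avoid the circular alternative of extracting that bound from the optimality expansion itself, which would reintroduce the as-yet-unbounded quadratic term $\|\bbz_{k+1}^\star-\bbz_k^\star\|^2$.
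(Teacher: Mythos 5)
Your proof is correct, and it arrives at the paper's exact constant $\delta_1$ by a genuinely different route. The paper never Taylor-expands the optimality residual: it writes $\bbz^\star_{k+1}-\bbz^\star_k=\int_0^1\nabla_\bbx\bbz^\star(\bbx(k)+\theta\Delta\bbx)\,\Delta\bbx\,d\theta$, identifies the prediction \eqref{eqn_optimum_prediction} with the first-order term of that integral, and reduces the error to the Lipschitz modulus of the map $\bbx\mapsto\nabla_\bbx\bbz^\star(\bbx)$; that modulus, $2L\delta_1/m$, is computed in a separate auxiliary lemma by splitting the difference of the products $\nabla^2_{\bbz\bbz}\ccalL^{-1}\nabla^2_{\bbz\bbx}\ccalL$ into a difference of inverses plus a difference of mixed Hessians. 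You instead expand the stationarity condition $\nabla_\bbz\ccalL(\bbx(k+1),\bbz^\star_{k+1})=\bm{0}$ about $(\bbx(k),\bbz^\star_k)$, observe that \eqref{eqn_optimum_prediction} is exactly the linearized solve, and obtain the clean identity $\bbz^\star_{k+1}-\bbz^\star_{k+1|k}=-\nabla^2_{\bbz\bbz}\ccalL_\star(k)^{-1}\bbr_k$ with $\|\bbr_k\|\le(L/2)\|\bby_{k+1}-\bby_k\|^2$. This bypasses the product-rule bookkeeping of the paper's auxiliary lemma entirely, needing only the elementary estimate $\|\nabla_\bbx\bbz^\star(\bbx)\|\le C/(2m)$ (the easy tail of that lemma) to control $\|\bby_{k+1}-\bby_k\|$. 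The one slightly artificial step is your mixed estimate $\|\bby_{k+1}-\bby_k\|^2\le\bigl(1+\tfrac{C}{2m}\bigr)\bigl(1+\tfrac{C^2}{4m^2}\bigr)^{1/2}B^2T_s^2$, which deliberately sacrifices tightness to land on $4\delta_1$; bounding both factors by the Euclidean estimate would give the strictly smaller constant $\tfrac14\bigl(1+\tfrac{C^2}{4m^2}\bigr)$ in place of $\delta_1$, so your route in fact proves a slightly stronger statement than the one claimed. Both arguments rest on the same implicit premise, which you correctly flag: the branch $\bbz^\star(\cdot)$ must be defined and $C^1$ along the whole segment joining consecutive states, which is what the uniform eigenvalue bound of Assumption \ref{assumption_non_zero_eigenvalues} provides.
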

\begin{proof}
See Appendix \ref{app_optimum_prediction}.
\end{proof}}
The previous result establishes a bound on the error that the prediction step introduces. Since the prediction step is a first order approximation of the function $\bbz^\star(\bbx)$ is not surprising that the error is of the order of the square of the variation of the state in a control interval. The previous result also establishes a bound on the maximum variation allowable for the predicted iterate to lie in quadratic convergence region. Since the quadratic convergence region is given by the points satisfying $\left\|\bbz-\bbz_k^\star\right\| \leq m/L$ (cf., \eqref{eqn_qcr}) we require that $B^2 T_s^2 \leq m^2\left(L^2\delta_1\right)^{-1}$ to ensure that $\bbz_{k+1|k}^\star$ is in the region. The next result uses the tracking error established to bound the error of the prediction step when $\bbz_k\in QCR(\bbz_k^\star)$ instead of being the exact solution.
%
%%%%%%%%%%%%%%%%%%%%%%%%%%%%%%%%%%%%%%%%%%%%%%%%%%%%%%%%%%%%%%%%%%%%%%%%%
%%%%%%%%%%%%%%% P R O P O S I T I O N %%%%%%%%%%%%%%%%%%%%%%%%%%%%%%%%%%%
%%%%%%%%%%%%%%%%%%%%%%%%%%%%%%%%%%%%%%%%%%%%%%%%%%%%%%%%%%%%%%%%%%%%%%%%
\begin{proposition}\label{prop_error}
  Consider $\bm{\phi}_k(k,\bbx(0),\bbz_0)$, the solution of the dynamical system \eqref{eqn_closed_loop} with initial conditions $\bbx(0)$ and $\bbz_0$. Let Assumptions \ref{assumption_bound_diff_states}--\ref{assumption_bounded_derivatives} hold and let $\{\bbz_k^\star, k\geq 0\}$ be the sequence of solutions to problem \eqref{eqn_optimization_problem} for the trajectory $\bm{\phi}_k(k,\bbx(0),\bbz_0)$. Let $m,C$ and $L$ be the constants defined in Assumption \ref{assumption_non_zero_eigenvalues} and \ref{assumption_bounded_derivatives}, $\delta_1$ be the constant in \eqref{eqn_delta1} and define the following constant 
  \begin{equation}\label{eqn_delta2}
    \delta_2 :=\frac{1}{2}\left(\frac{C}{m}+1\right).
    \end{equation}
If  %$\left\|\nabla_z\ccalL(\phi(k,\bbx(0),\bbz_0),\bbz_k)\right\|<m^2/L$
$\left\| \bbz_k-\bbz_k^\star\right\|<m/L$, for some $k\geq 0$, then we have that  
  \begin{equation}\label{eqn_bound_prop_error}
    \begin{split}
%      \left\|\bbz_{k+1|k}-\bbz^\star_{k+1}\right\| \leq \left(\frac{C_1C_0^2}{2m^3}+\frac{C_2C_0}{m^2}+\frac{C_3}{2m}\right)\left\|\Delta \bm{\phi}_{\kappa}(k)\right\|^2\\
%      +\left(1+\left(\frac{C_1}{m^2}+\frac{C_2}{m}\right)\left\|\Delta \bm{\phi}_{\kappa}(k)\right\|\right)\left\|\bbz_{k}-\bbz^\star_{k}\right\|. 
      \left\|\bbz_{k+1}^0-\bbz^\star_{k+1}\right\| \leq \frac{L}{m}\delta_1B^2T_s^2
      +\left(1+\frac{L}{m}\delta_2{B}T_s\right)\left\|\bbz_{k}-\bbz^\star_{k}\right\|. 
    \end{split}
    \end{equation}
\end{proposition}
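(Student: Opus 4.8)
The plan is to route the error through the ``predicted optimal'' iterate $\bbz_{k+1|k}^\star$ of \eqref{eqn_optimum_prediction} and apply the triangle inequality,
\begin{equation*}
\left\|\bbz_{k+1}^0-\bbz_{k+1}^\star\right\|\;\leq\;\left\|\bbz_{k+1}^0-\bbz_{k+1|k}^\star\right\|+\left\|\bbz_{k+1|k}^\star-\bbz_{k+1}^\star\right\|.
\end{equation*}
The last term is precisely what Proposition \ref{prop_optimum_prediction} controls, namely by $\frac{L}{m}\delta_1 B^2T_s^2$. Hence the whole argument reduces to estimating $\left\|\bbz_{k+1}^0-\bbz_{k+1|k}^\star\right\|$, i.e., comparing the prediction step \eqref{eqn_prediction} seeded at $\bbz_k$ with the \emph{same} step seeded at the exact solution $\bbz_k^\star$. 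A key simplification I would exploit is that both predictions use the identical state increment $\bbx(k+1)-\bbx(k)=\Delta\bm{\phi}_\kappa(k)$, whose norm is at most $BT_s$ by Assumption \ref{assumption_bound_diff_states}, and that the diagonal and cross Hessians entering the two formulas are evaluated at the \emph{same} state $\bbx(k)$, so only Lipschitz continuity in the $\bbz$ argument is needed.

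Abbreviating $A:=\nabla^2_{\bbz\bbz}\ccalL(\bbx(k),\bbz_k)$, $\tilde A:=\nabla^2_{\bbz\bbz}\ccalL(\bbx(k),\bbz_k^\star)$, $B:=\nabla^2_{\bbz\bbx}\ccalL(\bbx(k),\bbz_k)$ and $\tilde B:=\nabla^2_{\bbz\bbx}\ccalL(\bbx(k),\bbz_k^\star)$, the two prediction formulas subtract to
\begin{equation*}
\bbz_{k+1}^0-\bbz_{k+1|k}^\star=(\bbz_k-\bbz_k^\star)-\bigl(A^{-1}B-\tilde A^{-1}\tilde B\bigr)\,\Delta\bm{\phi}_\kappa(k),
\end{equation*}
so the task becomes bounding $\left\|A^{-1}B-\tilde A^{-1}\tilde B\right\|$. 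The first step is to guarantee that $A$ is invertible with $\left\|A^{-1}\right\|\leq 1/m$: from the Lipschitz bound of Assumption \ref{assumption_bounded_derivatives} and the hypothesis $\left\|\bbz_k-\bbz_k^\star\right\|<m/L$ one gets $\left\|A-\tilde A\right\|\leq L\left\|\bbz_k-\bbz_k^\star\right\|<m$, and since $\tilde A$ is symmetric with $\min_i|\lambda_i(\tilde A)|>2m$ by Assumption \ref{assumption_non_zero_eigenvalues}, for every unit vector $\bbv$ we have $\left\|A\bbv\right\|\geq\left\|\tilde A\bbv\right\|-\left\|(A-\tilde A)\bbv\right\|>2m-m=m$; hence $\left\|A^{-1}\right\|\leq 1/m$ and $\left\|\tilde A^{-1}\right\|\leq 1/(2m)$. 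Then I would use the identity $A^{-1}B-\tilde A^{-1}\tilde B=(A^{-1}-\tilde A^{-1})B+\tilde A^{-1}(B-\tilde B)$ together with $A^{-1}-\tilde A^{-1}=A^{-1}(\tilde A-A)\tilde A^{-1}$, the uniform bound $\|B\|\leq C$ from Assumption \ref{assumption_bounded_derivatives} and the Lipschitz estimates $\left\|A-\tilde A\right\|,\left\|B-\tilde B\right\|\leq L\left\|\bbz_k-\bbz_k^\star\right\|$ to obtain
\begin{equation*}
\left\|A^{-1}B-\tilde A^{-1}\tilde B\right\|\leq\frac{1}{m}\cdot L\left\|\bbz_k-\bbz_k^\star\right\|\cdot\frac{1}{2m}\cdot C+\frac{1}{2m}\cdot L\left\|\bbz_k-\bbz_k^\star\right\|=\frac{L}{m}\,\delta_2\left\|\bbz_k-\bbz_k^\star\right\|,
\end{equation*}
with $\delta_2=\tfrac12(C/m+1)$ as in \eqref{eqn_delta2}.

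Putting the pieces together, $\left\|\bbz_{k+1}^0-\bbz_{k+1|k}^\star\right\|\leq\left\|\bbz_k-\bbz_k^\star\right\|+\left\|\Delta\bm{\phi}_\kappa(k)\right\|\,\frac{L}{m}\delta_2\left\|\bbz_k-\bbz_k^\star\right\|\leq\bigl(1+\frac{L}{m}\delta_2 BT_s\bigr)\left\|\bbz_k-\bbz_k^\star\right\|$, and adding the Proposition \ref{prop_optimum_prediction} term yields \eqref{eqn_bound_prop_error}. I expect the genuinely delicate point to be the norm control of $A$ at the \emph{non-optimal} seed $\bbz_k$: Assumption \ref{assumption_non_zero_eigenvalues} only pins down the spectrum at $\bbz^\star(\bbx)$, and it is exactly the hypothesis $\left\|\bbz_k-\bbz_k^\star\right\|<m/L$ — the radius of the quadratic-convergence region \eqref{eqn_qcr} — that, via Lipschitz continuity of $\nabla^2_{\bbz\bbz}\ccalL$, keeps the prediction step well defined and the coefficient of $\left\|\bbz_k-\bbz_k^\star\right\|$ sharp. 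One must also take care to choose the split of $A^{-1}B-\tilde A^{-1}\tilde B$ that produces precisely $\delta_2$ rather than a looser constant.
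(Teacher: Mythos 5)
Your proof is correct and follows essentially the same route as the paper's: the same triangle-inequality split through $\bbz_{k+1|k}^\star$, the same add-and-subtract decomposition of $\nabla^2_{\bbz\bbz}\ccalL(k)^{-1}\nabla^2_{\bbz\bbx}\ccalL(k)-\nabla^2_{\bbz\bbz}\ccalL^\star(k)^{-1}\nabla^2_{\bbz\bbx}\ccalL^\star(k)$, and the same eigenvalue bounds $1/m$ and $1/(2m)$ yielding exactly $\delta_2$. The only cosmetic differences are that you re-derive inline the spectral bound on the Hessian at the non-optimal seed (the paper delegates this to Lemma \ref{lemma_bound_gradient}) and that your temporary symbol $B$ for the cross-Hessian collides with the constant $B$ of Assumption \ref{assumption_bound_diff_states}.
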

\begin{proof}
See Appendix \ref{app_prop_error}.
  \end{proof}
%
%%%%%%%%%%%%%%%%%%%%%%%%%%%%%%%%%%%%%%%%%%%%%%%%%%%%%%%%%%%%%%%%%%%%%%%
%%%%%%%%% M A I N     M A T T E R %%%%%%%%%%%%%%%%%%%%%%%%%%%%%%%%%%%%%
%%%%%%%%%%%%%%%%%%%%%%%%%%%%%%%%%%%%%%%%%%%%%%%%%%%%%%%%%%%%%%%%%%%%%%%
%
The bound of the error of the predicted step is a function that depends on ${B}T_s$ which is measure of how much the system is allowed to change and $\left\|\bbz_k-\bbz_k^\star\right\|$, which is how good is our solution for the problem at time $k$. The slower the system is, and the more accurate the solution at time $k$ is, the smaller the tracking error. And thus, the easier it is to guarantee that the predicted step $\bbz_{k+1}^0$ is in the quadratic convergence region of the solution of \eqref{eqn_optimization_problem} at time $k$. These two quantities define a trade-off that relates the speed of the system of interest and the computational effort that is required to control it. In particular, the next proposition establishes a bound on the accuracy required in the solution of \eqref{eqn_optimization_problem} in order to guarantee that $\bbz_{k+1}^0$ is indeed in the quadratic convergence region depending on the state's variation in a control interval. 
%
%%%%%%%%%%%%%%%%%%%%%%%%%%%%%%%%%%%%%%%%%%%%%%%%%%%%%%%%%%%%%%%%%%%%%%%%%%%%
%%%%%%%%%%%%%%%%%%%% P R O P O S I T I O N %%%%%%%%%%%%%%%%%%%%%%%%%%%%%%%%%%%%%%%%%%
%%%%%%%%%%%%%%%%%%%%%%%%%%%%%%%%%%%%%%%%%%%%%%%%%%%%%%%%%%%%%%%%%%%%%%%%%%%%%%%%%%%%%%%
%
\begin{proposition}\label{prop_qcr}
  Let Assumptions \ref{assumption_bound_diff_states}--\ref{assumption_bounded_derivatives} hold. Further let $\delta_1, \delta_2$  be the constants defined in \eqref{eqn_delta1}, \eqref{eqn_delta2} and let $L$ and $m$ be the constants in assumptions \ref{assumption_non_zero_eigenvalues} and \ref{assumption_bounded_derivatives}. If the variation between consecutive states satisfy $B^2T_s^2<m^2(L^2\delta_1)^{-1}$, then, $\bbz_{k+1}^0$, computed as in \eqref{eqn_prediction}, is in the quadratic convergence region of problem \eqref{eqn_optimization_problem} at time $k+1$, if the problem at time $k$ has been solved with accuracy, at least $\eta m/L$, where $\eta$ satisfies 
  \begin{equation}\label{eqn_eta_inequality}
    \eta\leq \frac{1-\delta_1\frac{L^2}{m^2} B^2T_s^2}{1+\delta_2\frac{L}{m}{B}T_s}.
  \end{equation}
\end{proposition}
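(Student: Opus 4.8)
The plan is to derive the statement directly from the tracking bound of Proposition~\ref{prop_error}. First I would note that, by the definition of the quadratic convergence region in \eqref{eqn_qcr}, the region associated with problem \eqref{eqn_optimization_problem} at time $k+1$ is the ball $\{\bbz : \left\|\bbz-\bbz_{k+1}^\star\right\|\leq m/L\}$, so it suffices to prove $\left\|\bbz_{k+1}^0-\bbz_{k+1}^\star\right\|\leq m/L$. Before invoking Proposition~\ref{prop_error} I would check its hypothesis $\left\|\bbz_k-\bbz_k^\star\right\|<m/L$: the standing condition $B^2T_s^2<m^2(L^2\delta_1)^{-1}$ makes the numerator $1-\delta_1\frac{L^2}{m^2}B^2T_s^2$ of \eqref{eqn_eta_inequality} strictly positive, and since $\delta_2\frac{L}{m}BT_s\geq 0$ the right-hand side of \eqref{eqn_eta_inequality} is strictly less than one; hence any $\eta$ satisfying \eqref{eqn_eta_inequality} obeys $0<\eta<1$, so an accuracy of $\eta m/L$ indeed gives $\left\|\bbz_k-\bbz_k^\star\right\|\leq \eta m/L<m/L$.

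With the hypothesis verified, I would substitute $\left\|\bbz_k-\bbz_k^\star\right\|\leq \eta m/L$ into the bound \eqref{eqn_bound_prop_error} of Proposition~\ref{prop_error}, obtaining
\[
  \left\|\bbz_{k+1}^0-\bbz_{k+1}^\star\right\|
  \leq \frac{L}{m}\delta_1 B^2T_s^2
  + \left(1+\frac{L}{m}\delta_2 BT_s\right)\eta\,\frac{m}{L}.
\]
Multiplying both sides by $L/m$, the requirement that the right-hand side be at most $m/L$ becomes
\[
  \frac{L^2}{m^2}\delta_1 B^2T_s^2 + \left(1+\frac{L}{m}\delta_2 BT_s\right)\eta \leq 1,
\]
which, solving for $\eta$, is exactly the inequality \eqref{eqn_eta_inequality}. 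Thus whenever $\eta$ satisfies \eqref{eqn_eta_inequality} the predicted seed $\bbz_{k+1}^0$ lies in the quadratic convergence region at time $k+1$, which is the claim.

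There is no real analytical obstacle here; the proposition is an algebraic repackaging of Proposition~\ref{prop_error}. The only points that need a little care are (i) verifying, as above, that the standing condition $B^2T_s^2<m^2(L^2\delta_1)^{-1}$ forces $\eta<1$ so that Proposition~\ref{prop_error} is actually applicable, and (ii) keeping track of the fact that the accuracy at time $k$ is measured as $\eta m/L$ rather than $\eta$, so that the factors of $m/L$ cancel correctly in the final rearrangement.
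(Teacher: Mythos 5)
Your proposal is correct and follows essentially the same route as the paper: substitute the accuracy hypothesis $\left\|\bbz_k-\bbz_k^\star\right\|\leq \eta m/L$ into the bound of Proposition~\ref{prop_error}, require the resulting right-hand side to be at most $m/L$, and rearrange to obtain \eqref{eqn_eta_inequality}. Your additional check that the standing condition $B^2T_s^2<m^2(L^2\delta_1)^{-1}$ forces $\eta<1$, so that the hypothesis of Proposition~\ref{prop_error} actually holds, is a small but worthwhile point of care that the paper leaves implicit.
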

\begin{proof}
  Using the result of Proposition \ref{prop_error} and the assumption that  $\left\|\bbz_k-\bbz_k^\star\right\|\leq \eta m/L$, it follows that 
  \begin{equation}
      \left\|\bbz_{k+1}^0-\bbz^\star_{k+1}\right\| \leq \frac{L}{m}\delta_1B^2T_s^2
      +\left(1+\frac{L}{m}\delta_2{B}T_s\right)\frac{\eta m}{L}.
  \end{equation}
  Recall that the quadratic convergence region is given by all the points that are at a distance smaller than $m/L$ (cf., \eqref{eqn_qcr}), this is
  \begin{equation}
    \left\|\bbz-\bbz_{k+1}^\star\right\| \leq \frac{m}{L}.
  \end{equation}
 Which implies that the seed $\bbz_{k+1}^0$ is in the quadratic convergence region if
  \begin{equation}
   \delta_1 \frac{L^2}{m^2}  B^2T_s^2+\delta_2\frac{L}{m}{B}T_s\eta -(1-\eta) \leq 0.
  \end{equation}
  Solving for $\eta$ yields \eqref{eqn_eta_inequality}. This completes the proof.
\end{proof}
The previous proposition establishes a trade-off between how fast the system is allowed to vary and how accurate the solution of the problem at time $k$ needs to be for the predicted iterate $\bbz_{k+1}^0$ to be in the quadratic convergence region. If one were to solve the problem exactly, i.e. with $\eta =0$, the previous bound reduces to ${B}T_s\leq m/(L\sqrt{\delta_1})$ as discussed after Proposition \ref{prop_optimum_prediction}. On the other hand if one were to solve the problem with the minimum allowed accuracy, i.e., $\eta=1$ we would require the system to satisfy ${B}T_s = 0$, i.e., not change over time. In summary, slower variations of the systems allow for less accuracy in the solution. 

Using the fact that the predicted iterate can be placed inside the quadratic convergence region by finding a solution at time $k$ that is sufficiently accurate \textemdash under the hypothesis of Proposition \ref{prop_qcr} \textemdash we set focus in deriving a bound on the number of iterations required to do so. This bound is a function of order $\log (\log(\eta))$, where $\eta$ satisfies \eqref{eqn_eta_inequality}. Thus, suggesting the low computational cost of the overall scheme. This result follows from the analysis of the Newton step in the quadratic convergence region as we formalize next. 
%
%%%%%%%%%%%%%%%%%%%%%%%%%%%%%%%%%%%%%%%%%%%%%%%%%%%%%%%%%%%%%%%%%%%%%%%%%
%%%%%%%%%%%%%%% T H E O R E M %%%%%%% %%%%%%%%%%%%%%%%%%%%%%%%%%%%%%%%%%%
%%%%%%%%%%%%%%%%%%%%%%%%%%%%%%%%%%%%%%%%%%%%%%%%%%%%%%%%%%%%%%%%%%%%%%%%
\begin{theorem}\label{theo_newton}
  Let $\bm{\phi}_k(k,\bbx(0),\bbz_0)$ be the solution of the dynamical system \eqref{eqn_closed_loop}, with initial conditions $\bbx(0)$ and $\bbz_0$. Let Assumptions \ref{assumption_bound_diff_states}--\ref{assumption_bounded_derivatives} hold and consider the sequence $\{\bbz_k^\star, k\geq 0\}$ of solutions to problem \eqref{eqn_optimization_problem} for the trajectory $\bm{\phi}_k(k,\bbx(0),\bbz_0)$. Let $\delta_1$ and $\delta_2$ be the constants defined in \eqref{eqn_delta1} and \eqref{eqn_delta2}  and assume that the difference between states of the system in two consecutive time steps is such that $B^2T_s^2<m^2(L^2\delta_1)^{-1}$. Then, if $\left\|\bbz_0-\bbz_0^\star\right\| \leq \eta m /L$, the prediction step \eqref{eqn_prediction} along with $N$ correction steps \eqref{eqn_correction}, with $N$ satisfying 
%
%  \begin{equation}\label{eqn_eta_inequality}
%    \eta\leq \frac{1-\delta_1 BT_s^2}{1+\delta_2\sqrt{B}T_s}.
%  \end{equation}
  %
%  Then, the number of iterations to achieve accuracy \blue{$ L/m \frac{1-\delta_1 BT_s^2}{1+\delta_2\sqrt{B}T_s}$} is given by
  %{
  \begin{equation}\label{eqn_number_corrections}
    N \leq \log_2\left(1+\log_2\frac{1+\delta_2{B}T_s\frac{L}{m}}{1-\frac{L^2}{m^2}\delta_1 B^2T_s^2}\right),
  \end{equation}
  ensures that $\left\|\bbz_k-\bbz_k^\star\right\|\leq \eta m/L$ for all $k\geq 0$.
  %}
%  \red{
  %
%\begin{equation}\label{eqn_bound_Ts}
%  \begin{split}
%BT_s^2 \leq
%    \min\left\{1,\frac{m\eta}{L\left((1+\delta_2)\eta+\delta_1\right)^2}\right\},
%\end{split}
%\end{equation}
%
%where $B$ is the constant defined in Assumption \ref{assumption_bound_diff_states}
%Then, if
%\begin{equation}
  %  \left\|\bbz_0-\bbz_0^\star\right\| < \eta\left\| \bm{\phi}_{\kappa}(1,\bbx(0))- \bbx(0)\right\|^2\leq \frac{m}{C_1},
%    \left\|\bbz_0-\bbz_0^\star\right\| < \eta BT_s^2
%\end{equation}
%  the following holds for all $k\geq 0$
  %
%  \begin{equation}
    %\left\|\bbz_k-\bbz_k^\star \right\|\leq \eta \left\| \bm{\phi}_{\kappa}(k+1,\bbx(0))- \bm{\phi}_{\kappa}(k,\bbx(0))\right\|^2.
%    \left\|\bbz_k-\bbz_k^\star \right\|\leq \eta BT_s^2.
%  \end{equation}
%  }
\end{theorem}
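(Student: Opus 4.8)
The plan is to prove Theorem~\ref{theo_newton} by induction on the time index $k$, the invariant being exactly the desired conclusion, $\|\bbz_k-\bbz_k^\star\|\le \eta m/L$ (equivalently, $\bbz_k\in QCR(\bbz_k^\star)$). The base case $k=0$ is the hypothesis $\|\bbz_0-\bbz_0^\star\|\le\eta m/L$. For the inductive step I would assume $\|\bbz_k-\bbz_k^\star\|\le \eta m/L$ and first control the predicted seed \eqref{eqn_prediction}: since $\eta<1$ this bound places $\bbz_k$ strictly inside the quadratic convergence region at time $k$, so Proposition~\ref{prop_error} applies, and substituting $\|\bbz_k-\bbz_k^\star\|\le\eta m/L$ into \eqref{eqn_bound_prop_error} gives
\[
  \tfrac{L}{m}\bigl\|\bbz_{k+1}^0-\bbz_{k+1}^\star\bigr\|\ \le\ \delta_1\tfrac{L^2}{m^2}B^2T_s^2+\Bigl(1+\delta_2\tfrac{L}{m}BT_s\Bigr)\eta .
\]
Because $\eta$ is chosen to satisfy \eqref{eqn_eta_inequality} and $B^2T_s^2<m^2(L^2\delta_1)^{-1}$, the right-hand side is at most $1$; this is exactly the statement of Proposition~\ref{prop_qcr}, so $\bbz_{k+1}^0\in QCR(\bbz_{k+1}^\star)$.

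The second ingredient is the behaviour of the $N$ correction steps \eqref{eqn_correction}, which are ordinary unit-step Newton iterations on $\nabla_\bbz\ccalL(\bbx(k+1),\cdot)$. Here I would invoke the standard quadratic-convergence estimate. By Assumption~\ref{assumption_non_zero_eigenvalues} the symmetric (saddle-point) matrix $\nabla^2_{\bbz\bbz}\ccalL(\bbx(k+1),\bbz_{k+1}^\star)$ has all eigenvalues larger than $2m$ in modulus, so by the $L$-Lipschitz bound of Assumption~\ref{assumption_bounded_derivatives}, for every $\bbz$ with $\|\bbz-\bbz_{k+1}^\star\|\le m/L$ one has $\|\nabla^2_{\bbz\bbz}\ccalL(\bbx(k+1),\bbz)-\nabla^2_{\bbz\bbz}\ccalL(\bbx(k+1),\bbz_{k+1}^\star)\|\le m$, hence $\nabla^2_{\bbz\bbz}\ccalL(\bbx(k+1),\bbz)$ is invertible with inverse of norm at most $1/m$. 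The usual integral-remainder argument (cf.\ \cite[Section 9.5.3]{boyd2004convex}) then yields $\|\bbz^+-\bbz_{k+1}^\star\|\le \tfrac{L}{2m}\|\bbz-\bbz_{k+1}^\star\|^2$ for the Newton update $\bbz^+$ of any such $\bbz$. Writing $a_j:=\tfrac{L}{m}\|\bbz_{k+1}^j-\bbz_{k+1}^\star\|$, this reads $a_{j+1}\le \tfrac12 a_j^2$; since $a_0\le 1$, the bound $a_j\le 1$ propagates, so every correction iterate stays in the quadratic convergence region and the estimate is legitimately applied at each step.

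Unrolling $a_{j+1}\le\tfrac12 a_j^2$ from $a_0\le 1$ gives $a_j\le 2(a_0/2)^{2^j}\le 2^{\,1-2^j}$, hence $\tfrac{L}{m}\|\bbz_{k+1}-\bbz_{k+1}^\star\|=a_N\le 2^{\,1-2^N}$. To restore the invariant it suffices that $2^{\,1-2^N}\le\eta$, i.e.\ $N\ge\log_2\!\bigl(1+\log_2(1/\eta)\bigr)$; taking $\eta$ equal to the largest value admitted by \eqref{eqn_eta_inequality}, namely $\eta=(1-\delta_1 L^2m^{-2}B^2T_s^2)/(1+\delta_2 Lm^{-1}BT_s)$, converts this requirement into precisely the count \eqref{eqn_number_corrections}. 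This closes the induction and establishes $\|\bbz_k-\bbz_k^\star\|\le\eta m/L$ for all $k\ge 0$; the doubly logarithmic form of \eqref{eqn_number_corrections} is what gives the advertised $O(\log_2\log_2(1/\eta))$ per-step complexity.

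The part that I expect to need the most care is the coordination of the two mechanisms: one must first guarantee that the predicted seed $\bbz_{k+1}^0$ lands in $QCR(\bbz_{k+1}^\star)$ \emph{before} any correction step is applied — which is exactly why Proposition~\ref{prop_qcr}, and through it both the smallness hypothesis $B^2T_s^2<m^2(L^2\delta_1)^{-1}$ and the accuracy threshold \eqref{eqn_eta_inequality} on $\eta$, are indispensable — and then verify that the Newton iterates never leave that region, so the quadratic recursion $a_{j+1}\le\tfrac12 a_j^2$ is valid all the way to $j=N$. Conceptually the only genuinely non-routine point is that $\nabla^2_{\bbz\bbz}\ccalL$ is an indefinite KKT matrix rather than a positive definite Hessian, so the quadratic convergence of the correction step must be extracted from uniform invertibility plus Lipschitz continuity (Assumptions~\ref{assumption_non_zero_eigenvalues}--\ref{assumption_bounded_derivatives}) rather than from convexity; the remainder — unrolling the recursion and matching constants to \eqref{eqn_number_corrections} — is bookkeeping.
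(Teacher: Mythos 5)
Your proof is correct and follows essentially the same route as the paper: induction on $k$, Proposition~\ref{prop_qcr} to place the predicted seed in $QCR(\bbz_{k+1}^\star)$, the integral-remainder Newton estimate (via the uniform eigenvalue bound of Lemma~\ref{lemma_bound_gradient} and the Lipschitz Hessian) to get $a_{j+1}\le\tfrac12 a_j^2$, and unrolling to obtain the $\log_2\log_2$ count. Your explicit observation that $a_j\le 1$ propagates, so every correction iterate remains in the quadratic convergence region and the recursion is legitimately applied at each step, is a point the paper leaves implicit; otherwise the two arguments coincide.
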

\begin{proof}
  The proof is inspired by that in \cite{simonetto2016class}. Let us show the result by induction. The statement holds for $k=0$ by hypothesis. Hence, we are left to show that if $\left\|\bbz_k-\bbz_k^\star\right\| \leq \eta m/L$, the same holds for $k+1$. In Proposition \eqref{prop_qcr} we have established that the predicted seed is such that $\bbz_{k+1}^0 \in QCR(\bbz_{k+1}^\star)$ under the assumption that $\left\|\bbz_k-\bbz_k^\star\right\| \leq \eta m/L$. Thus, we just need to show that $N$ correction steps with $N$ bounded as in \eqref{eqn_number_corrections} yield a solution with accuracy $\eta m/L$. Hence, we proceed to analyze the correction steps \eqref{eqn_correction}. Let us write the difference $\bbz_{k+1}^{j+1}-\bbz_{k+1}^\star$ as 
\begin{equation}\label{eqn_newton_1}
  \begin{split}
  \bbz_{k+1}^{j+1}-\bbz_{k+1}^\star = \bbz_{k+1}^j-\bbz_{k+1}^\star - \nabla^2_{\bbz\bbz}\ccalL(\bbz_{k+1}^j)^{-1}\nabla_\bbz\ccalL(\bbz_{k+1}^j),
 \end{split}
  \end{equation}
where we have dropped the state of the system $\bm{\phi}_{\kappa}(k+1,\bbx(0))$ in order to simplify the notation.  For the same reasons, define $\Delta \bbz = \bbz_{k+1}^\star-\bbz_{k+1}^{j}$. Then, using the Fundamental Theorem of Calculus along with the fact that $\nabla_{\bbz}\ccalL(\bbz_{k+1}^\star) = \bm{0}$ it follows that
\begin{equation}
\nabla_{\bbz}\ccalL(\bbz_{k+1}^{j}) = -\int_{0}^1\nabla_{\bbz\bbz}^2\ccalL(\bbz_{k+1}^{j}+\theta\Delta\bbz)\Delta\bbz d\theta.
\end{equation}
Then, we can write \eqref{eqn_newton_1} as 
\begin{equation}\label{eqn_newton_2}
  \begin{split}
  \bbz_{k+1}^{j+1}-\bbz_{k+1}^\star = \nabla^2_{\bbz\bbz}\ccalL(\bbz_{k+1}^j)^{-1}\\\int_{0}^1\left(\nabla_{\bbz\bbz}^2\ccalL(\bbz_{k+1}^j+\theta\Delta\bbz)-\nabla_{\bbz\bbz}^2\ccalL(\bbz_{k+1}^\star\right)\Delta\bbz d\theta,
 \end{split}
  \end{equation}
Using the Lipschitz continuity of $\nabla_{\bbz\bbz}\ccalL(\bbx,\bbz)$ (cf., Assumption \ref{assumption_bounded_derivatives}) and the fact that its minimum eigenvalue is larger than $m$ for any $\bbz \in QCR(\bbz_{k+1}^\star)$ (cf., Lemma \ref{lemma_bound_gradient}), we can upper bound the norm $\left\|\bbz_{k+1}^{j+1}-\bbz_{k+1}^\star\right\|$ as 
\begin{equation}
  \begin{split}
    \left\|\bbz_{k+1}^{j+1}-\bbz_{k+1}^\star\right\| \leq \frac{1}{m}\int_0^1 L\theta\left\|\Delta\bbz\right\|^2 d\theta.
    \end{split}
\end{equation}
Substituting $\Delta\bbz$ by its definition and integrating $\theta$ between $0$ and $1$ yields the following bound establishing quadratic convergence of the Newton step
 \begin{equation}
\left\|\bbz_{k+1}^{j+1}-\bbz_{k+1}^\star\right\| \leq \frac{L}{2m}\left\|\bbz_{k+1}^j-\bbz_{k+1}^\star\right\|^2.
  \end{equation}
  Applying the recursion it is possible to bound the norm of the $N-th$ correction step at the iterate $k+1$ by 
  \begin{equation}
\frac{L}{m}\left\|\bbz^{N}_{k+1}-\bbz_{k+1}^\star\right\| \leq 2^{-(2^N-1)}\left(\frac{L}{m}\left\|\bbz_{k+1}^0-\bbz_{k+1}^\star\right\|\right)^{(2N)}.
  \end{equation}
  Since we have that the prediction step is in the quadratic convergence region it follows that ${L}/{m}\left\|\bbz_{k+1}^0-\bbz_{k+1}^\star\right\|\leq 1$. Thus, if one requires to solve the problem with accuracy $\eta$, the number of iterations required is
  \begin{equation}
    N \leq  \log_2\left(1+\log_2(1/\eta))\right),
    \end{equation}
  To complete the proof of the Theorem, use the result in Proposition \ref{prop_qcr} that establishes the accuracy required in the solution at time $k$ for the predicted step to be in the quadratic convergence region of the problem at time $k+1$.  
\end{proof}
%%%%%%%%%%%%%%%%%%%%%%%%%%%%%%%%%%%%%%%%%%%%%%%%%%%%%%%%%%%%%%%%%%%%%%%
%%%%%%%%%%%%%%%%%%%%% M A I N    M A T T E R %%%%%%%%%%%%%%%%%%%%%%%%%%
%%%%%%%%%%%%%%%%%%%%%%%%%%%%%%%%%%%%%%%%%%%%%%%%%%%%%%%%%%%%%%%%%%%%%%%
%
The implication of the previous result is that the prediction--correction schemes yield a control law whose input is not very different from solving exactly the receding horizon problem \eqref{eqn_optimization_problem}. However, the main benefit is that the complexity of the algorithm proposed is of order $\log_2\log_2(1/\eta)$, where $\eta$ is the accuracy of the solution. {In particular, \textemdash as long as the variation of the system is not close to the limiting value assumed in Proposition \ref{prop_qcr} \textemdash the complexity of the algorithm is of order $O(\log_2(\log_2({B}T_s))$.} Conversely, we can establish a bound on the maximum variation allowed so that the total cost of the algorithm is that of two Hessian inversions. This is the subject of the following corollary
\begin{corollary}\label{coro}
  Under the hypotheses of Theorem \ref{theo_newton} in order to require at most one correction step we require the variation of the system to be bounded by 

  \begin{equation}
{B}T_s \leq \frac{m}{L}\frac{\sqrt{\delta_2^2+8\delta_1}-\delta_2}{4\delta_1}
  \end{equation}
  \end{corollary}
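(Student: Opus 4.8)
The plan is to impose on the bound \eqref{eqn_number_corrections} from Theorem \ref{theo_newton} the requirement that at most one correction step be performed, i.e.\ that the right-hand side of \eqref{eqn_number_corrections} be at most $1$, and then solve the resulting inequality for $BT_s$. First I would record that the standing assumption $B^2T_s^2 < m^2(L^2\delta_1)^{-1}$ makes $1-\frac{L^2}{m^2}\delta_1 B^2T_s^2$ strictly positive, so the fraction appearing in \eqref{eqn_number_corrections} is well defined and positive and the manipulations below preserve the direction of the inequalities.

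Next I would unwind the two logarithms. The condition $\log_2\!\big(1+\log_2(\cdot)\big)\le 1$ is equivalent to $\log_2(\cdot)\le 1$, i.e.\ $(\cdot)\le 2$, so the requirement becomes
\begin{equation*}
  \frac{1+\delta_2 BT_s\frac{L}{m}}{1-\frac{L^2}{m^2}\delta_1 B^2T_s^2}\ \le\ 2 .
\end{equation*}
Writing $x:=\frac{L}{m}BT_s\ge 0$ and clearing the (positive) denominator gives $1+\delta_2 x\le 2(1-\delta_1 x^2)$, which rearranges into the quadratic inequality
\begin{equation*}
  2\delta_1 x^2+\delta_2 x-1\ \le\ 0 .
\end{equation*}

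Finally I would solve this quadratic. Since $\delta_1>0$ the parabola opens upwards and, because $\delta_2>0$, it has exactly one nonnegative root, namely $x^\star=\frac{1}{4\delta_1}\big(\sqrt{\delta_2^2+8\delta_1}-\delta_2\big)$, so the inequality holds precisely for $0\le x\le x^\star$. Substituting $x=\frac{L}{m}BT_s$ back in yields
\begin{equation*}
  BT_s\ \le\ \frac{m}{L}\,\frac{\sqrt{\delta_2^2+8\delta_1}-\delta_2}{4\delta_1},
\end{equation*}
which is the claimed bound. As a consistency check, at $x=x^\star$ one has $2\delta_1(x^\star)^2 = 1-\delta_2 x^\star < 1$, hence $\delta_1(x^\star)^2<\tfrac12$; so this bound is strictly stronger than the hypothesis $B^2T_s^2<m^2(L^2\delta_1)^{-1}$ of Theorem \ref{theo_newton}, as it should be.

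There is essentially no obstacle here: the argument is a direct algebraic unwinding of \eqref{eqn_number_corrections}. The only points needing care are the sign of the denominator (handled by the standing assumption) and the observation that, once $\bbz_{k+1}^0$ lies in the quadratic convergence region, a single correction step only decreases the error, so performing exactly one such step preserves the invariant $\|\bbz_k-\bbz_k^\star\|\le \eta m/L$ maintained in Theorem \ref{theo_newton}.
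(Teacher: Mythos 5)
Your proposal is correct and follows essentially the same route as the paper: require the right-hand side of \eqref{eqn_number_corrections} to be at most one, unwind the two logarithms to get the ratio bounded by $2$, clear the (positive) denominator to obtain the quadratic $2\delta_1 x^2+\delta_2 x-1\le 0$ in $x=\frac{L}{m}BT_s$, and take its nonnegative root. Your added remarks on the positivity of the denominator and the consistency with the hypothesis $B^2T_s^2<m^2(L^2\delta_1)^{-1}$ are correct refinements of what the paper leaves implicit.
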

\begin{proof}
Notice that in order to be able to use only one prediction step, we require the bound in \eqref{eqn_number_corrections} to be less than one. This translates in the following condition 
  \begin{equation}
\frac{1+\delta_2{B}T_s\frac{L}{m}}{1-\frac{L^2}{m^2}\delta_1B^2T_s^2} \leq 2.
  \end{equation}
Which in turn can be rewritten as
  \begin{equation}
2\frac{L^2}{m^2}\delta_1 B^2T_s^2+\delta_2{B}T_s\frac{L}{m}-1 \leq 0.
  \end{equation}
  Finding the roots of the second order polynomial on $BT_s$ completes the proof of the result. 
\end{proof}
The previous result establishes a bound on the maximum allowable variation of the states of the system that makes the control possible with only one correction step, or equivalently with a computational cost that is equivalent to performing two Newton steps per control action computed.

The fact that the optimization problem \eqref{eqn_optimization_problem} can be solved with accuracy $\eta$, can be interpreted as if the control law designed is a perturbed version of MPC. Building on this fact -- and similar to \cite{paternain2018prediction}--  we can establish that the closed loop system \eqref{eqn_closed_loop} is Input to State Stable. This is the subject of the next section. 

%!TEX root = root.tex

\section{Input to state stability}\label{sec_stability}
We work next towards establishing input to state stability of the system \eqref{eqn_closed_loop} with respect to the approximation errors. In the previous section we showed that the error in solving the receding horizon problem \eqref{eqn_optimization_problem} by using the prediction-correction approach \eqref{eqn_prediction}--\eqref{eqn_correction} is bounded by $\eta m/L$ (cf., Theorem \ref{theo_newton}). Define then, the sequence of disturbances $\left\{\bbd(k), k \geq 0\right\}$ as
\begin{equation}
\bbd(k) = \bbu_{MPC}(k) - \bbu(k),
  \end{equation}
and write the dynamical system \eqref{eqn_closed_loop} as
\begin{equation}\label{eqn_closed_loop_new}
\bbx(k+1) = f_{\kappa}(\bbx(k)) = \tilde{f}_{\kappa_{MPC}}(\bbx(k), \bbd(k)),
\end{equation}
where $ \tilde{f}_{\kappa_{MPC}}(\bbx(k), \bbd(k))$ is such that
\begin{equation}\label{eqn_nominal_closed_loop}
 \tilde{f}_{\kappa_{MPC}}(\bbx(k), \bm{0}) = f_{\kappa_{MPC}}(\bbx(k)).
 \end{equation}
We define input to state stability of the closed loop system \eqref{eqn_closed_loop} where the input is the sequence of disturbances. These distrubances are the errors in solving \eqref{eqn_optimization_problem} using the prediction-correction scheme \eqref{eqn_prediction}--\eqref{eqn_correction}.
\begin{definition}[\bf Input to State Stability]\label{def_iss}
    Let $\tilde{g}:\mathbb{R}^n\times \mathbb{R}^w\to \mathbb{R}^n$ and consider the following discrete time dynamical system
    \begin{equation}\label{eqn_perturbed_dynamical_system}
      \bbx(k+1) = \tilde{g}(\bbx(k),\bbd(k)), 
      \end{equation}
with solution denoted by $\bm{\phi}_{\kappa}(k,\bbx(0))$. System \eqref{eqn_perturbed_dynamical_system} is input to state stable if there exists a $\ccalK\ccalL$-function $\beta$ and a $\ccalK$-function $\gamma$ such that for all initial state $\bbx(0)$ and sequence of disturbances $\left\{\bbd(k),k\geq0\right\}$ satisfying
  \begin{equation}\label{eqn_iss}
\left\|\bm{\phi}_{\kappa}(k,\bbx(0)) \right\|\leq \beta\left(\left\|\bbx(0) \right\|,k\right)+\gamma\left(\max_{j<k}\left\|\bbd(j)\right\|\right).
    \end{equation}
  We say that \eqref{eqn_perturbed_dynamical_system} is locally input to state stable if there exists constants $c_1$ and $c_2$ such that \eqref{eqn_iss} holds for any initial state $\left\|\bbx(0)\right\|\leq c_1$ and disturbances $\left\|\bbd(k)\right\|\leq c_2$. 
  \end{definition}
From Theorem \ref{theo_newton} one has that $\left\|\bbd(k) \right\|<\eta m/L$ for all $k \geq 0$. The bounds on the disturbances along with the following results in \cite{limon2009input} allows us to establish local input to state stability and input to state stability with stronger assumptions.
%%%%%%%%%%%%%%%%%%%%%%%%%%%%%%%%%%%%%%%%%%%%%%%%%%%%%%%%%%%%%%%%%%%%%%%%%%%%%%%%%%%%%%%%%%%%%%%%%%%%%%%%%%%%%%%%%%%%%%%%%%% T H E O R E M %%%%%%%%%%%%%%%%%%%%%%%%%%%%%%%%%%%%%%%%%%%%%%%%%%%%%%%%%%%%%%%%%%%%%%%%%%%%%%%%%%%%%%%%%%%%%%%%%%%%%%%%%%%%%%%%%%%%%%%%%%%%%%%%%%%%%%%%%%%%%%%%%%%%%%%%%%%%%%%%%%%\
  \begin{theorem}[\bf Theorem 2 \cite{limon2009input}]\label{theo_iss}
 Let $\tilde{g}(\bbx,\bbd)$ in \eqref{eqn_perturbed_dynamical_system} be absolutely continuous in $\bbd$ for all $\bbx\in\mathbb{R}^n$ and for all $\bbd \in D\subset \mathbb{R}^w$, where $D$ is a compact set. If the nominal dynamical system ${g}(\bbx(k)) = \tilde{g}(\bbx(k),\bm{0})$ is asymptotically stable, the perturbed system \eqref{eqn_perturbed_dynamical_system} is Input to state stable if one of the following conditions holds
  \begin{itemize}
  \item Function ${g}(\bbx)$ is absolutely continuous in $\bbx\in\mathbb{R}^n$
  \item There exists a absolutely continuous Lyapunov function $V(\bbx)$ for the system $\bbx(k+1)={g}(\bbx(k))$.
  \end{itemize}
  \end{theorem}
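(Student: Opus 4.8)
Since Theorem~\ref{theo_iss} is quoted essentially verbatim from \cite{limon2009input}, the "proof" in the paper is presumably just a pointer to that reference; what follows is the argument one would reconstruct, which is the standard converse-Lyapunov-plus-perturbation computation. The plan is to exhibit an ISS-Lyapunov function for the perturbed system \eqref{eqn_perturbed_dynamical_system} and then invoke the discrete-time ISS-Lyapunov theorem to produce the $\ccalK\ccalL$-function $\beta$ and the $\ccalK$-function $\gamma$ in \eqref{eqn_iss}.

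\emph{First}, I would obtain a Lyapunov function $V$ for the nominal system $\bbx(k+1)=g(\bbx(k))$. Under the second bullet such a $V$ is handed to us, absolutely continuous, with $\alpha_1(\|\bbx\|)\le V(\bbx)\le\alpha_2(\|\bbx\|)$ and $V(g(\bbx))-V(\bbx)\le-\alpha_3(\|\bbx\|)$ for some $\ccalK_\infty$ functions $\alpha_1,\alpha_2,\alpha_3$. Under the first bullet one instead has that $g$ is absolutely continuous and the nominal system is asymptotically stable, and a converse Lyapunov theorem for discrete-time systems (Jiang--Wang type) produces such a $V$, which inherits from the continuity of $g$ enough regularity to play the same role. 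In both cases we then proceed with an absolutely continuous $V$ of this form.

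\emph{Second}, along the perturbed dynamics I would decompose the increment as
\begin{equation}
V(\tilde g(\bbx,\bbd))-V(\bbx)=\big(V(\tilde g(\bbx,\bbd))-V(\tilde g(\bbx,\bm{0}))\big)+\big(V(g(\bbx))-V(\bbx)\big),
\end{equation}
where the second bracket is at most $-\alpha_3(\|\bbx\|)$ by the nominal decrease. For the first bracket I would use that $\bbd\mapsto\tilde g(\bbx,\bbd)$ is absolutely continuous on the compact set $D$ and $V$ is absolutely continuous, so $\bbd\mapsto V(\tilde g(\bbx,\bbd))$ has a modulus of continuity that can be taken uniform in $\bbx$ over any compact set; majorizing that modulus by a $\ccalK$-function $\sigma$ gives $|V(\tilde g(\bbx,\bbd))-V(\tilde g(\bbx,\bm{0}))|\le\sigma(\|\bbd\|)$, hence
\begin{equation}
V(\bbx(k+1))\le V(\bbx(k))-\alpha_3(\|\bbx(k)\|)+\sigma(\|\bbd(k)\|).
\end{equation}
This is exactly an ISS-Lyapunov inequality in dissipation form, so the discrete-time ISS-Lyapunov theorem yields ISS, with $\beta$ and $\gamma$ assembled from $\alpha_1,\alpha_2,\alpha_3,\sigma$ through the usual comparison-lemma argument; the local version follows by keeping the estimates confined to the compact neighbourhoods on which they were derived.

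\emph{The main obstacle} is the second step: turning "absolute continuity of $\tilde g$ in $\bbd$ on $D$" together with "absolute continuity of $V$" into an honest $\ccalK$-bound $\sigma(\|\bbd\|)$ that is uniform in $\bbx$ over the region of interest — and, for the global (as opposed to local) ISS claim, over all of $\mathbb{R}^n$, which is precisely where the regularity demanded in the two bullets is used. The converse-Lyapunov construction in the first case is the other delicate point, since one must ensure the constructed $V$ is regular enough (absolutely continuous, not merely continuous) for the perturbation bound above to apply; this is exactly why the hypothesis asks either for such a $V$ outright or for $g$ itself to be absolutely continuous.
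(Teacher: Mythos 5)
The paper contains no proof of this statement: Theorem~\ref{theo_iss} is imported verbatim from the cited reference \cite{limon2009input} and the authors simply point there, so there is no in-paper argument to compare yours against. Your reconstruction is the standard one and, in outline, it is the same argument used in that reference: obtain a (converse) Lyapunov function for the nominal system, split the increment along the perturbed dynamics into the nominal decrease plus the term $V(\tilde g(\bbx,\bbd))-V(\tilde g(\bbx,\bm{0}))$, bound the latter by a $\ccalK$-function of $\|\bbd\|$ using the continuity hypotheses and the compactness of $D$, and conclude via the discrete-time ISS-Lyapunov theorem. The two bullets in the statement correspond exactly to the two ways of securing a sufficiently regular $V$, as you note.

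The caveats you flag yourself are the genuine delicate points, and it is worth being explicit that they are not fully discharged in your sketch: (i) for the \emph{global} ISS conclusion you need the modulus-of-continuity bound $\sigma(\|\bbd\|)$ to hold uniformly in $\bbx$ over the reachable set, not merely over an arbitrary compact set, and this is where the regularity hypotheses (absolute/uniform continuity rather than mere continuity) and Assumption~\ref{assumption_bounded_solution}-type boundedness of trajectories do real work; the local version in Corollary~\ref{coro_iss} is precisely the fallback when only continuity near the origin is available. (ii) In the converse-Lyapunov branch you assert that the constructed $V$ "inherits enough regularity" from $g$; that is true but is itself a nontrivial citation (Jiang--Wang), not a one-line consequence. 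Neither issue invalidates the approach — it is the correct proof strategy and matches the source — but a complete write-up would have to carry the uniformity through the comparison-lemma step explicitly.
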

\begin{corollary}[\bf Corollary 1\cite{limon2009input}]\label{coro_iss}
 Let $\tilde{g}(\bbx,\bbd)$ in \eqref{eqn_perturbed_dynamical_system} be continuous in a neighborhood of $\bbx=0$ and $\bbd=0$.  If the nominal dynamical system ${g}(\bbx(k)) = \tilde{g}(\bbx(k),\bm{0})$ is asymptotically stable, the perturbed system \eqref{eqn_perturbed_dynamical_system} is locally Input to state stable if one of the following conditions holds
      \begin{itemize}
  \item Function ${g}(\bbx)$ is continuous in a neighborhood of $\bbx = \bm{0}$
  \item There exists a Lyapunov function $V(\bbx)$ for the system $\bbx(k+1)={g}(\bbx(k))$ which is continuous in a neighborhood of $\bbx=\bm{0}$.
  \end{itemize}
\end{corollary}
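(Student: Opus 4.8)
This statement is quoted verbatim as Corollary 1 of \cite{limon2009input}, so strictly the ``proof'' is a pointer to that reference; what follows is the route I would take to reconstruct it. The plan is to localize the argument behind Theorem \ref{theo_iss}. First I would reduce the two alternative hypotheses to one by a converse-Lyapunov argument: asymptotic stability of the continuous nominal map $g(\bbx(k))=\tilde{g}(\bbx(k),\bm{0})$ yields a compact neighborhood $\ccalN$ of the origin and a Lyapunov function for $\bbx(k+1)=g(\bbx(k))$ that is continuous on $\ccalN$, so the first bullet (continuity of $g$) is subsumed by the second (existence of a continuous $V$), and it suffices to treat that case.

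Next I would restrict the perturbed dynamics to $\ccalN\times D$, with $D$ a small compact ball of disturbances around $\bm{0}$ chosen so that $\tilde{g}$ maps $\ccalN\times D$ into a slightly enlarged neighborhood on which $V$ is still defined and continuous. On the compact set $\ccalN\times D$, continuity upgrades to uniform continuity, which gives $V(\tilde{g}(\bbx,\bbd))\leq V(g(\bbx))+\sigma(\left\|\bbd\right\|)$ for some $\sigma\in\ccalK$ (write the difference $V(\tilde{g}(\bbx,\bbd))-V(g(\bbx))$ and bound it by the modulus of continuity of $V\circ\tilde{g}$ in its second argument). Combined with the nominal decrease inequality for $V$, this is precisely an ISS-Lyapunov inequality valid on $\ccalN$; the standard comparison-lemma machinery for $\ccalK\ccalL$ and $\ccalK$ functions then converts it into the estimate \eqref{eqn_iss}, with $c_1$ and $c_2$ fixed by the sizes of $\ccalN$ and $D$, which is exactly local ISS.

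The main obstacle is regularity: Theorem \ref{theo_iss} is stated with absolute-continuity (essentially Lipschitz) hypotheses, whereas here one only has continuity, so Theorem \ref{theo_iss} cannot be invoked on the restricted system as a black box. The gap is closed by replacing the Lipschitz-type perturbation bounds with moduli-of-continuity bounds, which are available because everything is restricted to a compact set, together with the discrete-time converse Lyapunov theorem used in the first step. Since all of this is carried out in \cite{limon2009input}, in the paper the natural course is simply to cite that result.
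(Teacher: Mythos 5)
The paper offers no proof of this corollary at all---it is imported verbatim from \cite{limon2009input} and the citation is the entire justification---and you correctly identify this, so your ``proof'' agrees with the paper's by construction. Your reconstruction sketch (converse Lyapunov to merge the two bullets, restriction to a compact neighborhood so that uniform continuity yields a $\ccalK$-bounded perturbation term in the Lyapunov decrease, then the standard comparison argument) is consistent with how the local ISS result is actually established in the cited reference, so nothing further is needed.
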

Notice that the assumptions on the continuity of the closed loop system are not trivially satisfied by all systems since the continuity of the optimizer in \eqref{eqn_optimization_problem} is not generally  guaranteed for multi-parametric optimization problems. For instance some necessary conditions are linear equality constraints, continuous objective function and the solution of \eqref{eqn_optimization_problem} being unique and compact,  see e.g. \cite[Chapter 6]{borrelli2017predictive}. In the case here considered equality constraints are not linear, hence we need to assume these to establish input to state stability of the system \eqref{eqn_closed_loop} with respect to the disturbances. 
%%%%%%%%%%%%%%%%%%%%%%%%%%%%%%%%%%%%%%%%%%%%%%%%%%%%%%%%%%%%%%%%%%%%%%%%%%%%%%%%%%%%%%%%%%%%%%%%%%%%%%%%%%%%%%%%%%%%%%%%%% P R O P O S I T I O N  %%%%%%%%%%%%%%%%%%%%%%%%%%%%%%%%%%%%%%%%%%%%%%%%%%%%%%%%%%%%%%%%%%%%%%%%%%%%%%%%%%%%%%%%%%%%%%%%%%%%%%%%%%%%%%%%%%%%%%%%%%%%%
\begin{proposition}
  Let Assumptions \ref{assumption_continuity}--\ref{assumption_bounded_derivatives} hold. Then, if the system \eqref{eqn_closed_loop_new} is absolutely continuous in $\bbd$ for all $\bbx\in\mathbb{R}^n$ and for all $\bbd$ and \eqref{eqn_nominal_closed_loop} is absolutely continuous in $\bbx$ then \eqref{eqn_closed_loop_new} is input to state stable. If the system \eqref{eqn_closed_loop_new} is continuous in a neighborhood of $\bbx =0 $ and $\bbd =0$ and \eqref{eqn_nominal_closed_loop} is continuous in a neighborhood of $\bbx =0$ then \eqref{eqn_closed_loop_new} is locally input to state stable. 
\end{proposition}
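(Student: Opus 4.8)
The plan is to derive both claims as direct applications of Theorem \ref{theo_iss} and Corollary \ref{coro_iss} to the perturbed system \eqref{eqn_closed_loop_new}, after checking that the present setup matches their hypotheses. The map to which these results are applied is $\tilde{g}=\tilde{f}_{\kappa_{MPC}}$, whose nominal counterpart is, by \eqref{eqn_nominal_closed_loop}, $g(\bbx)=\tilde{f}_{\kappa_{MPC}}(\bbx,\bm{0})=f_{\kappa_{MPC}}(\bbx)$, i.e. the exact MPC closed loop \eqref{eqn_closed_loop_nominal}. Two ingredients are then needed from the cited results: asymptotic stability of the nominal system, and that the disturbance sequence $\{\bbd(k)\}$ takes values in a compact set.

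First I would record the nominal stability. Assumptions \ref{assumption_continuity}--\ref{assumption_nominal_stability} are contained in the hypotheses of the proposition, so Theorem \ref{theo_mpc_stability} applies and the origin is globally asymptotically stable for $g(\bbx)=f_{\kappa_{MPC}}(\bbx)$; in particular it is asymptotically stable in the sense required by both Theorem \ref{theo_iss} and Corollary \ref{coro_iss}. Next I would bound the disturbance. The disturbance $\bbd(k)=\bbu_{MPC}(k)-\bbu(k)$ is the gap between the exact MPC input and the one produced by the prediction--correction iteration; since Theorem \ref{theo_newton} guarantees $\|\bbz_k-\bbz_k^\star\|\leq \eta m/L$ for all $k$, the input block extracted from $\bbz_k$ differs from $(\bar{\bbu}_1^\star)_k$ by at most $\eta m/L$, so $\|\bbd(k)\|\leq \eta m/L$ for all $k\geq 0$ and the disturbance sequence lies in the compact set $D:=\{\bbd\in\mathbb{R}^p:\|\bbd\|\leq \eta m/L\}$ (this invocation of Theorem \ref{theo_newton} carries along its own hypotheses, namely $B^2T_s^2<m^2(L^2\delta_1)^{-1}$ and a sufficiently accurate $\bbz_0$).

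With these two facts, the first claim follows from Theorem \ref{theo_iss}: $\tilde{f}_{\kappa_{MPC}}(\bbx,\bbd)$ is assumed absolutely continuous in $\bbd$ for all $\bbx$ and all $\bbd$, hence on $D$; the nominal $f_{\kappa_{MPC}}$ is asymptotically stable; and $\tilde{f}_{\kappa_{MPC}}(\bbx,\bm{0})=f_{\kappa_{MPC}}(\bbx)$ is assumed absolutely continuous in $\bbx$, which is precisely the first of the two alternative conditions in that theorem; therefore \eqref{eqn_closed_loop_new} is input to state stable. The second claim follows identically from Corollary \ref{coro_iss}, with ``absolutely continuous'' replaced by ``continuous in a neighborhood of the origin'': the two local-continuity hypotheses of the proposition supply continuity of $\tilde{g}$ near $(\bbx,\bbd)=(0,0)$ and the first alternative condition of the corollary, yielding local input to state stability.

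There is no deep obstacle here; the proof is essentially a verification that the hypotheses of \cite{limon2009input} hold. The one point deserving care is that $\bbd(k)$ is not an exogenous signal but is generated endogenously by the algorithm through the iterates $\bbz_k$. This is harmless because the ISS results of \cite{limon2009input} are stated for arbitrary disturbance sequences taking values in a fixed compact set, so the uniform bound $\|\bbd(k)\|\leq \eta m/L$ furnished by Theorem \ref{theo_newton} is exactly what makes their application legitimate.
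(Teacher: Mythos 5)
Your proof is correct and follows essentially the same route as the paper: establish asymptotic stability of the nominal MPC closed loop via Theorem \ref{theo_mpc_stability}, bound the disturbances $\bbd(k)$ by $\eta m/L$ using Theorem \ref{theo_newton}, and then invoke Theorem \ref{theo_iss} and Corollary \ref{coro_iss}. The paper's own proof is a terser version of exactly this argument; your additional remarks on the compactness of the disturbance set and the endogenous nature of $\bbd(k)$ are accurate elaborations rather than deviations.
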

\begin{proof}
Theorem \ref{theo_mpc_stability} establishes that the nominal system $f_{\kappa_{MPC}}(\bbx)$ is asymptotically stable. Then, the proof follows from the fact that the disturbances $\bbd(k)$ are bounded (Theorem \ref{theo_newton}) and the results from Theorem \ref{theo_iss} and Corollary \ref{coro_iss}.
  \end{proof}
%%%%%%%%%%%%%%%%%%%%%%%%%%%%%%%%%%%%%%%%%%%%%%%%%%%%%%%%%%%%%%%%%%%%%%%%%%%%%%%%%%%%%%%%%%%%%%%%%%%%%%%%%%%%%%%%%%%%%%%%%%%%%%%%%%%%%%%%%%%%%%%%%%%%%%%%%%%%%%%%%%%%%%%%%%%%%%%%%%%%%

%!TEX root = root.tex
\section{Numerical experiments}\label{sec_examples}
In this section we illustrate the theoretical results presented in the previous section by studying the problem of controlling the position of a point mass subject to nonlinear friction and the Hick's reactor \cite{hicks1971approximation}.
\subsection{Nonlinear Friction}\label{sec_friction}
%%%%%%%%%%%%%%%%%%%%%%%%%%%%%%%%%%%%%%%%%%%%%%%%%%%%%%%%%%%%%%%%%%%%%%%%%%%%%%%%%%%%%%%%%%%%%%%%%%%%%%%%%%%%%%%%%%%%%%%%%%%%%%%%%%%%%% F I G U R E %%%%%%%%%%%%%%%%%%%%%%%%%%%%%%%%%%%%%%%%%%%%%%%%%%%%%%%%%%%%%%%%%%%%%%%%%%%%%%%%%%%%%%%%%%%%%%%%%%%%%%%%%%%%
\begin{figure}
  \centering
 \begin{tikzpicture}
		\pgfplotsset{grid style={dashed,gray!50}}
		\begin{axis}[
                       clip =false,
			xmin=-3.5,xmax=3.5,ymin=-0.5,ymax=0.5,
                        axis lines=center,
                        axis on top =false,
			xlabel=$\dot{x}$,
			ylabel=$F_a(\dot{x})$,
                        xtick align=inside,
%			ylabel near ticks,
			width=\columnwidth, 
			height=0.6\columnwidth,
			grid=both,
			%legend style={
			%	at={(0.98,0.98)}, 
			%	anchor=north east, 
			%	legend cell align=left
			%	}
		  ]
                  \draw (rel axis cs:0,0) -- (rel axis cs:1,0)
                  (rel axis cs:0,1) -- (rel axis cs:1,1);
                  \draw (rel axis cs:0,0) -- (rel axis cs:0,1);
                  \draw (rel axis cs:1,0) -- (rel axis cs:1,1);
               	    \addplot[domain=-3.5:3.5,samples=300,mark=none,draw=blue,ultra thick] {0.25*tanh(100*\x))-0.25*tanh(10*\x)+0.1*tanh(50*\x)+0.01*\x};
	\end{axis}
	\end{tikzpicture}
  \caption{Nonlinear friction $F_a(\dot{x}) = 0.25\tanh(100\dot{x}))-0.25\tanh(10\dot{x})
    +0.1 \tanh(50\dot{x})+0.01\dot{x}$ considered in the numerical experiments.}
  \label{fig_nonlinear_friction}
    \end{figure}
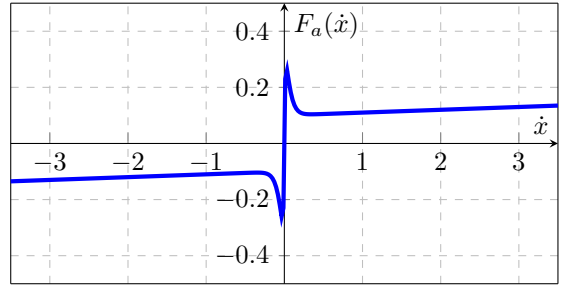
%
%
%%%%%%%%%%%%%%%%%%%%%%%%%%%%%%%%%%%%%%%%%%%%%%%%%%%%%%%%%%%%%%%%%%%%%%%%%%%%%%%%%%%%%%%%%%%%%%%%%%%%%%%%%%%%%%%%%%%%%%%%%%%%%%%%%%% F I G U R E %%%%%%%%%%%%%%%%%%%%%%%%%%%%%%%%%%%%%%%%%%%%%%%%%%%%%%%%%%%%%%%%%%%%%%%%%%%%%%%%%%%%%%%%%%%%%%%%%%%%%%%%%%%%%%%%%%%%%%%%%%%%%%%%%%%%%%%%%%%%%%%%%%%%%%%%%%%%
%\begin{figure*}
%  \begin{tikzpicture}
%    \node[anchor=south west] (img) at (0,0) {\input{figures/error_ts200.tex}};
%  \node[anchor=south west] (img2) at (0.5\textwidth,0) {\input{figures/iterations_ts200.tex}};  \end{tikzpicture}
%                ~ \caption{Histograms with frequency of error in the solution and frequency of the number of iterations required for N-MPC and PC-MPC to solve \eqref{eqn_optimization_problem}. The sampling time is $T_s = 0.2 s$, the horizon is $H=5$ and MPC is running Newton steps with accuracy $10^{-2}$ with a maximum of $50$ iterations to solve the problem at each one of the $200$ time steps.}\label{fig_hist200}
%\end{figure*}
%
%%%%%%%%%%%%%%%%%%%%%%%%%%%%%%%%%%%%%%%%%%%%%%%%%%%%%%%%%%%%%%%%%%%%%%%%%%%%%%%%%%%%%%%%%%%%%%%%%%%%%%%%%%%%%%%%%%%%%%%%%%%%%%%%%%%% F I G U R E %%%%%%%%%%%%%%%%%%%%%%%%%%%%%%%%%%%%%%%%%%%%%%%%%%%%%%%%%%%%%%%%%%%%%%%%%%%%%%%%%%%%%%%%%%%%%%%%%%%%%%%%%%%%%%%%%%%%%%%%%%%%%%%%%%%%%%%%%%%%%%%%%%%%%%%%%%%%
\begin{figure}
  \begin{tikzpicture}
    \pgfplotsset{grid style={dashed,gray!50}}
    \begin{groupplot}[
    xlabel near ticks,
    ylabel near ticks,
    width=0.9\columnwidth,
    xmin=0,xmax=40,ymin=-0.6,ymax=0.6,
        xlabel= Time $(s)$,
    group style={
        group size=1 by 3,
        xticklabels at=edge bottom,
        yticklabels at=edge left,
        %vertical sep=10pt,
    },
    grid=both,
    minor tick num=1,
    axis x line*=left,
    %ybar,      
    ]
    
   \nextgroupplot[y post scale=0.6,xmin=0,xmax=40,ymin=-1, ymax=0.8,ylabel=Input $(N)$, legend columns=1,
            legend style={
                at={(0.98,0.93)},
                anchor=north east,
                legend cell align=left
                } ]
            
  \addplot [color=blue,solid,ultra thick,mark=none]  table[x index=0,y index=1, col sep=comma]{data/friction/ts_200/ts200.dat};	\addlegendentry{N-MPC};
        \addplot [color=red,solid,ultra thick,mark=none]
	table[x index=0,y index=4, col sep=comma]{data/friction/ts_200/ts200.dat};	\addlegendentry{PC-MPC};

\draw  (rel axis cs:0,1) -- (rel axis cs:1,1);
                  \draw (rel axis cs:0,0) -- (rel axis cs:0,1);
                  \draw (rel axis cs:1,0) -- (rel axis cs:1,1);

    \nextgroupplot[y post scale=0.6, ymin=-0.02, ymax=0.12, try min ticks=2,ylabel=Position $(m)$]
legend columns=2,
            legend style={
                at={(0.98,0.93)},
                anchor=north east,
                legend cell align=left
                }  
     \addplot [color=blue,solid,ultra thick,mark=none]
	table[x index=0,y index=2, col sep=comma]{data/friction/ts_200/ts200.dat};
	\addlegendentry{N-MPC};
        \addplot [color=red,solid,ultra thick,mark=none]
	table[x index=0,y index=5, col sep=comma]{data/friction/ts_200/ts200.dat};
	\addlegendentry{PC-MPC};
        \draw  (rel axis cs:0,1) -- (rel axis cs:1,1);
                  \draw (rel axis cs:0,0) -- (rel axis cs:0,1);
                  \draw (rel axis cs:1,0) -- (rel axis cs:1,1);

    \nextgroupplot[y post scale=0.6,ymin=-0.6, ymax=0.1, try min ticks=2,ylabel=Velocity $(ms^{-1})$,legend columns=1,
            legend style={
                at={(0.98,0.07)},
                anchor=south east,
                legend cell align=left
                } ]

	\addplot [color=blue,solid,ultra thick,mark=none] table[x index=0,y index=3, col sep=comma]{data/friction/ts_200/ts200.dat};
	\addlegendentry{N-MPC};
        \addplot [color=red,solid,ultra thick,mark=none]
	table[x index=0,y index=6, col sep=comma]{data/friction/ts_200/ts200.dat};
	\addlegendentry{PC-MPC};
        \draw  (rel axis cs:0,1) -- (rel axis cs:1,1);
                  \draw (rel axis cs:0,0) -- (rel axis cs:0,1);
                  \draw (rel axis cs:1,0) -- (rel axis cs:1,1);

    \end{groupplot}
%    \draw[dashed] (-0.1,-0.1) rectangle (7.35,0.15);
%    \draw[densely dashed] (-0.1,-0.1) -- (0,-0.35);
%    \draw[densely dashed] (7.35,-0.1) -- (7.28,-0.35);
%    \draw[dashed] (0,-0.35) -- (7.28,-0.35);
\end{tikzpicture}
  \caption{Control input, position and velocity of the system \eqref{eqn_dynamics_example} when using PC-MPC (Algorithm \ref{alg_pcmpc}) and N-MPC (Algorithm \ref{alg_mpc}) to solve \eqref{eqn_optimization_problem}. The sampling time is $T_s = 0.2 s$, the horizon is $H=5$ and we run both controllers for $200$ time steps. The accuracy for both algorithms is set to $\varepsilon = 10^{-2}$ and the maximum number of iterations per step is bounded by $N=50$. }\label{fig_trajectory200}
  \end{figure}
%%%%%%%%%%%%%%%%%%%%%%%%%%%%%%%%%%%%%%%%%%%%%%%%%%%%%%%%%%%%%%%%%%%%%%%%%%%%%%%%%%%%%%%%%%%%%%%%%%%%%%%%%%%%%%%%%%%%%%%%%%%%%%%% M A I N     M A T T E R %%%%%%%%%%%%%%%%%%%%%%%%%%%%%%%%%%%%%%%%%%%%%%%%%%%%%%%%%%%%%%%%%%%%%%%%%%%%%%%%%%%%%%%%%%%%%%%%%%%%%%%%%%%%%%%%%%%%%%%%%%%%%%%%%%%%%%%%%%%%%%%%%%%%
In this section we consider the problem of controlling the position of a point mass subject to nonlinear friction. Let us denote by $M$ the mass and let $x$ be its position, $g$ the gravitational constant and $F_a(\dot{x})$ the dynamic friction depending on the velocity of the mass and $u$ the control input force. With these definitions, consider the following dynamics 
\begin{equation}\label{eqn_dynamics_example}
M\ddot{x} =u-MgF_a(\dot{x}).
  \end{equation}
We discretize the dynamics using Euler's forward method to obtain a system of the form \eqref{eqn_dynamics}. To be precise, define the state vector $\bbx \in\mathbb{R}^2$ where its first and second components correspond to position and velocity respectively. Then, for sampling time $T_s$ the discretization yields
\begin{subequations}
\begin{equation}
  \bbx_1(k+1) = \bbx_1(k) +T_s \bbx_2(k)
\end{equation}
\begin{equation}
\bbx_2(k+1) =\bbx_2(k) +T_s\left(\frac{u(k)}{M} -g F_a(\bbx_2(k))\right).
\end{equation}
\end{subequations}
We consider a quadratic cost $\ell(\bbx,u) = \left(\bbx^\top\bbQ\bbx +u^2R\right)/2$, with $R>0$ and $\bbQ \in\ccalM^{2\times 2}$.
The specific values selected for the experiments are $g=9.81$, $M=0.2$, $\bbx(0)=[0.1, 0.1 ]^\top$, $\bbQ$ is diagonal with $\bbQ_{11} = 1\times 10^{3}$ and $\bbQ_{22} = 2$,
%
%\begin{equation}
%\bbQ = \left[\begin{array}{cc}1\times 10^{3}& 0\\ 0 & 1.25\end{array}\right],
%\end{equation}
%
$R=1\times 10^{-3}$, $T_s = 0.2$, $H=5$, simulation time $T=40$ seconds and dynamic friction with the following expression 
\begin{equation}
  \begin{split}
    F_a(\dot{x})&= 0.25(\tanh(100\dot{x}))-\tanh(10\dot{x}))\\
    &+0.1 \tanh(50\dot{x})+0.01\dot{x}.
\end{split}
  \end{equation}
%
%%%%%%%%%%%%%%%%%%%%%%%%%%%%%%%%%%%%%%%%%%%%%%%%%%%%%%%%%%%%%%%%%%%%%%%%%%%%%%%%%%%%%%%%%%%%%%%%%%%%%%%%%%%%%%%%%%%%%%%%%%% A L G O R I T H M %%%%%%%%%%%%%%%%%%%%%%%%%%%%%%%%%%%%%%%%%%%%%%%%%%%%%%%%%%%%%%%%%%%%%%%%%%%%%%%%%%%%%%%%%%%%%%%%%%%%%%%%%%%%%%%%%%%%%%%%%%%%%%%%%%%%%%%%%%%%%%%%%%%%%%%%%%%%%%%
\begin{algorithm}
  \caption{\texttt{shift}}
  \label{alg_shift} 
\begin{algorithmic}[1]
 \renewcommand{\algorithmicrequire}{\textbf{Input:}}
 \renewcommand{\algorithmicensure}{\textbf{Output:}}
 \Require $\bbz\in\mathbb{R}^{H(2n_p)+2n}$
 \State Define a shift matrix for the states and the multipliers
 $$
 \bbC_{1} = \left[\begin{matrix} \boldsymbol{0}_H& \bbI_H \\ \boldsymbol{0}_H^\top& 1 \end{matrix}\right]
 $$
 \State Define a shift matrix for the inputs
  $$
 \bbC_{2} = \left[\begin{matrix} \boldsymbol{0}_{H-1}& \bbI_{H-1} \\ \boldsymbol{0}_{H-1}^\top& 1 \end{matrix}\right]
 $$
 \State Define the shift matrix for all variables
   $$
 \bbC = \left[\begin{matrix} \bbC_1\otimes\bbI_n & \boldsymbol{0}_{n(H+1)\times pH} & \boldsymbol{0}_{n(H+1)\times (H+1)} \\
     \boldsymbol{0}_{n(H+1)\times pH} & \bbC_2 \otimes\bbI_{p} &\boldsymbol{0}_{n(H+1)\times (H+1)} \\
   \boldsymbol{0}_{(H+1)\times pH} & \boldsymbol{0}_{(H+1)\times pH} & \bbC_1 \end{matrix}\right]
 $$
 \Return $\bbC\bbz$
 \end{algorithmic}
 \end{algorithm}
%%%%%%%%%%%%%%%%%%%%%%%%%%%%%%%%%%%%%%%%%%%%%%%%%%%%%%%%%%%%%%%%%%%%%%%%%%%%%%%%%%%%%%%%%%%%%%%%%%%%%%%%%%%%%%%%%%%%%%%%%%% A L G O R I T H M %%%%%%%%%%%%%%%%%%%%%%%%%%%%%%%%%%%%%%%%%%%%%%%%%%%%%%%%%%%%%%%%%%%%%%%%%%%%%%%%%%%%%%%%%%%%%%%%%%%%%%%%%%%%%%%%%%%%%%%%%%%%%%%%%%%%%%%%%%%%%%%%%%%%%%%%%%%%%%%
\begin{algorithm}
  \caption{Newton's method for MPC (N-MPC)}
  \label{alg_mpc} 
\begin{algorithmic}[1]
 \renewcommand{\algorithmicrequire}{\textbf{Input:}}
 \renewcommand{\algorithmicensure}{\textbf{Output:}}
 \Require $\bbx(0),N,\varepsilon$
 \State Compute $\bbz_0=\bbz^*(\bbx(0))$ 
 \For{$k=0,1,\ldots$}
 \State Apply input $\bbu(k) = (\bar{\bbu}_1)_k$ to the system and observe
 $$\bbx(k+1)=f(\bbx(k),\bbu(k))$$
 \State Re initialize seed $\bbz_{k+1}^0 = \texttt{shift}(\bbz_k)$
 \State Set $j=0$
 \While {$j<N$ or $\left\|\nabla_\bbz\ccalL(\bbx(k+1),\bbz_{k+1}^j)\right\|>\varepsilon$}
 \State $j=j+1$
 \State Compute Correction (or Newton) step as in \eqref{eqn_correction}
 $$
 \bbz_{k+1}^{j}= \bbz_{k+1}^{j-1}-\nabla^2_{\bbz\bbz} \ccalL_{j-1}(k+1)^{-1} \nabla_\bbz \ccalL_{j-1}(k+1),
 $$
 \EndWhile
 \State Update variable $\bbz_{k+1}=\bbz_{k+1}^j$
 \EndFor
%
 %\Return $\hat{Q}$, $s_{T_Q}$
 \end{algorithmic}
 \end{algorithm}
%%%%%%%%%%%%%%%%%%%%%%%%%%%%%%%%%%%%%%%%%%%%%%%%%%%%%%%%%%%%%%%%%%%%%%%%%%%%%%%%%%%%%%%%%%%%%%%%%%%%%%%%%%%%%%%%%%%%% M A I N     M A T T E R %%%%%%%%%%%%%%%%%%%%%%%%%%%%%%%%%%%%%%%%%%%%%%%%%%%%%%%%%%%%%%%%%%%%%%%%%%%%%%%%%%%%%%%%%%%%%%%%%%%
{We compare the behavior of the closed loop system when we use the Prediction-Correction Algorithm \eqref{eqn_prediction}-\eqref{eqn_correction} and when we use a Newton solver (Algorithm \ref{alg_mpc}). Observe that the only constraints that problem \eqref{eqn_optimization_problem} considers are the equality constraints that describe the evolution of the system. Thus, Algorithm \ref{alg_mpc} is equivalent to run a Sequential Quadratic Programming algorithm. Although not necessary, we add a shift to initialize the Newton solver (N-MPC) at the beginning of each control interval. This is done to propagate the solution computed at time $k$ to the next time step. Specifically, we set $(\bar{\bbu}_{i})_{k+1} = (\bar{\bbu}_{i+1})_k$  for all $i=1,\ldots, H-1$ and $(\bar{\bbx}_{i})_{k+1} = (\bar{\bbx}_{i+1})_k$, $(\bblambda_{i})_{k+1} = (\bblambda_{i+1})_k$ for all $i=1,\ldots, H$. This shift can be implemented using matrix multiplications as summarized under Algorithm \ref{alg_shift}.
  
  In Figure \ref{fig_trajectory200} we observe the control input, position and velocity resulting from solving the MPC problem \eqref{eqn_optimization_problem} using Algorithm \ref{alg_mpc}-- in blue-- and PC-MPC (Algorithm \ref{alg_pcmpc})-- in red. In this example we set the accuracy $\varepsilon =0.01$ and we bound the maximum number of Newton steps (correction steps) in both algorithms by $N=50$. %In the case of the PC-MPC we always perform the prediction step \eqref{eqn_prediction} and we run the correction step in cases where $\left\|\nabla_{\bbz}\ccalL(\bbx(k+1),\bbz_{k+1|k})\right\| \geq \varepsilon$ (cf., Step 6 Algorithm \ref{alg_pcmpc}). Likewise we bound the maximum number of corrections steps by $N=50$.
  As it can be observed in Figure \ref{fig_trajectory200} the parameters selected are not sufficient to achieve successful control since the inputs computed by Algorithm \ref{alg_mpc} -- depicted in blue -- fail to drive the system to the origin. It is important to point out that in all control intervals the solutions satisfy the required accuracy $\varepsilon = 0.01$ and that none of the computations of the control actions involves more than $25$ Newton steps. Hence we conclude that the accuracy required is not enough to achieve a good performance of the system.  
%
  %
%  As it can be observed in the histograms in Figure \ref{fig_hist200} in all the control intervals the solution is computed with the desired accuracy and in less than $25$ iterations. These histograms illustrate the benefits of the prediction-correction scheme compared to using Newton iterations to solve the MPC problem \eqref{eqn_optimization_problem}: (i) a solution with better accuracy is found in more than half of the instances and (ii) the prediction-correction algorithm requires only one Hessian inversion per time step (the correction is not performed) whereas the average number of Hessian inversions required by the Newton method (Algorithm \ref{alg_mpc}) is $17.35$.
  %
  %%%%%%%%%%%%%%%%%%%%%%%%%%%%%%%%%%%%%%%%%%%%%%%%%%%%%%%%%%%%%%%%%%%%%%%%%%%%%%%%%%%%%%%%%%%%%%%%%%%%%%%%%%%%%%%%%%%%%%%%%%%%%%%%%%%% F I G U R E %%%%%%%%%%%%%%%%%%%%%%%%%%%%%%%%%%%%%%%%%%%%%%%%%%%%%%%%%%%%%%%%%%%%%%%%%%%%%%%%%%%%%%%%%%%%%%%%%%%%%%%%%%%%%%%%%%%%%%%%%%%%%%%%%%%%%%%%%%%%%%%%%%%%%%%%%%%%
\begin{figure*}
  \begin{tikzpicture}
    \node[anchor=south west] (img) at (0,0) {\begin{tikzpicture}
  \pgfplotsset{grid style={dashed,gray!50}}
		\begin{axis}[
			%title style={yshift=-1ex},
			%xlabel near ticks,
			%ylabel near ticks,
			ybar,	
			minor tick num=1,
			xtick align=inside,
			xlabel={Error $\left\|\nabla_\bbz\ccalL(\bbz_k,\bbx(0))\right\|$},
			ylabel={Occurences},
			ylabel near ticks,
			width=\columnwidth, 
			height=0.6\columnwidth,
			xmin=0,xmax=0.0001,ymin=0,ymax=200,
			grid=both,
			legend style={
				at={(0.98,0.98)}, 
				anchor=north east, 
				legend cell align=left
				}
		]     
               	\addplot[red, fill=red!50, hist={data=x, bins=15}] file {data/friction/ts_200/error_tvmpc_tolacc.dat};		
\addlegendentry{PC-MPC};

	\addplot[blue, fill=blue!30, hist={data=x, bins=15}] file {data/friction/ts_200/error_mpc_tolacc.dat};              	
		\addlegendentry{N-MPC};
	\end{axis}

\end{tikzpicture}};
  \node[anchor=south west] (img2) at (0.5\linewidth,0) {\begin{tikzpicture}
		\pgfplotsset{grid style={dashed,gray!50}}
		\begin{axis}[
			%title style={yshift=-1ex},
			%xlabel near ticks,
			%ylabel near ticks,
			ybar,	
			minor tick num=1,
			xtick align=inside,
			xlabel={Hessian inversions},
			ylabel={Occurences},
			ylabel near ticks,
			width=\columnwidth, 
			height=0.6\columnwidth,
			xmin=0,xmax=25,ymin=0,ymax=250,
			grid=both,
			legend style={
				at={(0.98,0.98)}, 
				anchor=north east, 
				legend cell align=left
				}
		]     
               	\addplot[red, fill=red!50, hist={data=x, bins=15}] file {data/friction/ts_200/iter_tvmpc_tolacc.dat};		
\addlegendentry{PC-MPC};

	\addplot[blue, fill=blue!30, hist={data=x, bins=15}] file {data/friction/ts_200/iter_mpc_tolacc.dat};              	
		\addlegendentry{N-MPC};
	\end{axis}
	\end{tikzpicture}};  \end{tikzpicture}
  ~ \caption{Histograms with frequency of error in the solution and frequency of the number of iterations required by PC-MPC (Algorithm \ref{alg_pcmpc}) and N-MPC (Algorithm \ref{alg_mpc}) to solve \eqref{eqn_optimization_problem}. The sampling time is $T_s = 0.2 s$, the horizon is $H=5$ and we run both controllers for $200$ time steps. The accuracy for both algorithms is set to $\varepsilon = 10^{-4}$ and the maximum number of iterations per step is bounded by $N=50$. }\label{fig_hist200acc}
\end{figure*}
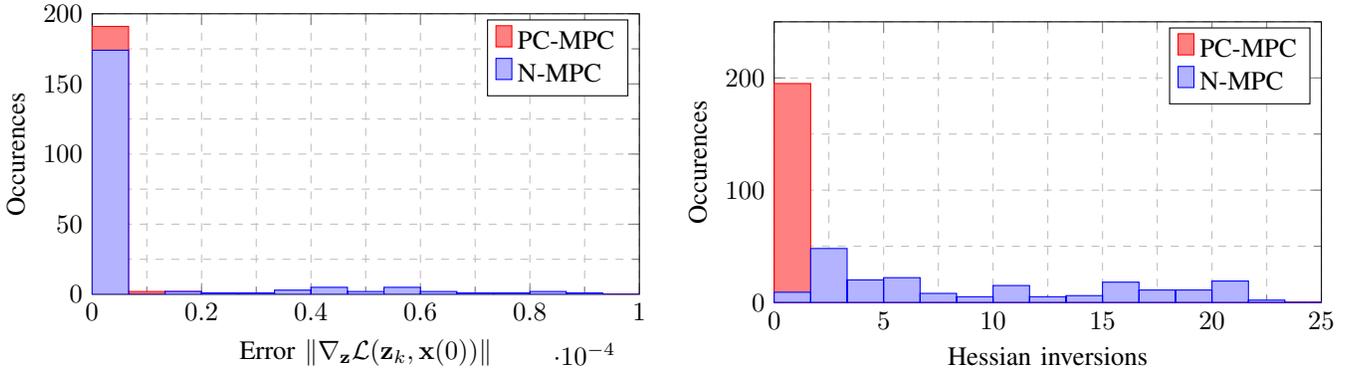
%
%%%%%%%%%%%%%%%%%%%%%%%%%%%%%%%%%%%%%%%%%%%%%%%%%%%%%%%%%%%%%%%%%%%%%%%%%%%%%%%%%%%%%%%%%%%%%%%%%%%%%%%%%%%%%%%%%%%%%%%%%%%%%%%%%%%% F I G U R E %%%%%%%%%%%%%%%%%%%%%%%%%%%%%%%%%%%%%%%%%%%%%%%%%%%%%%%%%%%%%%%%%%%%%%%%%%%%%%%%%%%%%%%%%%%%%%%%%%%%%%%%%%%%%%%%%%%%%%%%%%%%%%%%%%%%%%%%%%%%%%%%%%%%%%%%%%%%
\begin{figure}
  \begin{tikzpicture}
    \pgfplotsset{grid style={dashed,gray!50}}
    \begin{groupplot}[
    xlabel near ticks,
    ylabel near ticks,
    width=0.9\columnwidth,
    xmin=0,xmax=40,ymin=-0.6,ymax=0.6,
        xlabel= Time $(s)$,
    group style={
        group size=1 by 3,
        xticklabels at=edge bottom,
        yticklabels at=edge left,
        %vertical sep=10pt,
    },
    grid=both,
    minor tick num=1,
    axis x line*=left,
    %ybar,      
    ]
    
   \nextgroupplot[y post scale=0.6,xmin=0,xmax=40,ymin=-1, ymax=0.8,ylabel=Input $(N)$, legend columns=1,
            legend style={
                at={(0.98,0.93)},
                anchor=north east,
                legend cell align=left
                } ]
            
  \addplot [color=blue,solid,ultra thick,mark=none]  table[x index=0,y index=1, col sep=comma]{data/friction/ts_200/ts200acc.dat};	\addlegendentry{N-MPC};
        \addplot [color=red,solid,ultra thick,mark=none]
	table[x index=0,y index=4, col sep=comma]{data/friction/ts_200/ts200acc.dat};	\addlegendentry{PC-MPC};

\draw  (rel axis cs:0,1) -- (rel axis cs:1,1);
                  \draw (rel axis cs:0,0) -- (rel axis cs:0,1);
                  \draw (rel axis cs:1,0) -- (rel axis cs:1,1);

    \nextgroupplot[y post scale=0.6, ymin=-0.02, ymax=0.12, try min ticks=2,ylabel=Position $(m)$]
legend columns=2,
            legend style={
                at={(0.98,0.93)},
                anchor=north east,
                legend cell align=left
                }  
     \addplot [color=blue,solid,ultra thick,mark=none]
	table[x index=0,y index=2, col sep=comma]{data/friction/ts_200/ts200acc.dat};
	\addlegendentry{N-MPC};
        \addplot [color=red,solid,ultra thick,mark=none]
	table[x index=0,y index=5, col sep=comma]{data/friction/ts_200/ts200acc.dat};
	\addlegendentry{PC-MPC};
        \draw  (rel axis cs:0,1) -- (rel axis cs:1,1);
                  \draw (rel axis cs:0,0) -- (rel axis cs:0,1);
                  \draw (rel axis cs:1,0) -- (rel axis cs:1,1);

    \nextgroupplot[y post scale=0.6,ymin=-0.6, ymax=0.1, try min ticks=2,ylabel=Velocity $(ms^{-1})$,legend columns=1,
            legend style={
                at={(0.98,0.07)},
                anchor=south east,
                legend cell align=left
                } ]

	\addplot [color=blue,solid,ultra thick,mark=none] table[x index=0,y index=3, col sep=comma]{data/friction/ts_200/ts200acc.dat};
	\addlegendentry{N-MPC};
        \addplot [color=red,solid,ultra thick,mark=none]
	table[x index=0,y index=6, col sep=comma]{data/friction/ts_200/ts200acc.dat};
	\addlegendentry{PC-MPC};
        \draw  (rel axis cs:0,1) -- (rel axis cs:1,1);
                  \draw (rel axis cs:0,0) -- (rel axis cs:0,1);
                  \draw (rel axis cs:1,0) -- (rel axis cs:1,1);

    \end{groupplot}
%    \draw[dashed] (-0.1,-0.1) rectangle (7.35,0.15);
%    \draw[densely dashed] (-0.1,-0.1) -- (0,-0.35);
%    \draw[densely dashed] (7.35,-0.1) -- (7.28,-0.35);
%    \draw[dashed] (0,-0.35) -- (7.28,-0.35);
\end{tikzpicture}
  \caption{Control input, position and velocity of the system \eqref{eqn_dynamics_example} when using PC-MPC (Algorithm \ref{alg_pcmpc}) and N-MPC (Algorithm \ref{alg_mpc}) to solve \eqref{eqn_optimization_problem}. The sampling time is $T_s = 0.2 s$, the horizon is $H=5$ and we run both controllers for $200$ time steps. The accuracy for both algorithms is set to $\varepsilon = 10^{-4}$ and the maximum number of iterations per step is bounded by $N=50$. }\label{fig_ts200acc}
  \end{figure}
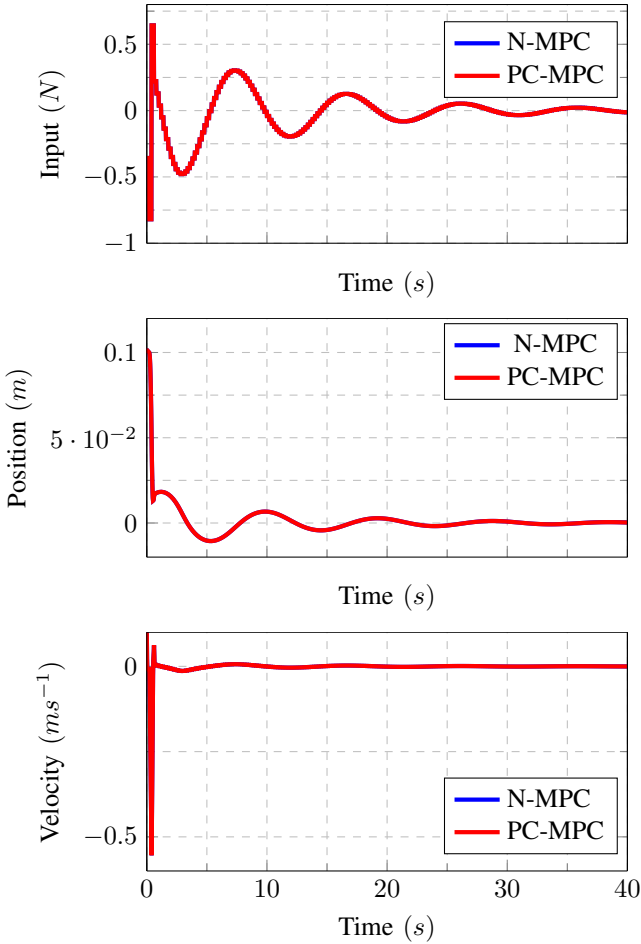
  Based on this observation, we modify the desired accuracy of the solution to $\varepsilon =0.0001$. As it can be observed in Figure \ref{fig_ts200acc} the two trajectories are now indistinguishable but from a computational point of view PC-MPC outperforms the classic Newton algorithm (Algorithm \ref{alg_mpc}). This claim is supported by the histogram in Figure \ref{fig_hist200acc} where we can observe that the number of Hessian inversions required to achieve the desired accuracy in the case of PC-MPC is considerably smaller. Specifically, the Newton method (Algorithm \ref{alg_mpc}) requires on average $9$ iterations while PC-MPC requires only one in 195 of the 200 time steps and two iterations in the remaining five time steps. }
%%%%%%%%%%%%%%%%%%%%%%%%%%%%%%%%%%%%%%%%%%%%%%%%%%%%%%%%%%%%%%%%%%%%%%%%%%%%%%%%%%%%%%%%%%%%%%%%%%%%%%%%%%%%%%%%%%%%%%%%%%%%%%%%%%%%%%%%%%%%%% S U B S E C T I O N %%%%%%%%%%%%%%%%%%%%%%%%%%%%%%%%%%%%%%%%%%%%%%%%%%%%%%%%%%%%%%%%%%%%%%%%%%%%%%%%%%%%%%%%%%%%%%%%%%%%%%%%%%%%%%%%%%%%%%%%%%%%%%%%%%%%%%%%%%

\subsection{Hicks Reactor}

%%%%%%%%%%%%%%%%%%%%%%%%%%%%%%%%%%%%%%%%%%%%%%%%%%%%%%%%%%%%%%%%%%%%%%%%%%%%%%%%%%%%%%%%%%%%%%%%%%%%%%%%%%%%%%%%%%%%%%%%%%%%%%%%%%%% F I G U R E %%%%%%%%%%%%%%%%%%%%%%%%%%%%%%%%%%%%%%%%%%%%%%%%%%%%%%%%%%%%%%%%%%%%%%%%%%%%%%%%%%%%%%%%%%%%%%%%%%%%%%%%%%%%%%%%%%%%%%%%%%%%%%%%%%%%%%%%%%%%%%%%%%%%%%%%%%%%
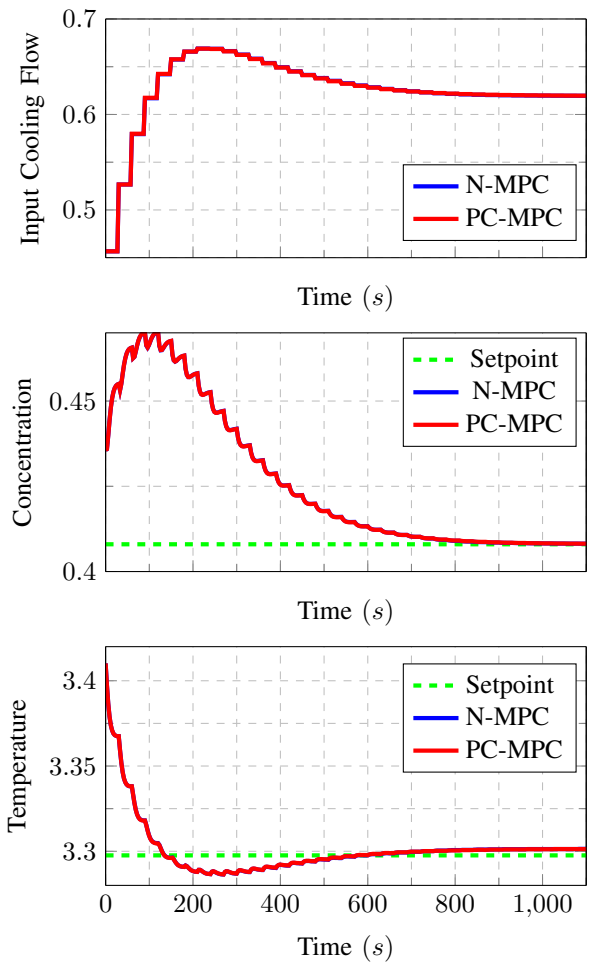
\begin{figure}
  \begin{tikzpicture}
    \pgfplotsset{grid style={dashed,gray!50}}
    \begin{groupplot}[
    xlabel near ticks,
    ylabel near ticks,
    width=0.9\columnwidth,
    xmin=0,xmax=1100,ymin=0.6,ymax=3.42,
    xlabel= Time $(s)$,
    group style={
        group size=1 by 3,
        xticklabels at=edge bottom,
        yticklabels at=edge left,
        %vertical sep=10pt,
    },
    grid=both,
    minor tick num=1,
    axis x line*=left,
    %ybar,      
    ]
    
   \nextgroupplot[y post scale=0.6,ymin=0.45, ymax=0.7,ylabel=Input Cooling Flow, legend columns=1,
            legend style={
                at={(0.98,0.07)},
                anchor=south east,
                legend cell align=left
                } ]
            
  \addplot [color=blue,solid,ultra thick,mark=none]  table[x index=0,y index=1, col sep=comma]{data/hicks/trajectory.dat};	\addlegendentry{N-MPC};
        \addplot [color=red,solid,ultra thick,mark=none]
	table[x index=0,y index=4, col sep=comma]{data/hicks/trajectory.dat};	\addlegendentry{PC-MPC};

\draw  (rel axis cs:0,1) -- (rel axis cs:1,1);
                  \draw (rel axis cs:0,0) -- (rel axis cs:0,1);
                  \draw (rel axis cs:1,0) -- (rel axis cs:1,1);

    \nextgroupplot[y post scale=0.6, ymin=0.4, ymax=0.47, try min ticks=2,ylabel=Concentration]
legend columns=2,
            legend style={
                at={(0.98,0.93)},
                anchor=north east,
                legend cell align=left
            }
            \addplot [domain=0:1100, samples=100, draw=green, dashed, ultra thick]{0.408};\addlegendentry{Setpoint};
     \addplot [color=blue,solid,ultra thick,mark=none]
	table[x index=0,y index=2, col sep=comma]{data/hicks/trajectory.dat};
	\addlegendentry{N-MPC};
        \addplot [color=red,solid,ultra thick,mark=none]
	table[x index=0,y index=5, col sep=comma]{data/hicks/trajectory.dat};
	\addlegendentry{PC-MPC};
        \draw  (rel axis cs:0,1) -- (rel axis cs:1,1);
        \draw (rel axis cs:0,0) -- (rel axis cs:0,1);
                  \draw (rel axis cs:1,0) -- (rel axis cs:1,1);

    \nextgroupplot[y post scale=0.6,ymin=3.28, ymax=3.42, try min ticks=2,ylabel=Temperature ,legend columns=1,
            legend style={
                at={(0.98,0.93)},
                anchor=north east,
                legend cell align=left
    } ]
    \addplot [domain=0:1100, samples=100, draw=green, dashed, ultra thick]{3.2976};\addlegendentry{Setpoint};
	\addplot [color=blue,solid,ultra thick,mark=none] table[x index=0,y index=3, col sep=comma]{data/hicks/trajectory.dat};
	\addlegendentry{N-MPC};
        \addplot [color=red,solid,ultra thick,mark=none]
	table[x index=0,y index=6, col sep=comma]{data/hicks/trajectory.dat};
	\addlegendentry{PC-MPC};
        \draw  (rel axis cs:0,1) -- (rel axis cs:1,1);
                  \draw (rel axis cs:0,0) -- (rel axis cs:0,1);
                  \draw (rel axis cs:1,0) -- (rel axis cs:1,1);
    \end{groupplot}
\end{tikzpicture}
                ~ \caption{Control input, concentration and temperature of the Hicks reactor \eqref{eqn_hicks_reactor} when using PC-MPC (Algorithm \ref{alg_pcmpc}) and N-MPC (Algorithm \ref{alg_mpc}) to solve \eqref{eqn_optimization_problem}. The sampling time is $T_s = 30 s$, the horizon is $H=10$ and we run both controllers for $40$ time steps. The accuracy for both algorithms is set to $\varepsilon = 10^{-3}$ and the maximum number of iterations per step is bounded by $N=100$. As it can be observed there is not a difference in the trajectories of both methods which means that the solution of the receding horizon problem that is being computed is the same.}\label{fig_plots_hicks}
\end{figure}
%
%%%%%%%%%%%%%%%%%%%%%%%%%%%%%%%%%%%%%%%%%%%%%%%%%%%%%%%%%%%%%%%%%%%%%%%%%%%%%%%%%%%%%%%%%%%%%%%%%%%%%%%%%%%%%%%%%%%%%%%%%%%%%%%%%%%% F I G U R E %%%%%%%%%%%%%%%%%%%%%%%%%%%%%%%%%%%%%%%%%%%%%%%%%%%%%%%%%%%%%%%%%%%%%%%%%%%%%%%%%%%%%%%%%%%%%%%%%%%%%%%%%%%%%%%%%%%%%%%%%%%%%%%%%%%%%%%%%%%%%%%%%%%%%%%%%%%
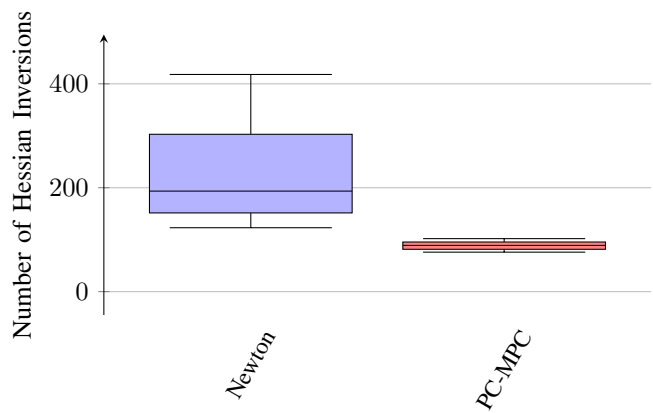
\begin{figure}
  \begin{tikzpicture}
	%\pgfplotstableread[col sep=comma]{data.csv}\csvdata
	% Boxplot groups columns, but we want rows
	%\pgfplotstabletranspose\datatransposed{\csvdata} 
	\begin{axis}[
		boxplot/draw direction = y,
		x axis line style = {opacity=0},
		axis x line* = bottom,
		axis y line = left,
		enlarge y limits,
		ymajorgrids,
		xtick = {1,2},
		xticklabel style = {align=center, font=\small, rotate=60},
		xticklabels = {Newton, PC-MPC},
		xtick style = {draw=none}, % Hide tick line
		ylabel = {Number of Hessian Inversions},
                ymin=0,
                ymax=450,
                width=\columnwidth, 
			height=0.6\columnwidth,
	]
          	  %\addplot+[boxplot, fill=blue, draw=black] file {./data/hicks/iter_mpc.dat};
          	  %\addplot[boxplot, fill, draw=black] file {./data/hicks/iter_tvmpc.dat};
          \addplot+[boxplot prepared={lower whisker=123,lower quartile=151.5, median=193.5,upper quartile=302.875, upper whisker=418,
    }, fill=blue!30,draw=black,
  ] coordinates{};
	
          \addplot+[boxplot prepared={lower whisker=76,lower quartile=81.5, median=89,upper quartile=95.5,
              upper whisker=102,
    },fill=red!50,draw=black,
          ] coordinates{};

	\end{axis}
\end{tikzpicture}
  \caption{Box-plots with the total number of Hessian inversions required to control the Hicks reactor \eqref{eqn_hicks_reactor}  when using PC-MPC (Algorithm \ref{alg_pcmpc}) and N-MPC (Algorithm \ref{alg_mpc}) to solve \eqref{eqn_optimization_problem}. The sampling time is $T_s = 30 s$, the horizon is $H=10$ and we run both controllers for $40$ time steps. The accuracy for both algorithms is set to $\varepsilon = 10^{-3}$ and the maximum number of iterations per step is bounded by $N=100$. As it can be observed, in term of computational computational complexity there is a benefit in the computational complexity of Prediction-Correction MPC since it reduces the number of iterations required. In particular, on average Prediction-Correction requires $43\%$ of the iterations requires by MPC.}
                \label{fig_boxplot}
  \end{figure}

In this subsection we consider the CSTR system \cite{hicks1971approximation}, which has been used as a benchmark in the control literature \cite{huang2012robust, griffith2018advances}. In this problem we are interested in controlling the concentration $z_c$ and the temperature $z_T$ of a chemical process. The control input available is the cooling water flow $u$. The dimensionless temperature and concentration are subject to the following dynamical system
\begin{equation}\label{eqn_hicks_reactor}
  \begin{split}
&    \dot{z}_c = \frac{1-z_c}{\theta}-k_0 z_ke^{-Ea/z_T}\\
 &   \dot{z}_T = \frac{z_T^f-z_T}{\theta}+k_0 z_ke^{-Ea/z_T}-\nu U_{1sf}u_1(z_T-z_T^{cw})
    \end{split}
  \end{equation}
where $z_T^{cw}=2.9, z_{T}^f = 3, Ea=25.2, \nu=1.95\times 10^{-4}, k_0 = 300, U_{1sf}=600$ and $\theta=10$. Define the state vector $\bbx= [z_c ,z_T]^\top$ then the goal is to stabilize the system at the steady state $\bbx_{ss} = [0.408,3.29763]^\top$ with steady state control input is $u_{ss} = 0.6167$ \cite{hicks1971approximation}. We propose to control the system by minimizing the following cost
\begin{equation}
\ell(\bbx,\bbu) = (\bbx-\bbx_{ss})^\top\bbQ(\bbx-\bbx_{ss})+|u-u_{ss}|^2,
  \end{equation}
where $\bbQ=\diag([10,2])$ with control intervals of $T_s=30$ seconds. We implement an Euler discretization of the dynamics \eqref{eqn_hicks_reactor} to write an optimization problem of the form in \eqref{eqn_optimization_problem}. We run 100 simulations for $1200$ seconds using both the Newton method (Algorithm \ref{alg_mpc}) and the Prediction-Correction method proposed (Algorithm \ref{alg_pcmpc}) for different initial conditions selected as $\bbx_0 = \bbx_{ss}(1+0.2\xi)$, where $\xi\sim\ccalN(0,1)$. In Figure \ref{fig_plots_hicks} we plot the control input, concentration and temperature of one of the simulations using both algorithms. As it can be observed there is no difference in the trajectories of both methods which means that the solution to the receding horizon problem that is being computed is the same. However, there is a significant gain in terms of the computation cost as it can be observed in Figure \ref{fig_boxplot} were we present a box-plot with the number of iterations required per simulation using the two different schemes. Notice that in particular the maximum number of Hessian inversions required by Prediction-Correction MPC is smaller than the minimum number of Hessian inversions required by the Newton method. Likewise the average reduction in computation is around $57\%$.

%!TEX root = root.tex

\section{Conclusion}\label{sec_conclusion}
  We considered a prediction-correction algorithm to solve approximately a time varying multiparameter optimization problem in the context of Model Predictive Control. In particular the algorithm is such that the prediction step guarantees that the seed for the correction steps is in the quadratic convergence region as long as the variation of the states of the system is not too fast and the solution in the previous step has been computed with sufficient accuracy. Moreover, we show that the cost of achieving the sufficient accuracy is of order $\log_2\log_2(BT_s)$, where $BT_s$ is a bound on the variation of the state in a control interval. Based on these bounds, we further establish that the closed loop system with the control defined by the solution of the proposed algorithm is input to state stable with respect to the approximation error. We illustrated the computational advantages of the proposed method as compared to regular MPC in the example of the control of a point mass under a non-linear dynamic friction and in the Hicks reactor.

%!TEX root = root.tex

%\vspace{-3mm}
\appendix
%\section{Appendices}

%%%%%%%%%%%%%%%%%%%%%%%%%%%%%%%%%%%%%%%%%%%%%%%%%%%%%%%%%%%%%%%%%%%%%%%%%%%%%%%%%%%%%%%%%%%%%%%%%%%%%%%%%%%%%%
%%%   S  E  C  T  I  O  N   %%%%%%%%%%%%%%%%%
%%%%%%%%%%%%%%%%%%%%%%%%%%%%%%%%%%%%%%%%%%%%%%%%%%%%%%%
%%%%%%%%%%%%%%%%%%%%%%%%%%%%%%%%%%%%%%%%%%%%%%%%%%%%%%
\subsection{Auxiliary Lemmas}
\begin{lemma}\label{lemma_aux1}
Let Assumptions \ref{assumption_non_zero_eigenvalues} and \ref{assumption_bounded_derivatives} hold and let $m$ and $L$ be the constants defined in said assumptions. Likewise, let $\delta_1$ be the constant defined in \eqref{eqn_delta1}. Under these conditions, it follows that $\nabla_{\bbx}\bbz^\star(\bbx)$ is Lipschitz with constant $2L\delta_1/m$. This is, for any $\bbx_1,\bbx_2\in\mathbb{R}^n$ it holds that
\begin{equation}
\left\|\nabla_{\bbx}\bbz^\star(\bbx_1) -\nabla_{\bbx}\bbz^\star(\bbx_2)\right\|\leq 2\frac{L}{m}\delta_1\left\|\bbx_1-\bbx_2\right\|.
\end{equation}
\end{lemma}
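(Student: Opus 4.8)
The plan is to apply the implicit function theorem to the identity $\nabla_\bbz\ccalL(\bbx,\bbz^\star(\bbx))=\bm{0}$. Since $\nabla^2_{\bbz\bbz}\ccalL(\bbx,\bbz^\star(\bbx))$ is invertible for every $\bbx$ by Assumption \ref{assumption_non_zero_eigenvalues} and $\nabla_\bbz\ccalL$ is continuously differentiable by Assumption \ref{assumption_bounded_derivatives}, the map $\bbz^\star$ is continuously differentiable and
\begin{equation}
\nabla_\bbx\bbz^\star(\bbx) = -\left(\nabla^2_{\bbz\bbz}\ccalL(\bbx,\bbz^\star(\bbx))\right)^{-1}\nabla^2_{\bbz\bbx}\ccalL(\bbx,\bbz^\star(\bbx)).
\end{equation}
I would abbreviate $H_i := \nabla^2_{\bbz\bbz}\ccalL(\bbx_i,\bbz^\star(\bbx_i))$ and $M_i := \nabla^2_{\bbz\bbx}\ccalL(\bbx_i,\bbz^\star(\bbx_i))$ for $i=1,2$, so that $\nabla_\bbx\bbz^\star(\bbx_i) = -H_i^{-1}M_i$, and the goal is to bound $\|H_1^{-1}M_1 - H_2^{-1}M_2\|$.

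First I would record two elementary facts. Because $H_i$ is a symmetric matrix whose eigenvalues all exceed $2m$ in absolute value, $\|H_i^{-1}\|\leq 1/(2m)$; and $\|M_i\|\leq C$ by Assumption \ref{assumption_bounded_derivatives}. Plugging these into the expression for $\nabla_\bbx\bbz^\star$ gives $\|\nabla_\bbx\bbz^\star(\bbx)\|\leq C/(2m)$ for every $\bbx$, hence $\bbz^\star$ is Lipschitz with constant $C/(2m)$. This is the step that converts displacement in $\bbx$ into displacement in the stacked variable $\bby_i = [\bbx_i^\top,\bbz^\star(\bbx_i)^\top]^\top$:
\begin{equation}
\|\bby_1-\bby_2\| = \left(\|\bbx_1-\bbx_2\|^2 + \|\bbz^\star(\bbx_1)-\bbz^\star(\bbx_2)\|^2\right)^{1/2} \leq \left(1+\frac{C^2}{4m^2}\right)^{1/2}\|\bbx_1-\bbx_2\|.
\end{equation}

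Next I would use the standard identity $H_1^{-1}M_1 - H_2^{-1}M_2 = H_1^{-1}(M_1-M_2) + H_1^{-1}(H_2-H_1)H_2^{-1}M_2$, take norms, and substitute $\|H_i^{-1}\|\leq 1/(2m)$, $\|M_2\|\leq C$, together with the Lipschitz property of $D=[\nabla^2_{\bbz\bbz}\ccalL,\nabla^2_{\bbz\bbx}\ccalL]$ from Assumption \ref{assumption_bounded_derivatives}, which bounds both $\|M_1-M_2\|$ and $\|H_1-H_2\|$ by $L\|\bby_1-\bby_2\|$. This yields
\begin{equation}
\left\|\nabla_\bbx\bbz^\star(\bbx_1)-\nabla_\bbx\bbz^\star(\bbx_2)\right\| \leq \left(\frac{L}{2m}+\frac{LC}{4m^2}\right)\|\bby_1-\bby_2\| \leq \frac{L}{2m}\left(1+\frac{C}{2m}\right)\left(1+\frac{C^2}{4m^2}\right)^{1/2}\|\bbx_1-\bbx_2\|.
\end{equation}
Finally I would observe, directly from the definition \eqref{eqn_delta1}, that $\frac{L}{2m}\left(1+\frac{C}{2m}\right)\left(1+\frac{C^2}{4m^2}\right)^{1/2} = \frac{2L}{m}\delta_1$, which is exactly the claimed constant.

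The only delicate points are bookkeeping rather than substance: one must keep track of the factor $\tfrac14$ hidden inside $\delta_1$, and one must apply the Lipschitz hypothesis on $D$ with respect to the joint argument $\bby=[\bbx^\top,\bbz^\top]^\top$ (not $\bbz$ alone), which is why the preliminary Lipschitz bound on $\bbz^\star$ is needed. There is no real obstacle beyond checking that the implicit function theorem genuinely applies, which Assumption \ref{assumption_non_zero_eigenvalues} guarantees, and that the symmetry of $H_i$ is what gives $\|H_i^{-1}\| = 1/\min_i|\lambda_i(H_i)|$.
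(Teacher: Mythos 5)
Your proof is correct and follows essentially the same route as the paper's: the implicit function theorem expression for $\nabla_\bbx\bbz^\star$, the product-difference splitting with the resolvent bound $\|H_1^{-1}(H_2-H_1)H_2^{-1}\|\leq \tfrac{1}{4m^2}\|H_1-H_2\|$, the preliminary Lipschitz bound $\|\bbz^\star(\bbx_1)-\bbz^\star(\bbx_2)\|\leq \tfrac{C}{2m}\|\bbx_1-\bbx_2\|$ to control $\|\bby_1-\bby_2\|$, and the same final constant $\tfrac{L}{2m}\bigl(1+\tfrac{C}{2m}\bigr)\bigl(1+\tfrac{C^2}{4m^2}\bigr)^{1/2}=\tfrac{2L}{m}\delta_1$. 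The only cosmetic difference is that you fold the paper's add-and-subtract step and its separate bound on $\|H_1^{-1}-H_2^{-1}\|$ into a single identity.
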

\begin{proof}
Recall that the local solutions of the optimization problem \eqref{eqn_optimization_problem} satisfy
\begin{equation}
\nabla_{\bbz}\ccalL(\bbx,\bbz^\star(\bbx)) = 0. 
\end{equation}
Assumption \ref{assumption_non_zero_eigenvalues} and the Implicit Function Theorem guarantee that we can write the derivative of $\bbz^\star(\bbx)$ as
\begin{equation}\label{eqn_derivative_zstar}
\nabla_{\bbx}\bbz^\star(\bbx) = - \nabla^2_{\bbz\bbz}\ccalL\left(\bbx,\bbz^\star(\bbx)\right)^{-1}\nabla^2_{\bbz\bbx}\ccalL\left(\bbx,\bbz^\star(\bbx)\right).
\end{equation}
Then write the difference of $\nabla_{\bbx}\bbz^\star(\bbx)$ evaluated at two different states $\bbx_1,\bbx_2 \in\mathbb{R}^n$ as
\begin{equation}
\begin{split}
\nabla_{\bbx}\bbz^\star(\bbx_2)-\nabla_{\bbx}\bbz^\star(\bbx_1) &=\nabla_{\bbz\bbz}^2\ccalL_1^{-1}\nabla_{\bbz\bbx}^2\ccalL_1 \\
&-\nabla_{\bbz\bbz}^2\ccalL_2^{-1}\nabla_{\bbz\bbx}^2\ccalL_2,
\end{split}
\end{equation}
where we defined for simplicity $\nabla_{\bbz\bbz}^2\ccalL_i:=\nabla_{\bbz\bbz}^2\ccalL(\bbx_i,\bbz^\star(\bbx_i))$ and $\nabla_{\bbz\bbx}^2\ccalL_i:=\nabla_{\bbz\bbx}^2\ccalL(\bbx_i,\bbz^\star(\bbx_i))$. Adding and subtracting $\nabla_{\bbz\bbz}^2\ccalL_1^{-1}\nabla_{\bbz\bbx}^2\ccalL_2$ to the previous expression yields
\begin{equation}
\begin{split}
\nabla_{\bbx}\bbz^\star(\bbx_2)-\nabla_{\bbx}\bbz^\star(\bbx_1) &=\nabla_{\bbz\bbz}^2\ccalL_1^{-1}\left(\nabla_{\bbz\bbx}^2\ccalL_1-\nabla_{\bbz\bbx}^2\ccalL_2\right) \\
&\left(\nabla_{\bbz\bbz}^2\ccalL_1^{-1}-\nabla_{\bbz\bbz}^2\ccalL_2^{-1}\right)\nabla_{\bbz\bbx}^2\ccalL_2.
\end{split}
\end{equation}
Using the triangle inequality, the Cauchy-Schwartz inequality and the bounds of Assumption \ref{assumption_non_zero_eigenvalues} and Assumption \ref{assumption_bounded_derivatives} it follows that the norm of the previous expression can be upper bounded by 
\begin{equation}\label{new_lemma_eq1}
\begin{split}
\left\|\nabla_{\bbx}\bbz^\star(\bbx_2)-\nabla_{\bbx}\bbz^\star(\bbx_1)\right\| &\leq \frac{1}{2m}\left\|\nabla_{\bbz\bbx}^2\ccalL_1-\nabla_{\bbz\bbx}^2\ccalL_2\right\| \\
&+C\left\|\nabla_{\bbz\bbz}^2\ccalL_1^{-1}-\nabla_{\bbz\bbz}^2\ccalL_2^{-1}\right\|.
\end{split}
\end{equation}
Notice that the second norm can be written as
\begin{equation}
\begin{split}
&\left\|\nabla_{\bbz\bbz}^2\ccalL_1^{-1}-\nabla_{\bbz\bbz}^2\ccalL_2^{-1}\right\| = \\
&\left\|\nabla_{\bbz\bbz}^2\ccalL_1^{-1}\left(\nabla_{\bbz\bbz}^2\ccalL_2-\nabla_{\bbz\bbz}^2\ccalL_1\right)\nabla_{\bbz\bbz}^2\ccalL_2^{-1}\right\|.
\end{split}
\end{equation}
Thus using the bound on the minimum eigenvalue of $\nabla^2_{\bbz\bbz}\ccalL(\bbx,\bbz^\star(\bbx)$ in Assumption \ref{assumption_non_zero_eigenvalues} it follows that 
\begin{equation}\label{new_lemma_eq2}
\begin{split}\left\|\nabla_{\bbz\bbz}^2\ccalL_1^{-1}-\nabla_{\bbz\bbz}^2\ccalL_2^{-1}\right\| \leq \frac{1}{4m^2}\left\|\nabla_{\bbz\bbz}^2\ccalL_2-\nabla_{\bbz\bbz}^2\ccalL_1\right\|.
\end{split}
\end{equation}
Combining the bounds established in \eqref{new_lemma_eq1} and \eqref{new_lemma_eq2} it follows that 
\begin{equation}
\begin{split}
\left\|\nabla_{\bbx}\bbz^\star(\bbx_2)-\nabla_{\bbx}\bbz^\star(\bbx_1)\right\| &\leq \frac{1}{2m}\left\|\nabla_{\bbz\bbx}^2\ccalL_1-\nabla_{\bbz\bbx}^2\ccalL_2\right\| \\
&+\frac{C}{4m^2}\left\|\nabla_{\bbz\bbz}^2\ccalL_1-\nabla_{\bbz\bbz}^2\ccalL_2\right\|.
\end{split}
\end{equation}
Since the derivatives of $\nabla_{\bbz}\ccalL(\bbx,\bbz)$ are Lipschitz (cf., Assumption \ref{assumption_bounded_derivatives}) the previous bound reduces to 
\begin{equation}
\begin{split}
\left\|\nabla_{\bbx}\bbz^\star(\bbx_2)-\nabla_{\bbx}\bbz^\star(\bbx_1)\right\| \leq \frac{L}{2m}\left(1+\frac{C}{2m}\right)\left\|\bby_1-\bby_2\right\|, 
\end{split}
\end{equation}
where we have used the notation $\bby_i:= (\bbx_i,\bbz^\star(\bbx_i))$. To complete the proof it suffices to show that $\left\|\bby_1-\bby_2\right\| \leq \left(1+C^2/(4m^2)\right)^{1/2}\left\|\bbx_1-\bbx_2\right\|$. We turn our focus into showing the latter. Notice that $\left\|\bby_1-\bby_2\right\|$ can be written as
\begin{equation}
\left\|\bby_1-\bby_2\right\| = \left(\left\|\bbx_1-\bbx_2\right\|^2+\left\|\bbz^\star(\bbx_1)-\bbz^\star(\bbx_2)\right\|^2\right)^{1/2}. 
\end{equation}
Thus, it suffices to show that
\begin{equation}
\left\|\bbz^\star(\bbx_1)-\bbz^\star(\bbx_2)\right\| \leq \frac{C}{2m}\left\|\bbx_1-\bbx_2\right\|.
\end{equation}
To see why this is the case write
\begin{equation}\label{eqn_integral_star}
\bbz^\star(\bbx_2)-\bbz^\star(\bbx_1) = \int_0^1 \nabla_{\bbx}\bbz^\star(\bbx_1+\theta \Delta \bbx)\Delta \bbx d \theta,
\end{equation}
with $\Delta\bbx = \bbx_2-\bbx_1$. Using the expression arising from the Implicit Function Theorem (cf., \eqref{eqn_derivative_zstar}) and considering Assumption \ref{assumption_non_zero_eigenvalues} and Assumption \ref{assumption_bounded_derivatives}, the norm of $\nabla_\bbz \bbz^\star(\bbx)$ can be bounded by
\begin{equation}
\left\|\nabla_{\bbx}\bbz^\star(\bbx)\right\|\leq \frac{C}{2m}. 
\end{equation}
Taking the norm in \eqref{eqn_integral_star} and substituting the previous bound in it reduces to
\begin{equation}
\begin{split}
\left\|\bbz^\star(\bbx_2)-\bbz^\star(\bbx_1)\right\| &\leq \int_0^1 \left\|\nabla_{\bbx}\bbz^\star(\bbx_1+\theta \Delta \bbx)\right\| \left\|\Delta \bbx \right\|d \theta \\
&\leq\int_0^1 \frac{C}{2m}\left\|\Delta \bbx \right\|d \theta=\frac{C}{2m}\left\|\Delta \bbx \right\|.
\end{split}
\end{equation}
The latter completes the proof of the result. 
\end{proof}

\begin{lemma}\label{lemma_bound_gradient}
Let $\ccalL(\bbx,\bbz)$ satisfy Assumptions \ref{assumption_non_zero_eigenvalues} and \ref{assumption_bounded_derivatives}, then for any $\bbz$ such that $\left\|\bbz-\bbz(\bbx) \right\|<m/L$ the following lower bound for the norm of the gradient of $\ccalL(\cdot,\cdot)$ with respect to its second argument holds
\begin{equation}
m \left\|\bbz-\bbz^\star(\bbx) \right\| < \left\|\nabla_z\ccalL(\bbx,\bbz) \right\|.
\end{equation}
In addition it holds that
\begin{equation}
\min_{i=1\ldots (n+p+1)H} \left| \lambda_i\left(\nabla^2_{\bbz \bbz}\ccalL(\bbx,\bbz)\right)\right|>m.
\end{equation}
\end{lemma}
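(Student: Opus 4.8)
The plan is to prove the two displayed inequalities separately. Both are purely local statements around $\bbz^\star(\bbx)$, and both rely on exactly two ingredients: the spectral gap $\min_i\lvert\lambda_i(\nabla^2_{\bbz\bbz}\ccalL(\bbx,\bbz^\star(\bbx)))\rvert>2m$ from Assumption \ref{assumption_non_zero_eigenvalues}, and the Lipschitz bound $\|\nabla^2_{\bbz\bbz}\ccalL(\bbx,\bbz_1)-\nabla^2_{\bbz\bbz}\ccalL(\bbx,\bbz_2)\|\le L\|\bbz_1-\bbz_2\|$ from Assumption \ref{assumption_bounded_derivatives}. Throughout I would fix $\bbx$, abbreviate $\bbz^\star=\bbz^\star(\bbx)$ and $\Delta\bbz=\bbz-\bbz^\star$ with $\|\Delta\bbz\|<m/L$, and use that $\nabla^2_{\bbz\bbz}\ccalL(\bbx,\cdot)$ is symmetric (it is the Hessian of the $C^2$ function $\ccalL(\bbx,\cdot)$), so its operator norm equals the largest absolute eigenvalue and $\|\nabla^2_{\bbz\bbz}\ccalL(\bbx,\bbz)\,\bbv\|\ge(\min_i\lvert\lambda_i(\nabla^2_{\bbz\bbz}\ccalL(\bbx,\bbz))\rvert)\,\|\bbv\|$ for every $\bbv$.

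For the eigenvalue bound (second inequality) I would use eigenvalue perturbation. Assumption \ref{assumption_bounded_derivatives} gives $\|\nabla^2_{\bbz\bbz}\ccalL(\bbx,\bbz)-\nabla^2_{\bbz\bbz}\ccalL(\bbx,\bbz^\star)\|\le L\|\Delta\bbz\|<L\cdot(m/L)=m$. Since both matrices are symmetric, Weyl's inequality yields $\lvert\lambda_i(\nabla^2_{\bbz\bbz}\ccalL(\bbx,\bbz))-\lambda_i(\nabla^2_{\bbz\bbz}\ccalL(\bbx,\bbz^\star))\rvert<m$ for every $i$ (eigenvalues listed in, say, decreasing order). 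Combining with $\lvert\lambda_i(\nabla^2_{\bbz\bbz}\ccalL(\bbx,\bbz^\star))\rvert>2m$ from Assumption \ref{assumption_non_zero_eigenvalues}, each eigenvalue of $\nabla^2_{\bbz\bbz}\ccalL(\bbx,\bbz)$ has absolute value $>2m-m=m$; in particular it cannot vanish, so the Hessian stays invertible on the whole ball with $\|\nabla^2_{\bbz\bbz}\ccalL(\bbx,\bbz)^{-1}\|<1/m$.

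For the gradient bound (first inequality) I would integrate the Hessian along the segment joining $\bbz^\star$ to $\bbz$. Because $\nabla_\bbz\ccalL(\bbx,\bbz^\star)=0$, the fundamental theorem of calculus gives $\nabla_\bbz\ccalL(\bbx,\bbz)=\big(\int_0^1\nabla^2_{\bbz\bbz}\ccalL(\bbx,\bbz^\star+\theta\Delta\bbz)\,d\theta\big)\Delta\bbz$. Splitting off the constant term $\nabla^2_{\bbz\bbz}\ccalL(\bbx,\bbz^\star)\Delta\bbz$ and using the reverse triangle inequality,
$$\|\nabla_\bbz\ccalL(\bbx,\bbz)\|\ \ge\ \|\nabla^2_{\bbz\bbz}\ccalL(\bbx,\bbz^\star)\Delta\bbz\|\ -\ \Big(\int_0^1\|\nabla^2_{\bbz\bbz}\ccalL(\bbx,\bbz^\star+\theta\Delta\bbz)-\nabla^2_{\bbz\bbz}\ccalL(\bbx,\bbz^\star)\|\,d\theta\Big)\|\Delta\bbz\|.$$
The first term is $\ge 2m\|\Delta\bbz\|$ by Assumption \ref{assumption_non_zero_eigenvalues} and symmetry, while the Lipschitz bound makes the integrand at most $L\theta\|\Delta\bbz\|$, so the subtracted quantity is at most $\tfrac{L}{2}\|\Delta\bbz\|^2$. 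Hence $\|\nabla_\bbz\ccalL(\bbx,\bbz)\|\ge\|\Delta\bbz\|\big(2m-\tfrac{L}{2}\|\Delta\bbz\|\big)$, and since $\|\Delta\bbz\|<m/L$ the parenthesis exceeds $2m-m/2=3m/2>m$, giving $\|\nabla_\bbz\ccalL(\bbx,\bbz)\|>m\|\Delta\bbz\|$ whenever $\bbz\neq\bbz^\star$ (the claim being vacuous at $\bbz=\bbz^\star$).

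I do not expect a real obstacle: this is the textbook ``quadratic convergence region'' estimate for Newton's method, and every step is a one-line consequence of the two assumptions. The only points needing mild care are (i) invoking symmetry of the Hessian so that Weyl's inequality and the bound $\|A\bbv\|\ge(\min_i\lvert\lambda_i(A)\rvert)\|\bbv\|$ apply, and (ii) reading the strict inequality in the first claim for $\bbz\neq\bbz^\star(\bbx)$. As a by-product the same computation actually yields the sharper lower bound $\tfrac{3m}{2}\|\bbz-\bbz^\star(\bbx)\|$, of which the stated estimate is a weakening.
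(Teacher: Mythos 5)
Your proof is correct and is essentially the standard argument that the paper outsources to its citations (Nesterov's Corollary 1.2.1 plus Lipschitz continuity of the second derivatives): eigenvalue perturbation of the symmetric Hessian for the spectral bound, and the fundamental-theorem-of-calculus expansion of $\nabla_\bbz\ccalL$ around $\bbz^\star(\bbx)$ for the gradient bound. The only caveat---that the strict inequality degenerates at $\bbz=\bbz^\star(\bbx)$ itself---is inherited from the lemma statement rather than introduced by you, and you flag it appropriately; the sharper constant $3m/2$ you obtain is consistent with the stated bound being a deliberate weakening.
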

\begin{proof}
The result follows from \cite[Corollary 1.2.1]{nesterov2013introductory} and the Lipschitz continuity of the second derivatives of the gradient. See for instance \cite[Appendix A.2]{paternain2019newton} for a proof.
\end{proof}

%%%%%%%%%%%%%%%%%%%%%%%%%%%%%%%%%%%%%%%%%%%%%%%%%%%%%%%%%%%%%%%%%%%%
%%%%%%%%%%%%%%%%%%%%%%% S E C T I O N %%%%%%%%%%%%%%%%%%%%%%%%%%%%%
%%%%%%%%%%%%%%%%%%%%%%%%%%%%%%%%%%%%%%%%%%%%%%%%%%%%%%%%%%%%%%%%%%%

\subsection{Proof of Proposition \ref{prop_optimum_prediction}} \label{app_optimum_prediction}
Let us write the solution of the optimization problem \eqref{eqn_optimization_problem} at time $k+1$ in terms of the solution at time $k$ in the following integral form
\begin{equation}
\bbz^\star_{k+1} = \bbz^\star_k + \int_{0}^1\nabla_{\bbx}\bbz^\star(\bbx(k)+\theta\Delta\bbx)\Delta \bbx d\theta,
\end{equation}
where the expression of $\nabla_{\bbx}\bbz^\star(\bbx(k)\bbx)$ is that derived in \eqref{eqn_derivative_zstar} and $\Delta\bbx := \bm{\phi}(k+1,\bbx(0),\bbz_0)-\bm{\phi}(k,\bbx(0),\bbz_0)$. Thus, using the definition of $\bbz^\star_{k+1|k}$ (cf., \eqref{eqn_optimum_prediction}) it follows that
\begin{align}
\bbz^\star_{k+1|k}-\bbz^\star_{k+1}& = \nabla_{\bbx}\bbz^\star(\bbx(k))\Delta \bbx \nonumber \\&- \int_{0}^1\nabla_{\bbx}\bbz^\star(\bbx(k)+\theta\Delta\bbx)\Delta \bbx d\theta.
\end{align}
Notice that the previous expression can be also written as 
\begin{align}
\bbz^\star_{k+1|k}&-\bbz^\star_{k+1} = \\
&\int_{0}^1\left( \nabla_{\bbx}\bbz^\star(\bbx(k)) - \nabla_{\bbx}\bbz^\star(\bbx(k)+\theta\Delta\bbx)\right)\Delta \bbx d\theta,\nonumber
\end{align}
and we can upper bound its norm by 
\begin{align}\label{eqn_new_proof_diff}
&\left\|\bbz^\star_{k+1|k}-\bbz^\star_{k+1}\right\| \leq \\
&\int_{0}^1\left\| \nabla_{\bbx}\bbz^\star(\bbx(k)) - \nabla_{\bbx}\bbz^\star(\bbx(k)+\theta\Delta\bbx)\right\|\left\|\Delta \bbx\right\| d\theta.\nonumber
\end{align}
Using that the function $\nabla_{\bbx}\bbz^\star(\bbx)$ is Lipschitz with constant $2L\delta_1/m$ (see Lemma \ref{lemma_aux1}), the previous expression can be further upper bounded by 
\begin{equation}
\left\|\bbz^\star_{k+1|k}-\bbz^\star_{k+1}\right\| \leq \int_{0}^12\delta_1\frac{L}{m}\left\|\Delta \bbx\right\|^2\theta d\theta = \frac{L}{m}\delta_1 \left\|\Delta \bbx\right\|^2.
\end{equation}
The latter along with the assumption that $\left\|\Delta \bbx\right\|\leq BT_s$ completes the proof of the result. 
%%%%%%%%%%%%%%%%%%%%%%%%%%%%%%%%%%%%%%%%%%%%%%%%%%%%%%%%%%%%%%%%%%%%
%%%%%%%%%%%%%%%%%%%%%%% S E C T I O N %%%%%%%%%%%%%%%%%%%%%%%%%%%%%
%%%%%%%%%%%%%%%%%%%%%%%%%%%%%%%%%%%%%%%%%%%%%%%%%%%%%%%%%%%%%%%%%%%
\subsection{Proof of Proposition \ref{prop_error}} \label{app_prop_error}
Since $\left\|\bbz_k-\bbz_k^\star\right\| \leq m/L$, Lemma \ref{lemma_bound_gradient} guarantees that the matrix $\nabla^2_{\bbz \bbz}\ccalL(\bbx(k),\bbz_k)$ is invertible, which guarantees that the prediction step is well defined. Then, consider the difference between the predicted iterate $\bbz_{k+1}^0$ and $\bbz^\star_{k+1}$, the solution of the problem \eqref{eqn_optimization_problem} at time $k+1$. Adding and subtracting $\bbz_{k+1|k}^\star$ to this difference yields
\begin{equation}
    \bbz_{k+1}^0-\bbz^\star_{k+1} = \bbz_{k+1}^0-\bbz^\star_{k+1|k}-\left(\bbz_{k+1}^\star-\bbz^\star_{k+1|k}\right),
\end{equation}
and use the definition of the prediction step \eqref{eqn_prediction} and that of $\bbz^\star_{k+1|k}$ [cf., \eqref{eqn_optimum_prediction}] to write the difference $\bbz_{k+1}^0-\bbz^\star_{k+1}$ as
  \begin{equation}
    \begin{split}
    \bbz_{k+1}^0-\bbz^\star_{k+1}     = \left(\bbz_{k}-\bbz^\star_{k}\right)-\left(\bbz_{k+1}^\star-\bbz^\star_{k+1|k}\right)\\
-\left(\nabla_{\bbz \bbz}^2\ccalL(k)^{-1}\nabla^2_{\bbz \bbx}\ccalL(k) - \nabla_{\bbz \bbz}^2\ccalL^\star(k)^{-1}\nabla^2_{\bbz \bbx}\ccalL^\star(k)\right)\Delta \bm{\phi}_{\kappa}.
\end{split}
    \end{equation}
  By virtue of the triangle inequality we can upper bound the norm of $\bbz_{k+1}^0-\bbz_{k+1}^\star$ by
\begin{equation}\label{eqn_zero_bound_prop_error}
    \begin{split}
&    \left\|\bbz_{k+1}^0-\bbz^\star_{k+1}\right\| \leq \left\|\bbz_{k}-\bbz^\star_{k}\right\|+\left\|\bbz_{k+1}^\star-\bbz^\star_{k+1|k}\right\|+ \\
&\left\|\nabla_{\bbz \bbz}^2\ccalL(k)^{-1}\nabla^2_{\bbz \bbx}\ccalL(k) - \nabla_{\bbz \bbz}^2\ccalL^\star(k)^{-1}\nabla^2_{\bbz \bbx}\ccalL^\star(k)\right\|\left\|\Delta \bm{\phi}_{\kappa}(k)\right\|.
\end{split}
    \end{equation}
   We focus next in bounding the norm of $\nabla_{\bbz \bbz}^2\ccalL(k)^{-1}\nabla^2_{\bbz \bbx}\ccalL(k) - \nabla_{\bbz \bbz}^2\ccalL^\star(k)^{-1}\nabla^2_{\bbz \bbx}\ccalL^\star(k)$. Adding and subtracting $\nabla_{\bbz \bbz}^2\ccalL^\star(k)^{-1}\nabla^2_{\bbz \bbx}\ccalL(k)$ yields
  \begin{equation}
    \begin{split}
      \nabla_{\bbz \bbz}^2\ccalL(k)^{-1}\nabla^2_{\bbz \bbx}\ccalL(k) - \nabla_{\bbz \bbz}^2\ccalL^\star(k)^{-1}\nabla^2_{\bbz \bbx}\ccalL^\star(k)\\
      =\left(\nabla_{\bbz \bbz}^2\ccalL(k)^{-1}- \nabla_{\bbz \bbz}^2\ccalL^\star(k)^{-1}\right)\nabla^2_{\bbz \bbx}\ccalL(k)\\
+ \nabla_{\bbz \bbz}^2\ccalL^\star(k)^{-1}\left(\nabla^2_{\bbz \bbx}\ccalL(k)-\nabla^2_{\bbz \bbx}\ccalL^\star(k)\right).
    \end{split}
    \end{equation}
  Take the norm in both sides of the previous equality. By virtue of the triangle inequality and the bounds in Assumptions \ref{assumption_non_zero_eigenvalues} and \ref{assumption_bounded_derivatives} we can write
  \begin{equation}\label{eqn_first_bound_prop_error}
    \begin{split}
    \left\|  \nabla_{\bbz \bbz}^2\ccalL(k)^{-1}\nabla^2_{\bbz \bbx}\ccalL(k) - \nabla_{\bbz \bbz}^2\ccalL^\star(k)^{-1}\nabla^2_{\bbz \bbx}\ccalL^\star(k)\right\|\\
      \leq C\left\|\nabla_{\bbz \bbz}^2\ccalL(k)^{-1}- \nabla_{\bbz \bbz}^2\ccalL^\star(k)^{-1}\right\|\\
+\frac{1}{2m}\left\|\nabla^2_{\bbz \bbx}\ccalL(k)-\nabla^2_{\bbz \bbx}\ccalL^\star(k)\right\|.
    \end{split}
    \end{equation}
  Observe that it is possible to write the first norm in the right hand side of the previous equation as
  \begin{equation}\label{eqn_need_for_space}
    \begin{split}
     \left\|\nabla_{\bbz \bbz}^2\ccalL(k)^{-1}- \nabla_{\bbz \bbz}^2\ccalL^\star(k)^{-1}\right\| \\
     = \left\|\nabla_{\bbz \bbz}^2\ccalL(k)^{-1}\left(\nabla_{\bbz \bbz}^2\ccalL^\star(k)- \nabla_{\bbz \bbz}^2\ccalL(k)\right)\nabla_{\bbz \bbz}^2\ccalL^\star(k)^{-1}\right\| \\
     \leq \frac{1}{2m^2} \left\|\nabla_{\bbz \bbz}^2\ccalL^\star(k)- \nabla_{\bbz \bbz}^2\ccalL(k)\right\|,
\end{split}
  \end{equation}
  where we have used that the minimum eigenvalues of $\nabla_{\bbz \bbz}^2\ccalL^\star(k)$ and $\nabla_{\bbz \bbz}^2\ccalL(k)$ are $2m$ and $m$ respectively. We can further upper bound the previous norm difference using the Lipschitz assumption of the second derivatives in Assumption \ref{assumption_bounded_derivatives} by
\begin{equation}\label{eqn_second_bound_prop_error}
    \begin{split}
      \left\|\nabla_{\bbz \bbz}^2\ccalL(k)^{-1}- \nabla_{\bbz \bbz}^2\ccalL^\star(k)^{-1}\right\| \leq \frac{L}{2m^2} \left\|\bbz_k-\bbz_k^\star\right\|.%\leq \frac{C_1}{m^2}\left\|\bbz_k-\bbz_k^\star\right\|.
\end{split}
  \end{equation}
  Likewise, we can upper bound the norm of the difference $\nabla^2_{\bbz \bbx}\ccalL(k)-\nabla^2_{\bbz \bbx}\ccalL^\star(k)$ can be upper bounded by
  \begin{equation}\label{eqn_third_bound_prop_error}
    \begin{split}
      \left\|\nabla^2_{\bbz \bbx}\ccalL(k)-\nabla^2_{\bbz \bbx}\ccalL^\star(k)\right\| \leq  L\left\|\bbz_k-\bbz_k^\star\right\|.
    \end{split}
  \end{equation}
  {
  Combining the bounds \eqref{eqn_first_bound_prop_error}, \eqref{eqn_second_bound_prop_error} and \eqref{eqn_third_bound_prop_error} yields 
  \begin{equation}\label{eqn_fourth_bound_prop_error}
    \begin{split}
    \left\|\nabla_{\bbz \bbz}^2\ccalL(k)^{-1}\nabla^2_{\bbz \bbx}\ccalL(k) - \nabla_{\bbz \bbz}^2\ccalL^\star(k)^{-1}\nabla^2_{\bbz \bbx}\ccalL^\star(k)\right\| \\
    \leq \frac{L}{2m}\left(\frac{C}{m}+1\right)\left\|\bbz_k-\bbz_k^\star\right\| = \frac{L}{m}\delta_2\left\|\bbz_k-\bbz_k^\star\right\|.
\end{split}
  \end{equation}
  Substituting in \eqref{eqn_zero_bound_prop_error} the bound derived in \eqref{eqn_fourth_bound_prop_error} yields the following upper bound for $\left\|\bbz_{k+1}^0-\bbz^\star_{k+1}\right\| $
    \begin{equation}
    \begin{split}
      \left\|\bbz_{k+1}^0-\bbz^\star_{k+1}\right\| \leq \left\|\bbz_{k+1}^\star-\bbz^\star_{k+1|k}\right\|\\
      +\left(1+\frac{L}{m}\delta_2\left\|\Delta \bm{\phi}_{\kappa}(k)\right\|\right)\left\|\bbz_{k}-\bbz^\star_{k}\right\| 
\end{split}.
    \end{equation}
}
The result of Proposition \ref{prop_optimum_prediction} completes the proof.

\bibliographystyle{ieeetr}
\bibliography{bib}

\begin{IEEEbiography}{Santiago Paternain} %[{\includegraphics[width=1in,height=1.25in,clip,keepaspectratio]{filename}}]
received the B.Sc. degree in electrical engineering from Universidad de la Rep\'ublica Oriental del Uruguay, Montevideo, ´ Uruguay in 2012, the M.Sc. in Statistics from the Wharton School in 2018 and Ph.D. in Electrical and Systems Engineering from the Department of Electrical and Systems Engineering, the University of Pennsylvania in 2018. He was the recipient of the 2017 CDC Best Student Paper Award.  His research interests include optimization and control of dynamical systems.  
\end{IEEEbiography}

\begin{IEEEbiography}{Manfred Morari} (F’05) received the Diploma degree from ETH Zurich, Zurich, Switzerland, and the Doctoral degree from the University of Minnesota, Minneapolis, MN, USA, both in chemical engineering.

  He was the Head of the Department of Information Technology and Electrical Engineering at ETH Zurich from 2009 to 2012, and the Head of the Automatic Control Laboratory from 1994 to 2008. Prior to joining ETH Zurich, he was the McCollum-Corcoran Professor of chemical engineering and an Executive Officer for Control and Dynamical Systems at the California Institute of Technology. From 1977 to 1983, he was with the faculty of the University of Wisconsin. His interests include constrained and robust control, and his research is internationally recognized. The analysis techniques and software developed in his group are used in universities and industry throughout the world. He has received numerous awards, including the Eckman Award, Ragazzini Award, and Bellman Control Heritage Award from the American Automatic Control Council; the Colburn Award, Professional Progress Award, and CAST Division Award from the American Institute of Chemical Engineers; and the Control Systems Technical Field Award and the Bode Lecture Prize from IEEE. He is a Fellow of AIChE and IFAC. In 1993, he was elected to the U.S. National Academy of Engineering, and in 2015 to the UK Royal Academy of Engineering. He has also served on the technical advisory boards of several major corporations.
  
\end{IEEEbiography}

\begin{IEEEbiography}{Alejandro Ribeiro}  received the B.Sc. degree in electrical engineering from the Universidad de la Republica Oriental del Uruguay, Montevideo, in 1998 and the M.Sc. and Ph.D. degree in electrical engineering from the Department of Electrical and Computer Engineering, the University of Minnesota, Minneapolis in 2005 and 2007. From 1998 to 2003, he was a member of the technical staff at Bellsouth Montevideo. After his M.Sc. and Ph.D studies, in 2008 he joined the University of Pennsylvania (Penn), Philadelphia, where he is currently the Rosenbluth Associate Professor at the Department of Electrical and Systems Engineering. His research interests are in the applications of statistical signal processing to the study of networks and networked phenomena. His focus is on structured representations of networked data structures, graph signal processing, network optimization, robot teams, and networked control. Dr. Ribeiro received the 2014 O. Hugo Schuck best paper award, and paper awards at the 2016 SSP Workshop, 2016 SAM Workshop, 2015 Asilomar SSC Conference, ACC 2013, ICASSP 2006, and ICASSP 2005. His teaching has been recognized with the 2017 Lindback award for distinguished teaching and the 2012 S. Reid Warren, Jr. Award presented by Penn's undergraduate student body for outstanding teaching. Dr. Ribeiro is a Fulbright scholar class of 2003 and a Penn Fellow class of 2015.
\end{IEEEbiography}

\end{document}